\documentclass{article}
\usepackage{jheppub}
\usepackage[utf8]{inputenc}
\usepackage[compat=1.1.0]{tikz-feynman} 
\tikzfeynmanset{warn luatex=false}
\usetikzlibrary { decorations.pathmorphing, decorations.pathreplacing, decorations.shapes }

\tikzset{
        ribbon/.style={
            preaction={
                    draw,
                    line width=0.25cm,
                    black
                },
                draw,
                line width=0.2cm,
                white
             } }

\usepackage{amsmath,amsthm}
\usepackage{mathrsfs}
\usepackage{amscd,amssymb, amsfonts, verbatim,subfigure, enumerate}
\usepackage[mathcal]{eucal}
\usepackage[super]{nth}

\newtheorem{theorem}{Theorem}[subsection]
\newtheorem{definition}{Definition}[subsection]
\newtheorem{conjecture}{Conjecture}[subsection]

\newtheorem{thm-def}{Theorem/Definition}[theorem]
\newtheorem{proposition}[theorem]{Proposition}
\newtheorem{lemma}[theorem]{Lemma}

\newcommand{\be}{\begin{equation}}
\newcommand{\ee}{\end{equation}}
\newcommand{\til}{\widetilde}

\newcommand{\half}{\tfrac{1}{2}}

\title{Holography for integrable field theories} 
\author[a]{Kevin Costello}
\author[b]{Joaquin Liniado}
\affiliation[a]{Perimeter Institute for Theoretical Physics,\\
  31 Caroline St N, Waterloo, ON N2L 2Y5, Canada}
\affiliation[b]{School of Mathematics, University of Edinburgh,\\
  James Clerk Maxwell Building, Peter Guthrie Tait Road, Edinburgh EH9 3FD, UK}
\emailAdd{kcostello@perimeterinstitute.ca}
\emailAdd{jliniado@ed.ac.uk}

\renewcommand{\i}{i}
\newcommand{\wbar}{\br{w}}

\newcommand{\PV}{\op{PV}}

\newcommand{\dpa}[1]{\frac{\partial}{\partial #1}}

\newcommand{\eps}{\epsilon}

\newcommand{\what}{\widehat}

\newcommand{\mscr}{\mathscr}

\newcommand{\br}{\overline}

\newcommand{\iso}{\cong}
\newcommand{\C}{\mathbb C}
\newcommand{\CP}{\mathbb{CP}}

\newcommand{\Oo}{\mscr O}

\newcommand{\op}{\operatorname}

\newcommand{\mbb}{\mathbb}
\newcommand{\mf}{\mathfrak}
\newcommand{\mc}{\mathcal}

\newcommand{\ip}[1]{\left\langle #1 \right\rangle}
\newcommand{\abs}[1]{\left| #1 \right|}

\newcommand{\R}{\mbb R}
\renewcommand{\d}{\mathrm{d}}

\newcommand{\dbar}{\br{\partial}}

\abstract{
We develop a holographic dual to the integrable field theories built from $4d$ Chern-Simons theory. The dual theory is a topological string on a generalized Calabi-Yau manifold, which is entirely encoded in a geometric object we call a \emph{planar spectral curve}. We describe the RG flow for a planar integrable field theory as a geometric flow on the moduli of planar spectral curves. This description is valid at all orders in the 't Hooft coupling. We compute this flow explicitly up to two loops in many examples, and find an exact match with known field theory results. At one loop this includes a wide class of models. At two loops we reproduce the RG flow for the principal chiral model with WZ term, and for a class of integrable $\sigma$-models with target $U(N) \times U(N)$.

 }

\begin{document}
\maketitle

\section{Introduction}

Four-dimensional Chern-Simons theory \cite{Costello:2013zra,Costello:2017dso,Costello:2018gyb,Costello:2019tri, Delduc:2019whp, Vicedo:2019dej} is a general framework for building continuum and discrete two-dimensional integrable models.  This paper is concerned with $4d$ CS frameworks that engineer continuum integrable models such as the principal chiral model (PCM).  In this context, $4d$ CS produces a \emph{classically} integrable theory associated to semi-simple gauge group $G$, and a \emph{classical spectral curve}: 
\begin{definition}
A classical spectral curve is
\begin{itemize} 
\item A Riemann surface $\Sigma$.
\item A divisor $D$.
\item A one-form $\alpha \in H^0(\Sigma, K_\Sigma(2 D))$ on $\Sigma$ with second order poles on $D$.
\item A decomposition 
\footnote{Two classical spectral curves are equivalent if $(\Sigma,\alpha)$ are the same and $\Sigma_{\pm} \subset \Sigma'_{\pm}$. This means that the only important data in the decomposition $\Sigma = \Sigma_+ \cup \Sigma_-$ is whether a zero of $\alpha$ is in the chiral patch or the anti-chiral patch.}
 $\Sigma = \Sigma_+ \cup \Sigma_-$, where $\alpha$ has no zeroes on $\Sigma_+ \cap \Sigma_-$. 
\end{itemize}
\end{definition}

The integrable field theory is engineered from this data as follows. We consider a $G$-gauge theory on $\Sigma \times \R^2$ with action\footnote{Our normalization is such that if $\alpha = k z^{-1} \d z$, then $3$-dimensional theory obtained by dimensionally reducing along the circle $\abs{z} = 1$ is Chern-Simons at level $-k\hbar$. This normalization differs by a factor of $-2/\hbar$ from that used in \cite{Lacroix:2025ias}.   } 
\begin{equation} 
-\frac{1}{8 \pi^2 \hbar \i}  \int \alpha CS(A) 
\end{equation}
where $ CS(A),$ is the Chern-Simons $3$-form of a $G$-gauge field ($A$ is taken up to the addition of a multiple of $\alpha$, so it only has $3$ components).  Also $\hbar$ is a loop counting parameter and $\lambda = N \hbar$ is the 't Hooft coupling. 

 The gauge field has certain boundary conditions at poles and zeroes of $\alpha$. At a second order pole we require that $A = 0$. At zeroes of $\alpha$ we allow $A_w$ or $A_{\wbar}$ to have certain poles, depending on whether the zero is in $\Sigma_+$ or $\Sigma_-$ (here $w = x+\i y$ where $x,y$ are coordinates on $\R^2$).
 
In \cite{Costello:2019tri} it was shown that the effective theory obtained by compactifying $4d$ Chern-Simons to $\R^2$ on $\Sigma$ is an integrable field theory with spectral parameter living in $\Sigma$.  Many (perhaps all) integrable field theories are built by variants of this construction.  The simplest example is when $\Sigma = \CP^1$ and the one-form is
\begin{equation} 
\frac{\d z}{z^2} (z^2 - a^2) 
\end{equation} 
where we place chiral and anti-chiral zeroes at $a_{\pm}$.  This model leads to the principal chiral model for $G$. This is the $\sigma$-model with target the group manifold $G$, and action
\begin{equation} 
\frac{a}{4\pi}\int_{\R^2} \ip {g^{-1} \partial g, g^{-1} \dbar g } 
\end{equation}
A natural generalization of the principal chiral model is obtained when $\alpha$ is a one-form on $\CP^1$ with $n$ second-order poles and $2n-2$  first-order zeroes, $n-1$ of which are chiral and $n-1$ are anti-chiral.  In this case, the resulting integrable field theory is a $\sigma$-model on $G^{n-1}$, with $3n-4$ independent coupling constants encoded in the moduli of the one-form $\alpha$.  There are very many generalizations, when different boundary conditions are imposed or when $\Sigma$ has genus greater than zero. 
  

In this paper we give a geometric definition of \emph{planar} spectral curve, which deforms that of classical spectral curve.   We give a holographic argument to show that every planar spectral curve gives rise to a planar integrable field theory, i.e.\ an integrable field theory where the fields have $U(N)$ indices and we take the 't Hooft limit.  The holographic analysis is based on embedding $4d$ Chern-Simons in the topological string on certain generalized Calabi-Yau manifolds.

The moduli of planar spectral curves has a geometrically defined vector field, which is derived from a holographic description of the RG flow. Holography implies that this corresponds to the RG flow of the corresponding planar integrable field theory, at all orders in the 't Hooft expansion.

For models with genus $0$ spectral curve, we give an explicit algorithm to determine the all-order RG flow.  We implement this algorithm to one and two loop orders and compare with field theory results in the following examples.
\begin{enumerate}
\item
At one loop we show that this flow matches known results for essentially all models with genus zero spectral curve.  This reproduces field theory calculations of Lacroix, Levine and Wallbeg \cite{Lacroix:2025ias}.
\item
At two loops, our flow matches the   RG flow of the principal chiral model with WZW term, computed by Hoare, Levine and Tseytlin \cite{Hoare:2019mcc}.  
\item Also at two loops, Levine and Tseytlin \cite{Levine:2021fof} computed the RG flow for certain integrable models with target $G \times G$. In this example, there are $4$ parameters in total, two of which flow.  The two-loop flow is given by very complicated rational functions of the parameters (with roughly 20 terms). Our holographic flow matches the field theory computation exactly.    
\end{enumerate}
\section{Planar spectral curves}
In this section we will give the definition of planar spectral curve.  We will give the definition in several steps.

Let us first recall the following simple definition.
\begin{definition}
A $\lambda$-connection on a line bundle $L$ on a Riemann surface is an operator
$$
\nabla : \Omega^0(L) \to \Omega^{1,0}(L)
$$
so that $[\nabla, \dbar_L] = 0$ and  
\begin{equation*} 
\begin{split} 
 \nabla(f s) - f \nabla s = (\lambda \partial f) s\\
  \end{split} 
\end{equation*}
\end{definition}
When $\lambda = 0$ a $\lambda$-connection is a holomorphic one form and when $\lambda = 1$, a $\lambda$-connection is just a holomorphic connection.  When $\lambda$ is non-zero, if $\nabla$ is a $\lambda$-connection then $\lambda^{-1} \nabla$ is a connection.  

\begin{definition}
Given a $\lambda$-connection $\nabla$ on the canonical bundle $K_{\Sigma}$ of a surface $\Sigma$, a holomorphic one-form $\what{\alpha}$ satisfies the Bernoulli equation if
\begin{equation*} 
\nabla \what{\alpha} = \what{\alpha}^2 
\end{equation*}
\end{definition}
One way to understand this definition is the following:
\begin{lemma} 
 $\what{\alpha}$ satisfies the Bernoulli equation if and only if, when we trivialize the canonical bundle of $\Sigma$ using $\what{\alpha}$, then the $\lambda$-connection $\nabla$ in this frame is $\lambda \partial + \what{\alpha}$.  
\end{lemma}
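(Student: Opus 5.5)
The plan is to compute the $\lambda$-connection $\nabla$ directly in the frame given by $\what{\alpha}$ and compare it with the Bernoulli equation. Throughout, I assume $\what{\alpha}$ is nowhere vanishing on the region in question, so that it trivializes $K_\Sigma$; otherwise we work on the open set where $\what{\alpha}\neq 0$.

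\textbf{Step 1: form of $\nabla$ in any frame.} Write an arbitrary smooth section of $K_\Sigma$ as $s = f\what{\alpha}$ with $f$ a function. The Leibniz rule in the definition of a $\lambda$-connection gives
\begin{equation*}
\nabla(f\what{\alpha}) = \lambda (\partial f)\,\what{\alpha} + f\,\nabla\what{\alpha}.
\end{equation*}
Since $\what{\alpha}$ is a frame, we can write $\nabla\what{\alpha} = \beta\,\what{\alpha}$ for a unique $(1,0)$-form $\beta$. Here $\beta\,\what{\alpha}$ means the product of the $(1,0)$-form $\beta$ with the section $\what{\alpha}$, i.e.\ an element of $\Omega^{1,0}(K_\Sigma)$. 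Hence, in this frame,
\begin{equation*}
\nabla = \lambda\partial + \beta .
\end{equation*}
Moreover $\beta$ is holomorphic. Indeed, $\what{\alpha}$ is holomorphic and $[\nabla,\dbar]=0$, so
\begin{equation*}
\dbar(\beta\what{\alpha}) = \dbar\nabla\what{\alpha} = \nabla\dbar\what{\alpha} = 0,
\end{equation*}
and dividing by $\what{\alpha}$ shows $\dbar\beta = 0$.

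\textbf{Step 2: compare with the Bernoulli equation.} The Bernoulli equation $\nabla\what{\alpha} = \what{\alpha}^2$ says exactly that $\beta\,\what{\alpha} = \what{\alpha}\cdot\what{\alpha}$. Here $\what{\alpha}^2$ is read as the $(1,0)$-form $\what{\alpha}$ times the section $\what{\alpha}$ of $K_\Sigma$. Because $\what{\alpha}$ is a nowhere-vanishing frame, we can cancel it, and this equation holds if and only if $\beta = \what{\alpha}$. By Step 1 the condition $\beta = \what{\alpha}$ is equivalent to $\nabla = \lambda\partial + \what{\alpha}$ in this frame. Both implications therefore follow at once.

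\textbf{Main subtlety.} The only real point of care is the interpretation of $\what{\alpha}^2$ as an element of $\Omega^{1,0}(K_\Sigma)$: one factor plays the role of the form and the other the role of the section. One should check this is consistent with how $\nabla$ outputs $\Omega^{1,0}(K_\Sigma)$. A related point is whether zeros of $\what{\alpha}$ are allowed. At a zero the frame degenerates, and the statement should be read on the complement, or else extended by continuity, since both sides of the equivalence are holomorphic identities.
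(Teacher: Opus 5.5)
Your proof is correct and takes the same approach as the paper. The paper notes that in the frame $\omega$ the $\lambda$-connection is $\lambda\partial + \omega^{-1}\nabla\omega$, so the Bernoulli equation is exactly the statement $\what{\alpha}^{-1}\nabla\what{\alpha} = \what{\alpha}$; your extra check that the connection one-form is holomorphic, and your caveat about working where $\what{\alpha}$ is a genuine frame, are harmless refinements not needed for the equivalence.
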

\begin{proof} 
For any one-form $\omega$, the $\lambda$-connection in the frame given by $\omega$ is $\lambda \partial + \omega^{-1} \nabla \omega$. Clearly $\what{\alpha}^{-1} \nabla \what{\alpha} = \what{\alpha}$.  
\end{proof} 

More generally, if we have a divisor $D$, we can consider  $\lambda$-connections on $K_{\Sigma}(k D)$ where the connection has a pole of order $k$ on $D$. A section of $K_{\Sigma}(k D)$ satisfies the Bernoulli equation if
\begin{equation} 
 \nabla \what{\alpha} = \what{\alpha}^2 
\end{equation} 
where now both sides are sections of $K_{\Sigma}(2 k D)$.

If, in a patch, we trivialize $\nabla = \lambda \partial_z + \phi(z) \d z$, and write $\what{\alpha} = g(z) \d z$, then $g(z)$ satisfies the equation
\begin{equation} 
\lambda g'(z) + \phi(z)g(z) = g(z)^2  
\end{equation} 
which is a first-order Bernoulli-Ricatti equation. 

Now suppose we have a $\lambda$-connection $\nabla_\lambda$, where $\lambda$ is a formal parameter. Then, at $\lambda = 0$, we have a one form 
\begin{equation} 
\alpha = \nabla_{\lambda = 0}. 
\end{equation} 
\begin{lemma} \label{lemma:uniquebernoulli} 
 Assuming $\alpha$ is non-zero, there is locally a unique meromorphic solution $\what{\alpha}(\lambda)$ to the Bernoulli equation in series in $\lambda$. 

This has the property that
$$
\what{\alpha}_{\lambda =0} = \alpha = \nabla_{\lambda = 0}. 
$$
If we work in a coordinate patch so that
 $$
\nabla = \lambda \partial_z + \partial_z \gamma
$$
then
$$
\what{\alpha} = -\lambda \d \log \int^z e^{-\gamma/\lambda}  
$$
where the contour is chosen so that there is a small $\lambda$ asymptotic expansion.

These statements continue to hold if we have a divisor $D$, and a $\lambda$-connection with a pole of order $k$ on $D$ on the bundle $K_{\Sigma}(k D)$. 
\end{lemma}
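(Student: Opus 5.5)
The plan is to work in a local coordinate $z$, where $\nabla = \lambda \partial_z + \phi(z,\lambda)\,\d z$ with $\phi = \sum_{n\ge 0} \lambda^n \phi_n$ and $\phi_0 \d z = \alpha \neq 0$. Writing $\what{\alpha} = g(z,\lambda)\,\d z$, the Bernoulli equation becomes the Riccati equation $\lambda g' + \phi g = g^2$. I will solve it order by order in $\lambda$. I expand $g = \sum_n \lambda^n g_n$. At order $\lambda^0$ the equation reads $g_0(\phi_0 - g_0) = 0$. Requiring $\what{\alpha}$ to be a deformation of a nonzero form (the branch $g_0=0$ propagates to $g \equiv 0$ at all orders, the trivial solution), we get $g_0 = \phi_0$, i.e. $\what{\alpha}_{\lambda=0} = \alpha$. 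At order $\lambda^n$ with $n\ge 1$ the equation is
\begin{equation*}
g_{n-1}' + \sum_{i+j=n} \phi_i g_j = \sum_{i+j=n} g_i g_j ,
\end{equation*}
and the terms containing $g_n$ are $\phi_0 g_n - 2 g_0 g_n = -\phi_0 g_n$. Hence
\begin{equation*}
g_n = \phi_0^{-1}\Big(g_{n-1}' + \sum_{i\ge 1,\, i+j=n} \phi_i g_j - \sum_{0<i,j<n,\, i+j=n} g_i g_j\Big),
\end{equation*}
which determines $g_n$ uniquely and meromorphically, with poles only at zeroes of $\alpha$. This gives existence and uniqueness of the formal series solution.

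Next, this construction is coordinate independent. The Bernoulli equation is intrinsic, being an equation between sections of $K_\Sigma^{2}$, and the solution in each patch is unique. The local solutions therefore agree on overlaps. This step is only needed if one wants a global statement; the lemma itself is local.

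For the explicit formula, I write $\phi = \partial_z \gamma$ and set $u = \int^z e^{-\gamma/\lambda}$. Then $g = -\lambda u'/u$, and a direct check gives
\begin{equation*}
\lambda g' = -\lambda^2 u''/u + g^2, \qquad u'' = -\lambda^{-1}\gamma' u' ,
\end{equation*}
so that $\lambda g' = \gamma' \lambda u'/u + g^2 = -\phi g + g^2$, which is the Riccati equation. The contour must be chosen so that $u$ has a small-$\lambda$ asymptotic expansion. Concretely, I take the endpoint to be a point where $\mathrm{Re}(\gamma/\lambda)$ tends to $+\infty$, or equivalently use Laplace/steepest descent without saddle contributions at $z$. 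Near a point where $\gamma' \neq 0$, integration by parts gives
\begin{equation*}
u \sim -\lambda e^{-\gamma/\lambda}\gamma'^{-1}\big(1 + O(\lambda)\big),
\end{equation*}
so $g = \gamma' + O(\lambda)$ has a power series expansion. By the uniqueness above, $g$ must then coincide with the recursive solution. I expect this asymptotic justification to be the main subtlety. A different choice of base point adds a constant to $u$, producing a solution that is not a formal power series in $\lambda$, because it is dominated by $e^{\gamma/\lambda}$-type terms. The recursion shows that the power-series solution is unique, so the correct contour is the one selected by this requirement.

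Finally, for the divisor version, I trivialize $K_\Sigma(kD)$ near a point of $D$ using $z^{-k}\d z$. Then $\phi$ and $g$ acquire poles of order $k$. The same recursion holds with $\phi_0^{-1}$ meromorphic, since $\alpha \neq 0$. The identity $\nabla\what{\alpha} = \what{\alpha}^2$ is read in $K_\Sigma(2kD)$, and all the steps go through verbatim, now at the level of meromorphic sections.
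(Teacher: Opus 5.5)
Your proposal is correct and takes essentially the same route as the paper: you reduce to the Riccati equation $\lambda g' + \phi g = g^2$ in a local frame, solve it order by order in $\lambda$ (your recursion is exactly the one the paper later writes out as \eqref{equation:bernoulli_inductive}), and check the integral formula by direct differentiation. You are more careful than the paper's sketch in two places: you note that uniqueness requires discarding the trivial branch $g_0 = 0$ (which propagates to $g \equiv 0$), and you explain how the small-$\lambda$ asymptotics fixes the base point of the contour.
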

\begin{proof}
Working in a patch, write $\alpha = \phi(z) \d z$ and $\nabla = \lambda \partial + \phi(z) \d z$. Then $g(z)$ satisfies Bernoulli-Ricatti equation
\begin{equation} 
\lambda g'(z) + \phi(z)g(z) = g(z)^2  
\end{equation}
It is easy to check that this has a unique solution where $g(z)$ is a series in $\lambda$.  (Globally, as we will see later, there can be obstructions to building a solution).  

A simple explicit calculation shows that the integral expression for $\what{\alpha}$ satisfies the Bernoulli equation.  
\end{proof}

\begin{definition}
A \emph{planar spectral curve} is a Riemann surface $\Sigma = \Sigma_+ \cup \Sigma_-$ together with the following data:
\begin{enumerate}
\item A divisor $D$. 
\item On $\Sigma_+$, a $\lambda$-connection $\nabla_+$ on the bundle $K_{\Sigma_+}(2 D)$, with second order poles on $D$. 
\item On $\Sigma_-$, a $-\lambda$-connection $\nabla_-$, also on the bundle $K_{\Sigma_-}(2 D)$, with second order poles on $D$.  
\item On $\Sigma_+ \cap \Sigma_-$ a one-form $\what{\alpha} \in K_{\Sigma}(2 D)$ which satisfies the Bernoulli equation both for $\nabla_+$ and $\nabla_-$: 
\begin{equation} 
\begin{split} 
\nabla_+ \what{\alpha} &= \what{\alpha}^2 \\
 \nabla_- \what{\alpha} &= \what{\alpha}^2  
\end{split} 
\end{equation} 
\end{enumerate}
Equivalently, on $\Sigma_+ \cap \Sigma_-$,
\begin{equation} 
\nabla_- = - \nabla_+ + 2 \what{\alpha} 
\end{equation} 
\end{definition}
Note that if $\lambda = 0$, a planar spectral curve is a classical spectral curve. We will show later that a classical spectral curve can be deformed into a planar spectral curve (working in series in $\lambda)$ in the following contexts:
\begin{itemize} 
 \item If $\alpha$ has at least one pole. 
\item If $\alpha$ has no poles and the number of zeroes of $\alpha$ on $\Sigma_+$ and on $\Sigma_-$ agree when counted with multiplicity. 
\end{itemize}
If $\alpha$ has no poles, then (as we will see later) $\what{\alpha}$ has poles on $\Sigma_+$ whose total residue is $2 \pi \i \lambda$ times the number of zeroes of $\alpha$. This explains why the second condition is necessary.

\begin{conjecture}
To every planar spectral curve is associated a planar integrable field theory with 't Hooft coupling $\lambda$. 

This planar integrable field theory quantizes the classical integrable theory associated to its classical limit. 
\end{conjecture}
With the definition of planar spectral curve we have presented, the corresponding integrable field theory need not be unitary. One should be able to strengthen the axioms to enforce unitarity.   

There is a related approach to integrable field theories based on affine Gaudin models \cite{Vicedo:2017cge,  Delduc:2018hty, Vicedo:2019dej}. At least at genus zero, the integrable field theories built from $4d$ CS and from affine Gaudin models are the same.   It was shown in \cite{Lacroix:2018fhf,   Kotousov:2022azm} that when studying \emph{chiral} integrable models from the affine Gaudin perspective, the one-form defining the theory classically needs to be upgraded to a $\lambda$-connection \footnote{Their analysis was non-planar, and $\lambda$ was $\hbar$ times the dual Coxeter number which in the case of $SU(N)$ is $N$.}.  This is consistent with our holographic approach.  

\subsection{Outline of the holographic duality}
This conjecture is based on a holographic analysis of $4d$ Chern-Simons theory, which we will explain briefly now and in more detail later.
\begin{theorem}
For every classical spectral curve $(\Sigma = \Sigma_+ \cup \Sigma_-, \alpha)$, there is a rank $2$ real vector bundle $V \to \Sigma$, and a generalized Calabi-Yau structure on 
\begin{equation} 
X_{\Sigma} = V \times \R^2. 
\end{equation}

$4d$ Chern-Simons is the theory on a brane wrapping $\Sigma \times \R^2 \subset X_{\Sigma}$, where $\Sigma$ is the zero section of the rank two vector bundle $V$. 

Every planar spectral curve $\what{\Sigma}$  quantizing the classical spectral curve gives rise to a new generalized Calabi-Yau $\what{X}_{\what{\Sigma}}$ from backreacting the brane supporting $4d$ Chern-Simons. 
\end{theorem}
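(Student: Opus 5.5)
The plan is to build $X_\Sigma$ patch by patch, using the chiral/anti-chiral decomposition, and then to obtain $\what{X}_{\what{\Sigma}}$ by the same construction with $\alpha$ replaced by the planar data.

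\textbf{Step 1: the local models.} On $\Sigma_+$ we take $V|_{\Sigma_+}$ to be the underlying real bundle of the holomorphic line bundle $K_\Sigma^{-1}(-2D)$. A point of $X$ over $\Sigma_+$ is then $(z,v,w)$ with $v$ a fibre coordinate, and we declare it a complex $3$-fold with holomorphic coordinates $(z,v,w)$. Its holomorphic volume form is $\Omega_+ = \alpha \wedge \d v \wedge \d w$. This is well defined because $\alpha\in H^0(K_\Sigma(2D))$ pairs with $v\in K^{-1}(-2D)$, so that $\alpha\, \d v$ glues (up to terms proportional to $v$, which we handle by working near the zero section). On $\Sigma_-$ we do the same, but use $\wbar$ as the holomorphic coordinate on $\R^2$. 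This produces a generalized complex structure of mixed type whose pure spinor is $\Omega_- = \alpha\wedge \d v\wedge \d\wbar$.

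\textbf{Step 2: gluing on the overlap.} On $\Sigma_+\cap\Sigma_-$, $\alpha$ has no zeroes by definition. We can therefore trivialise $K(2D)$ by $\alpha$ and write $\alpha = \d z'$ locally. The two structures should then be related by a $B$-field transform together with a change of type in the $\R^2$ directions: the pure spinors $e^{B}\Omega_+$ and $\Omega_-$ define the same generalized complex structure on a neighbourhood of the zero section. I would try $B$ built from $v$ and $\d w\wedge \d\wbar$, so that the $\R^2$ factor interpolates through a symplectic type. The generalized Calabi--Yau condition $\d(\text{pure spinor})=0$ holds because $\alpha$ is closed, being a holomorphic one-form on a curve. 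This gluing is the main obstacle. A type-$3$ structure (complex in $w$) and a type-$3$ structure (complex in $\wbar$) can only be connected through a region of different type, for instance symplectic on $\R^2\times$(fibre). I would first try writing the pure spinor on the overlap as
\[
\alpha\wedge \exp\bigl(\d v\wedge \d w + \d\bar v\wedge \d\wbar + \ldots\bigr)
\]
and check that it restricts correctly on each side. The zeroes of $\alpha$ in $\Sigma_\pm$ are exactly where this fails, which is why they must lie strictly inside one patch.

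\textbf{Step 3: the brane and $4d$ CS.} The zero section $\Sigma\times\R^2$ is a generalized complex (coisotropic, type-changing) brane. By the standard result that a B-brane in a twisted Calabi--Yau supports holomorphic Chern--Simons, the brane theory is $\int \Omega|_{\text{brane}}$-weighted Chern--Simons. The restriction of the pure spinor to the zero section, contracted with the normal direction, is $\alpha$, so the brane theory is $\int\alpha\, CS(A)$. The boundary conditions at $D$ and at the zeroes follow from the local models.

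\textbf{Step 4: backreaction.} $N$ branes source a flux proportional to $\lambda=N\hbar$ linking the zero section. This deforms the closed-string field by a $\lambda$-dependent shift, modifying $\d v$ to $\d v + \lambda v^{-1}\d v$-type terms. On $\Sigma_\pm$ we replace $\alpha$ by the $\pm\lambda$-connections $\nabla_\pm$: the fibre coordinate picks up the $\lambda\partial$ part, and the total space of $K^{-1}(-2D)$ is twisted by $\nabla_+$, viewed as a $\lambda$-connection on $K^{-1}$. On the overlap, the gluing $B$-field of Step 2 is built from $\what{\alpha}$ in place of $\alpha$. Compatibility of the gluing with both twisted patches reduces, via Lemma \ref{lemma:uniquebernoulli} and its frame interpretation, to $\nabla_\pm\what{\alpha}=\what{\alpha}^2$. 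This is precisely the planar spectral curve condition, and it defines $\what{X}_{\what{\Sigma}}$.
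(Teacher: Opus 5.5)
Your overall architecture matches the paper's: chiral and antichiral patches with opposite complex structures on $\R^2$, a change of type on the overlap, $\pm\lambda$-connections, and the Bernoulli equation as the compatibility condition. The mechanism is missing, though, because you put $\alpha$ into the holomorphic volume form instead of making it the \emph{degree-one component} of the pure spinor.

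In the paper the chiral patch is $T^\ast\Sigma_+\times\C$ with spinor $\Omega+\alpha$. Here $\Omega$ is the canonical holomorphic $3$-form: it is independent of $\alpha$ and does not vanish at the zeroes of $\alpha$. So $\alpha$ enters as a holomorphic Poisson structure, giving a noncommutative $B$-model; the same holds on $T^\ast\Sigma_-\times\br{\C}$. With your pure type-$3$ spinors $\Omega_\pm$, each step fails:
\begin{itemize}
\item $B$-transforms preserve the lowest-degree component, so $e^B\Omega_+$ can never define the same structure as $\Omega_-$. The Step 2 obstacle is therefore insurmountable in your setup. Your trial spinor $\alpha\wedge\exp(\d v\wedge\d w+\d\br{v}\wedge\d\wbar)$ has a real exponent, so $\ip{\phi,\br{\phi}}=0$ and it is degenerate.
\item Contracting $\Omega_+$ with $\partial_v$ gives the $2$-form $\alpha\wedge\d w$, not $\alpha$. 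A $B$-brane in a type-$3$ background carries a theory holomorphic in $w$ (holomorphic BF type), not one topological along $\R^2$. Both the factor $\alpha$ in $\int\alpha\,CS(A)$ and the topological dependence on $\R^2$ come from the degree-one part: in the mixed frame the open strings are $\Omega^\ast(\R^2)\what{\otimes}\Omega^{0,\ast}(\C)$.
\item Near a zero of $\alpha$, the disorder defect comes from the Poisson deformation $z\eps\partial_{u_2}$ of the brane complex, via $z\eps=\d q$. Your setup has no such deformation.
\item $\Omega_+$ vanishes at the zeroes of $\alpha$. It is also not coordinate independent on $K^{-1}(-2D)$: it rescales by $\partial_z z'$, while the terms proportional to $v$ you worried about are harmless.
\end{itemize}
What works is this. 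On the overlap, take $p$ to be the fibre coordinate of $T^\ast\Sigma$ given by pairing with $1/\alpha$, so that $\Omega=\alpha\,\d p\,\d q$. The chiral spinor is then $e^{\d p\,\d q}\alpha$ and the antichiral one is $e^{-\d\br{p}\,\d\br{q}}\alpha$. Both are real $B$-transforms of the mixed $A/B$ spinor $e^{\i\op{Im}(\d p\,\d q)}\alpha$, because $\d p\,\d q$ and $-\br{\d p\,\d q}$ have the same imaginary part. Hence $V$ is $T^\ast\Sigma$ on each patch, glued by the $\alpha$-trivialisation followed by conjugation. It is built only over the complement of the double poles, which are handled by boundary conditions.

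Step 4 inherits these problems and misses one further ingredient.
\begin{itemize}
\item The source $\d\phi=2\pi\i\lambda\,\delta_{\Sigma\times\R^2}$ comes from adding $\lambda\,\d\log p$ to the degree-one part, not from modifying $\d v$ in the $3$-form.
\item A one-form on $T^\ast\Sigma_+\setminus\Sigma_+$ with residue $\lambda$ along the zero section is exactly the connection form of a $\lambda$-connection on $K_{\Sigma_+}$. On $\Sigma_-$ the residue is $-\lambda$, because the gluing conjugates the fibres.
\item The Bernoulli equation lets you write both patches over the overlap as $e^{\d p_\pm\d q_\pm}(\pm\lambda\,\d\log p_\pm+\what{\alpha})$, with $p_\pm$ the fibre coordinates defined by $\what{\alpha}$.
\end{itemize}
Even then, after $B$-fields and $\rho:(p,q)\mapsto(-\br{p},\br{q})$, the two spinors still differ by $2\lambda\,\d\log\abs{p}$ in degree one. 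No $B$-field changes the degree-one part. The paper glues with the $\abs{p}$-dependent diffeomorphism $f_\lambda=e^{2\lambda\log\abs{p}\,V}$, where $V=1/\what{\alpha}$. It satisfies $f_\lambda^\ast\what{\alpha}=\what{\alpha}+2\lambda\,\d\log\abs{p}$ because $\mc{L}_V\what{\alpha}=0$ and $\what{\alpha}(V)=1$. Your ``gluing $B$-field built from $\what{\alpha}$'' cannot play this role. This flow along $V$ is also exactly what fails to commute with scaling $(p,q)$, and it is what produces the RG flow later in the paper.
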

This holographic set-up is related to the one studied in \cite{Gaiotto:2026qai} for minimal model holography, and to \cite{Budzik:2023xbr} in the study of holography for twisted $N=1$ gauge theory. 

If we work in series in $\lambda$, the back-reacted geometry is a new generalized Calabi-Yau structure on the manifold $X_{\Sigma} \setminus (\Sigma \times \R^2)$, where the differential forms defining the generalized CY have perscribed poles on the locus we have removed.

In principle, our method works non-perturbatively, but we do not have a very strong understanding of the non-perturbative geometry. However, given a non-perturbative planar spectral curve with some natural additional hypothesis on $\what{\alpha}$, we can build a backreacted geometry where the 't Hooft coupling $\lambda$ takes finite (real) values.  This geometry is only defined, roughly, on the region far away from the brane we have removed. (In usual holography, the corresponding region is near the boundary of $AdS$ space).  This phenomenon is to be expected on physical grounds, because the theories we are considering have a mass-gap. 

The non-perturbative geometry should, in principle, encode the spectrum of massive particles of the dual field theory.  We leave this question to future work; all the explicit computations in this paper are perturbative. 

Before we move on, let us clarify an important point regarding what is meant by the holographic dual of a model like  the PCM. Holography is concerned with gauge theories, and gauge invariant objects have all $U(N)$ indices contracted. Nothing in the dual string theory knows about quantities with free $U(N)$ indices.  

When we say we are studying the holographic dual of the PCM, what we mean more precisely is the following.  If we have an integrable model with $U(N)$ symmetry given by a current $J$, we can couple it to two-dimensional topological BF theory with gauge group $U(N)$. This has action
\begin{equation}
\int B F(A) + A J
\end{equation}
where $B$ is an adjoint-valued scalar.   We will couple an integrable model to a copy of $U(N)$ BF theory for every $U(N)$ symmetry it has.  This has the effect of projecting to $U(N)$ invariant quantities in the integrable model, but it \emph{does not} change the dynamics of the theory. 

If we only consisder operators that do not involve $B$, they are precisely the $U(N)$ invariant operators in the original theory before gauging. This is because on-shell $F(A) = 0$ so we can not build operators from the field strength. 

Further, correlation functions of these operators are exactly the same as they were before we coupled to gauge theory. This is because, in BF theory, the propagator connects $B$ and $A$; so if we consider operators which do not involve $B$, there are no gauge theory propagators.

Finally, the RG flow of the gauged model is exactly the same as that of the model before gauging. This is again because gauge theory propagators do not occur in any Feynman diagram that can contribute to the RG flow. 

We show in section \ref{sec:neuman} that coupling to $2d$ BF theory can be interpreted, in $4d$ Chern-Simons, as using Neumann rather than Dirichlet boundary conditions at poles in the one-form.  We should do this at all second-order poles, so that for the PCM we couple to BF theory for the group $U(N) \times U(N)$, acting on the left and the right. 

\subsection{The RG flow on planar spectral curves}
Next let us explain the RG flow which is derived from this holographic analysis.   We will see that every planar spectral curve has a canonical deformation, so that the moduli of planar spectral curves has a canonical vector field.

 Given a planar spectral curve, we let  
\begin{equation} 
V = \frac{1}{\what{\alpha}}  
\end{equation}  
be the vector field on $\Sigma_+ \cap \Sigma_-$ which is inverse to $\what{\alpha}$.   
\begin{lemma}
\label{lemma:bernoullivector} For a $\lambda$-connection $\nabla$ on the canonical bundle, a one-form
$\what{\alpha}$ satisfies the Bernoulli equation if and only if the Lie derivative of $V = 1/\what{\alpha}$ preserves $\nabla$. 
\end{lemma}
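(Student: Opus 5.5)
The plan is to work locally in a coordinate patch and reduce both conditions to the same scalar ODE. Write $\nabla = \lambda \partial_z + \phi(z)\,\d z$ acting on sections $f(z)\,\d z$ of $K_\Sigma$. Set $\what{\alpha} = g(z)\,\d z$, so that $V = g^{-1}\partial_z$. By the earlier computation, the Bernoulli equation $\nabla\what{\alpha} = \what{\alpha}^2$ is equivalent to
\begin{equation*}
\lambda g' + \phi g = g^2 .
\end{equation*}

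The main step is to make precise what it means for $\mathcal{L}_V$ to preserve $\nabla$. A holomorphic vector field acts by Lie derivative on sections of $K_\Sigma$ and on sections of $K_\Sigma^{\otimes 2}$. The statement should be that $[\mathcal{L}_V, \nabla] = 0$ as operators from $\Omega^0(K_\Sigma)$ to $\Omega^{1,0}(K_\Sigma) = K_\Sigma^{\otimes 2}$. Equivalently, the flow of $V$ maps $\nabla$ to itself. Since the $\lambda$-connections on $K_\Sigma$ form an affine space over holomorphic quadratic differentials, $[\mathcal{L}_V,\nabla]$ is $\mathcal{O}$-linear. The commutator of $\mathcal{L}_V$ with multiplication by $f$ is multiplication by $V(f)$. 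Together with the Leibniz rule, this gives
\begin{equation*}
[\mathcal{L}_V,\nabla](fs) - f[\mathcal{L}_V,\nabla]s = \lambda\big(\mathcal{L}_V \partial f - \partial (Vf)\big)s = 0 ,
\end{equation*}
because $\mathcal{L}_V$ commutes with $\d$. So $[\mathcal{L}_V,\nabla]$ is multiplication by a quadratic differential $Q_V$, and the lemma amounts to showing that $Q_V = 0$ if and only if the Bernoulli equation holds.

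To compute $Q_V$, I will use the coordinate-free route suggested by the previous lemma. In the frame $\what{\alpha}$ we have $\mathcal{L}_V \what{\alpha} = \d(\iota_V\what{\alpha}) = \d(1) = 0$. So $\what{\alpha}$ is $V$-invariant, and choosing the local coordinate $u$ with $\what{\alpha} = \d u$ gives $V = \partial_u$. In this frame write $\nabla = \lambda\partial_u + h(u)\,\d u$; by the proof of the previous lemma, $h\,\d u = \what{\alpha}^{-1}\nabla\what{\alpha}$. For the invariant frame $\d u$ and the translation-invariant operator $\lambda\partial_u$, the Lie derivative along $\partial_u$ gives $[\mathcal{L}_V,\nabla] = h'(u)(\d u)^2$. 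Hence $\nabla$ is preserved exactly when $h$ is constant.

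This is where the main obstacle lies. Constancy of $h$ is weaker than $h\,\d u = \what{\alpha}$, i.e.\ $h = 1$. I expect to resolve this by being careful about the precise meaning of the statement rather than by computation. The vector field $V$ determines $\what{\alpha}$ only through $\iota_V\what{\alpha} = 1$, while the equation $h = c$ is the Bernoulli equation for the rescaled form $c^{-1}\what{\alpha}$ up to the normalization. I will therefore either track constants carefully, or interpret preservation of $\nabla$ together with the normalization fixed by $\lambda$-scaling. Failing that, I will state the lemma as: $\mathcal{L}_V$ preserves $\nabla$ if and only if $\what{\alpha}$ satisfies the Bernoulli equation up to a constant multiple. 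I will also check the $\lambda\partial$ term directly in the $z$ coordinate, via $\lambda\big((g^{-1})'' \ldots\big)$, to make sure no extra $\lambda$-dependent term appears that would pin the constant.
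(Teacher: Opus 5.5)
Your argument for the forward implication is correct and is essentially the paper's. The paper also trivializes $K_\Sigma$ by $\what{\alpha}$, uses $\mc{L}_V\what{\alpha}=0$ and $\nabla(f\what{\alpha})=\lambda\what{\alpha}\,\partial f+\what{\alpha}^2 f$, and observes that $\mc{L}_V$ commutes with $\partial$ and with multiplication by $\what{\alpha}$. The paper's proof establishes only this direction (Bernoulli implies $\mc{L}_V\nabla=0$). That is also the only direction it uses later, when it deduces $\rho_\mu^\ast\nabla_+=\nabla_+$.

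Your worry about the converse is justified, and no bookkeeping of constants, normalization at $\lambda=0$, or extra $\lambda$-dependent term will remove it: the reverse implication is false as literally stated. Here is a counterexample.
\begin{itemize}
\item In a coordinate $u$, take $\what{\alpha}=\d u$, $V=\partial_u$, and $\nabla=\lambda\partial_u+c\,\d u$ with $c$ a constant, $c\neq 1$.
\item For instance $c=1+\lambda$, which even satisfies $\nabla_{\lambda=0}=\what{\alpha}_{\lambda=0}$.
\item Then $\mc{L}_{\partial_u}$ commutes with $\nabla$, yet $\nabla\what{\alpha}=c\,\what{\alpha}^2\neq\what{\alpha}^2$. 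The actual series solution of the Bernoulli equation here is $(1+\lambda)\,\d u$.
\end{itemize}
Your planned check in the $z$ coordinate will just reproduce the condition $h'=0$.

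So commit to your last option. Your $\mathcal{O}$-linearity computation shows that $\mc{L}_V$ preserves $\nabla$ if and only if $\nabla\what{\alpha}=c\,\what{\alpha}^2$ for a locally constant $c$. For $c\neq 0$ this says that $c\,\what{\alpha}$ solves the Bernoulli equation, with inverse vector field $c^{-1}V$. The Bernoulli equation for $\what{\alpha}$ itself is the case $c=1$.

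A minor correction: $\lambda$-connections on $K_\Sigma$ form an affine space over holomorphic one-forms, not quadratic differentials. So $[\mc{L}_V,\nabla]$ is multiplication by the one-form $h'(u)\,\d u$. Its value on the frame $\d u$ is the $h'(u)(\d u)^2$ you wrote, so nothing downstream changes.
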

\begin{proof} 
Recall that $\what{\alpha}$ satisfies the Bernoulli equation if and only if, in the trivialization of the canonical bundle given by $\what{\alpha}$, $\nabla$ takes the form
\begin{equation} 
\nabla = \lambda \partial + \what{\alpha}. 
\end{equation}
The statement we are proving is a consequence of this.  The statement we need to prove is that, for all holomorphic one-forms $\omega$, 
\begin{equation} 
\nabla ( \mc{L}_V \omega ) = \mc{L}_V ( \nabla \omega)  
\end{equation}
where both sides are sections of $K_{\Sigma}^{\otimes 2}$ and $\mc{L}_V$ is the Lie derivative.

We can verify this equation by trivializing the canonical bundle using $\what{\alpha}$, which amounts to taking $\omega = f \what{\alpha}$ for some function $f$.  Then, since $\mc{L}_V (\what{\alpha}) = 0$, we have 
$\mc{L}_V \omega = \what{\alpha} \mc{L}_V f$. Since the connection one-form in the frame given by $\what{\alpha}$ is $\what{\alpha}$,  and 
$$\nabla \omega = \nabla (f \what{\alpha}) = \lambda \what{\alpha} \partial f + \what{\alpha}^2 f.$$  
Then
\begin{equation} 
\begin{split} 
 \nabla (\mc{L}_V \omega ) &= \what{\alpha}  \lambda \partial (\mc{L}_V f) + \what{\alpha}^2 (\mc{L}_V f)  \\
\mc{L}_V (\nabla \omega) &= \mc{L}_V\left(  \lambda\what{\alpha} (\partial f) + \what{\alpha}^2  f \right) 
\end{split} 
\end{equation}
These agree, since $\mc{L}_V$ commutes with both $\what{\alpha}$ and $\partial$. 
\end{proof}

We want to define a canonical deformation of any planar spectral curve. That is, to any planar spectral cuve $\Sigma$, we want to define a family of planar spectral curves $\Sigma_{\mu}$ which depend on the scale parameter $\mu$ of the integrable theory (our convention is that large $\mu$ means moving to the UV).  We will define $\Sigma_{\mu}$ in series in $\log \mu$.

\begin{definition}\label{def:rgflow}
Given a planar spectral curve $(\Sigma, \nabla_+ \nabla_-)$ define a family planar spectral curve $\Sigma_\mu$ as follows. 

$\Sigma_\mu$ will have two patches $\Sigma_+$, $\Sigma_-$, which are the same as the two patches on the original $\Sigma$.  The connections $\nabla_{\pm}$ on $\Sigma_{\pm}$ will be unchanged. 

The dependence on the scale parameter $\mu$ will be entirely through the gluing map identifying a patch in $\Sigma_+$ with the corresponding patch in $\Sigma_-$.    Let $U_+ \subset \Sigma_+$ be $\Sigma_+ \cap \Sigma_-$, viewed as an open in $\Sigma_+$, and similarly define $U_- \subset \Sigma_-$.  We identify $\Sigma_+$ with $\Sigma_-$ by gluing $U_+$ to $U_-$  via the map
\begin{equation}
\begin{split} 
 \rho_{\mu} &: U_+ \to U_- \\
\rho_{\mu} &= e^{2\lambda V \log \mu}  
\end{split} 
\end{equation}   
\end{definition}
The key point of this definition is the following.

\begin{lemma} 
 $\Sigma_{\mu}$ defined in this way defines a family of planar spectral curves depending on $\mu$. 
\end{lemma}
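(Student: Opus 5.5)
We need to check that the glued data satisfy all the axioms of a planar spectral curve. The patches $\Sigma_\pm$ and the connections $\nabla_\pm$ are untouched, so axioms 1--3 hold automatically. The divisor $D$ has to be kept away from the overlap, or at least fixed by the gluing. We should also check that $\rho_\mu$ is a genuine biholomorphism onto its image, in series in $\log\mu$. This holds because $V = 1/\what{\alpha}$ is a holomorphic vector field on $\Sigma_+\cap\Sigma_-$: $\what{\alpha}$ has no zeroes there, since at $\lambda=0$ it equals $\alpha$, which has no zeroes on the overlap. Hence $e^{2\lambda V\log\mu}$ is the time-$2\lambda\log\mu$ flow of a holomorphic vector field, defined formally in $\log\mu$.

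The real content is axiom 4. On the overlap we must produce a one-form that satisfies the Bernoulli equation for $\nabla_+$ (pulled back from $U_+$) and for $\nabla_-$ (pulled back from $U_-$ via $\rho_\mu$). We take the same $\what{\alpha}$, viewed on $U_+$. The key step is to show that $\rho_\mu$ preserves both $\what{\alpha}$ and $\nabla_-$, so that $\rho_\mu^*\nabla_- = \nabla_-$ and $\rho_\mu^*\what{\alpha} = \what{\alpha}$.

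For $\what{\alpha}$, we have $\mathcal{L}_V\what{\alpha} = d(\iota_V\what{\alpha}) + \iota_V d\what{\alpha} = d(1) + 0 = 0$, since $\what{\alpha}$ is a holomorphic one-form on a curve and so $d\what{\alpha}=0$. Thus the flow of $V$ preserves $\what{\alpha}$.

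For $\nabla_-$, we use Lemma \ref{lemma:bernoullivector}. Since $\what{\alpha}$ satisfies the Bernoulli equation for the $-\lambda$-connection $\nabla_-$ (the lemma applies verbatim with $\lambda$ replaced by $-\lambda$), the Lie derivative $\mathcal{L}_V$ preserves $\nabla_-$. Exponentiating then gives $\rho_\mu^*\nabla_- = \nabla_-$, where exponentiation is justified order by order in $\log\mu$ and the flow is the exponential of the derivation $\mathcal{L}_V$ acting on forms and operators. The same lemma shows that $\nabla_+$ is preserved too, though we only need the statement for $\nabla_-$.

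Putting this together, under the new identification the pair $(\nabla_+, \rho_\mu^*\nabla_-)$ on the overlap equals $(\nabla_+,\nabla_-)$. Therefore $\what{\alpha}$, which satisfied both Bernoulli equations before, still does, and equivalently $\rho_\mu^*\nabla_- = -\nabla_+ + 2\what{\alpha}$ continues to hold.

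The main obstacle I expect is bookkeeping rather than substance. First, $V$ could fail to be defined near the poles of $\what{\alpha}$ or near $D$. This is handled by restricting the overlap to a region where $\what{\alpha}$ is holomorphic and nonvanishing, which is allowed since only which patch contains each zero matters. Second, we need the flow to make sense at all: working formally in $\log\mu$, it suffices that $\mathcal{L}_V$ commutes with $\nabla_\pm$ and kills $\what{\alpha}$ as derivations, and the exponential series then preserves everything.

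Finally, the dependence on $\mu$ should form a family, meaning $\rho_{\mu}\rho_{\mu'} = \rho_{\mu\mu'}$. This is immediate from the one-parameter group property of the flow.
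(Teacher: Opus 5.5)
Your proof is correct and is essentially the paper's argument. Both use Lemma \ref{lemma:bernoullivector} to get that $\mc{L}_V$ preserves the connection, then exponentiate to see that the pulled-back connection is unchanged, so the Bernoulli equations (equivalently $\nabla_- = -\nabla_+ + 2\what{\alpha}$) survive the regluing. The differences are cosmetic: you pull back $\nabla_-$ along $\rho_\mu : U_+ \to U_-$, which matches the direction of the map, where the paper pulls back $\nabla_+$; and you also record $\mc{L}_V \what{\alpha} = 0$, the holomorphy of $V$ on the overlap, and the one-parameter group property, all of which the paper leaves implicit.
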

\begin{proof}
The axioms of a planar spectral curve are an equation relating $\nabla_+$ and $\nabla_-$ on $\Sigma_+ \cap \Sigma_-$.    We need to check that this property continues to hold when $\Sigma_+$ and $\Sigma_-$ are glued by the non-trivial map $\rho_{\mu}$.

On $\Sigma_{\mu}$, the patches $\Sigma_{+,\mu}$ and $\Sigma_{-,\mu}$ are the same, but we have applied the coordinate transformation $\rho_{\mu}$ to $\Sigma_{+,\mu}$ before gluing.  We therefore need to check that the connections $\rho_{\mu}^\ast \nabla_+$ and $\nabla_-$ satisfies the axioms of a planar spectral curve.
 
 By lemma \ref{lemma:bernoullivector}, we have $\mc{L}_V \nabla_+ = 0$. Since $\rho_{\mu}$ is an exponential of the transformation $V$, this implies that
\begin{equation} 
\rho_{\mu}^\ast \nabla_{+} = \nabla_{+} 
\end{equation} 
so the properties of a planar spectral curve continue to hold.

\end{proof}
It is important to note that this apparently simple expression for the RG flow holds to all orders in the 't Hooft coupling $\lambda$.  In practice, as we will see later, this flow can be quite complicated, because the relationship between $\what{\alpha}$ and $\nabla_+$ is complicated.

Of course, if we want the RG flow as a vector field, we deform $\Sigma$ by modifying the map gluing $\Sigma_+$ to $\Sigma_-$ to first order using the vector field $V$. More abstractly,  we can view $V$ as being an element of the \v{C}ech cohomology group  $\breve{H}^1(T \what{\Sigma})$, using the open cover $\what{\Sigma} = \Sigma_+ \cup \Sigma_-$. We can interpret it as a Beltrami differential deforming the Riemann surface $\what{\Sigma}$.   

At genus $0$, the moduli of the surface $\Sigma$ can not be deformed, so that the deformation of a planar spectral curve is entirely encoded in how $\nabla_+$ and $\nabla_-$ deform. We have the following explicit description.
\begin{lemma} 
Let $\Sigma = \CP^1$, and suppose that $\Sigma_{\pm}$ are discs which we denote by $D_{\pm}$. Let $V = \frac{1}{\what{\alpha}}$ and let us write $V= V_+ + V_-$, where $V_+$ is regular on $D_+$ and $V_-$ is regular on $D_-$.  Then, under the RG flow, 
\begin{equation} 
\begin{split} 
\delta_{RG} \nabla_+ &= -2 \lambda\mc{L}_{V_+} \nabla_+\\ 
 \delta_{RG} \what{\alpha} &= -2 \lambda\mc{L}_{V_+} \what{\alpha} = 2 \lambda \mc{L}_{V_-} \what{\alpha} \\
 \delta_{RG} \nabla_- &= 2 \lambda\mc{L}_{V_-} \nabla_-\\
\end{split} 
\end{equation}
\end{lemma}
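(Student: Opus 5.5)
The plan is to convert the \v{C}ech description of the flow in Definition \ref{def:rgflow} into a deformation of the data $(\nabla_+,\what{\alpha},\nabla_-)$ on a fixed $\CP^1$. On $\CP^1$ the \v{C}ech class $2\lambda V\,\delta\log\mu \in \breve{H}^1(T\CP^1)$ for the cover $D_+\cup D_-$ vanishes, because $H^1(\CP^1, T\CP^1)=0$. Concretely, the vector field $V=1/\what{\alpha}$ on the annulus $D_+\cap D_-$ has a Laurent-type splitting $V=V_++V_-$ with $V_\pm$ holomorphic on $D_\pm$. One obtains it from the Cauchy integral of $V$ in a coordinate $z$ in which both discs are standard. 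This splitting is unique up to a global holomorphic vector field, i.e.\ an element of $\mathfrak{sl}_2$, and such an element acts on all the data by a M\"obius reparametrization. So the statement should be read modulo this ambiguity.

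With the splitting fixed, first-order gluing by $e^{2\lambda V\,\delta\log\mu}$ is the same as gluing by $e^{2\lambda V_-\,\delta\log\mu}\circ e^{2\lambda V_+\,\delta\log\mu}$, to first order. Each factor extends holomorphically over its own disc. I will therefore absorb the $V_+$ factor into a reparametrization of $D_+$ and the $V_-$ factor into a reparametrization of $D_-$. The glued surface is then identified with the standard $\CP^1$, covered by the undeformed discs and glued by the identity. Under this identification the connection on $D_+$ is replaced by its pullback under the flow of $V_+$. With the paper's convention $\rho_\mu:U_+\to U_-$, which determines which side picks up the minus sign, this gives $\delta\nabla_+=-2\lambda\mc{L}_{V_+}\nabla_+$. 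On $D_-$ the opposite orientation gives $\delta\nabla_-=+2\lambda\mc{L}_{V_-}\nabla_-$.

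For $\what{\alpha}$, which lives on the overlap and is determined by either side, I would transport it along with $\nabla_+$, giving $\delta\what{\alpha}=-2\lambda\mc{L}_{V_+}\what{\alpha}$. Transporting it with $\nabla_-$ instead gives $2\lambda\mc{L}_{V_-}\what{\alpha}$. The two expressions agree because $\mc{L}_{V_+}\what{\alpha}+\mc{L}_{V_-}\what{\alpha}=\mc{L}_V\what{\alpha}=0$, since $\iota_V\what{\alpha}=1$ and $\what{\alpha}$ is closed. This agreement is also a consistency check on the signs. One should check as well that the new triple still satisfies $\nabla_-=-\nabla_++2\what{\alpha}$. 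This follows from Lemma \ref{lemma:bernoullivector}: $\mc{L}_V\nabla_\pm=0$ gives $\mc{L}_{V_+}\nabla_+=-\mc{L}_{V_-}\nabla_+$, and taking the Lie derivative of the relation $\nabla_-=-\nabla_++2\what{\alpha}$ along $V_-$ then closes the check.

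The main obstacle I expect is bookkeeping of sign and direction conventions: whether $\rho_\mu$ pulls back or pushes forward, and whether one writes $\delta$ of the data or of the gluing. The second point is the justification that $\nabla_\pm$, with their second-order poles on $D$, transform correctly under the local reparametrizations. This needs $V_\pm$ to be holomorphic on the whole discs, including at points of $D$ and at zeroes of $\what{\alpha}$. Holomorphy at those points holds because $V=1/\what{\alpha}$ is only used on the annulus and the Cauchy splitting produces $V_\pm$ holomorphic on the full discs.
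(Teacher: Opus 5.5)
Your proposal is correct and is essentially the paper's argument. The paper writes the first-order gluing as $z_+ + 2\lambda\epsilon g(z_+) = z_-$, splits $g = g_+ + g_-$, and passes to the global coordinate $w_+ = z_+ + 2\lambda\epsilon g_+(z_+)$, $w_- = z_- - 2\lambda\epsilon g_-(z_-)$; this is exactly your absorption of $V_\pm$ into reparametrizations of $D_\pm$, and the sign you deferred to bookkeeping comes out of $z_+ = w_+ - 2\lambda\epsilon g_+(w_+)$. Your observation that $\mc{L}_{V_+}\what{\alpha} + \mc{L}_{V_-}\what{\alpha} = \mc{L}_V\what{\alpha} = 0$ is a useful addition, since it justifies the second equality in the $\what{\alpha}$ line, which the paper leaves implicit.
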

\begin{proof} 
Let $z$ be the standard coodinate on $\CP^1$.  Let us write $V = g(z) \partial_z$, $V_+ = g_+(z) \partial_z$, $V_- = g_-(z) \partial_z$.  

Let us assume that $D_-$ contains $\infty$ and $D_+$ contains $0$.  Before deforming, $z$ defines a coodinate on $D_+$ and $1/z$ on $D_-$.  After applying the RG flow, we have a coodinate $z_+$ on $D_+$ and $1/z_-$ on $D_-$, which satisfy
\begin{equation} 
z_+ + 2\lambda \eps g(z_+)  = z_-. 
\end{equation}
Here $\eps$ is the parameter of the deformation and we work modulo $\eps^2$.   
We have
\begin{equation} 
\eps g(z_+) = \eps g_+(z_+) + \eps g_-(z_-) 
\end{equation}
because we are working modulo $\eps^2$. 

Therefore, the relation is
\begin{equation} 
z_+ + 2 \lambda \eps g_+(z_+) = z_- - 2 \lambda \eps g_-(z_-). 
\end{equation}
Let us define a new coordinate 
\begin{equation} 
\begin{split} 
w_+ &=  z_+ + 2 \lambda \eps g_+(z_+)\\
w_- &=  z_- - 2 \lambda \eps g_-(z_-)
\end{split}
 \end{equation}
Here $w_+$ is a coordinate on $D_+$ and $1/w_-$ is a coordinate on $D_-$.  The relation is simply $w_+ = w_-$, so that $w = w_+$ is a global coordinate after performing the gluing.  

To determine how $\nabla_+$ and $\what{\alpha}$ vary under the RG flow, we need to determine how they appear in the new global coordinate $w$.  Any geometric object, when written as a function of $w_+$ instead of $z_+$,   is transformed by the Lie derivative of the vector field $-2 \lambda \eps \mc{L}_{V_+}$. 

For instance, $z_+ = w_+ - 2 \lambda \eps V_+ (z_+)$ and $V_+(z_+) = g(z_+)$.  Similarly, if $\what{\alpha} = \what{\phi}_+(z_+) \d z_+$, then, in the new coordinates, 
\begin{equation}
\begin{split} 
 \what{\alpha} = \what{\phi}_+(w_+) \d w_+ - 2 \lambda \eps \mc{L}_{V_+} ( \phi_+(w_+) \d w_+) .  
\end{split} 
\end{equation}
This implies that
\begin{equation} 
\delta_{RG} \what{\alpha} = -2\lambda \mc{L}_{V_+} \what{\alpha} 
\end{equation}
as desired.  A similar argument shows that $\delta_{RG} \nabla_+ = -2 \lambda \mc{L}_{V_+} \nabla_+$.  In fact, this is a consequence of the statement for $\what{\alpha}$, because the $\lambda$-connection $\nabla_+$ is determined by $\what{\alpha}$.

\end{proof}

The justification for calling this the RG flow is that, in the generalized CY manifold $\what{X}_{\what{\Sigma}}$ associated to the planar spectral curve,  this flow measures the failure of the geometry to be scale invariant.  

Before backreacting, the generalized CY manifold $X_{\Sigma} = V \times \R^2$ has a natural action of $\R_{> 0}$, by scaling $\R^2$ and the fibres of $V$ in opposite ways.   When we backreact, we do not expect the resulting geometry to be scale invariant; the failure of scale invariance corresponds to the RG flow.   We will prove the following proposition later:  
\begin{proposition}
The generalized CY manifold associated to a planar spectral curve is not invariant under scaling. Instead,  the scaling action modifies the generalized CY structure according to the RG flow on the moduli of planar spectral curves. 
\end{proposition}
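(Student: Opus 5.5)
The plan is to work with an explicit description of the back-reacted generalized CY $\what{X}_{\what{\Sigma}}$ in terms of the planar spectral curve data, and to compute directly how the $\R_{>0}$ scaling acts on it. First I will recall the construction that will be given later: on $X_{\Sigma}\setminus(\Sigma\times\R^2)$ the generalized CY structure is specified by a pure spinor / closed differential form built from $\what{\alpha}$ on each patch $\Sigma_\pm$. On $\Sigma_+$ the fibre coordinate of $V$ is paired with the chiral coordinate $w$ of $\R^2$, and on $\Sigma_-$ with $\wbar$. The $\lambda$-connections $\nabla_\pm$ enter as the data governing how the fibre coordinates are twisted over $\Sigma_\pm$ (the fibre of $V$ over $\Sigma_+$ is modelled on the total space of the line bundle with $\lambda$-connection $\nabla_+$, and similarly over $\Sigma_-$ with $-\lambda$). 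The back-reaction is encoded in the $\lambda$-dependent pole of the forms along the removed brane locus. Here I will use Lemma~\ref{lemma:uniquebernoulli}, which gives $\what{\alpha}=-\lambda\,\d\log\int^z e^{-\gamma/\lambda}$ in a local frame, to write the transition between the $\Sigma_+$ and $\Sigma_-$ descriptions on $\Sigma_+\cap\Sigma_-$. That transition is consistent precisely because of the identity $\nabla_-=-\nabla_++2\what{\alpha}$.

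Second, I will apply the scaling $w\mapsto \mu w$, with the fibre coordinates of $V$ scaled by $\mu^{-1}$. On each patch separately I expect the scaling to be absorbed by a holomorphic automorphism. On $\Sigma_+$, the scaling of the fibre can be undone by rescaling the frame of $K_{\Sigma_+}(2D)$. By the Lemma characterising Bernoulli solutions via the frame $\what{\alpha}$, in which $\nabla_+=\lambda\partial+\what{\alpha}$, this rescaling costs a gauge transformation $e^{\lambda\log\mu\,\partial(\cdot)}$. That gauge transformation is generated by flowing along $V=1/\what{\alpha}$ for time $\lambda\log\mu$. The analogous statement on $\Sigma_-$ has the opposite sign, because $\nabla_-$ is a $-\lambda$-connection.

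Third, I will compare the two patches on the overlap. The patchwise automorphisms differ by the flow of $V$ for total time $2\lambda\log\mu$, i.e. by $\rho_\mu=e^{2\lambda V\log\mu}$. So the rescaled geometry is isomorphic to the geometry built from $\Sigma_+,\Sigma_-$ with $\nabla_\pm$ unchanged but glued by $\rho_\mu$. This is exactly $\what{X}_{\what{\Sigma}_\mu}$ of Definition~\ref{def:rgflow}. Lemma~\ref{lemma:bernoullivector} guarantees $\rho_\mu^\ast\nabla_+=\nabla_+$, so the glued data is again a planar spectral curve and the identification is legitimate. Non-invariance then follows as long as this deformation class in $\breve{H}^1(T\what{\Sigma})$, together with its effect on $\nabla_\pm$ as in the genus-$0$ lemma, is nonzero. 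It is nonzero at order $\lambda$ whenever the one-loop flow is nontrivial.

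The main obstacle is the second step. I must check that the scaling acts on the $\lambda$-dependent pieces of the generalized CY forms exactly as a Lie derivative along $\lambda V$, with no extra terms, and in particular get the factor of $\lambda$ and the relative sign between the patches right. I will first try to verify this on the local model: trivialize by $\what{\alpha}$, use $\nabla_+=\lambda\partial+\what{\alpha}$, and compute the infinitesimal scaling, checking that it equals $\lambda\mc{L}_V$ plus a term that is globally exact. Only after that will I globalise via the \v{C}ech description.
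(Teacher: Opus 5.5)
There is a genuine gap, and it is in your second step. The step puts the scale non-invariance inside the individual patches, but each patch of the back-reacted geometry is \emph{exactly} scale invariant. On $\what{X}_+$, in the frame $\what{\alpha}$, the form is $\d p\,\d q\,\what{\alpha}+\lambda\,\d\log p+\what{\alpha}$. The map $p\mapsto\mu p$, $q\mapsto\mu^{-1}q$ preserves both $\d p\,\d q$ and $\d\log p$, and the same holds on $\what{X}_-$. Equivalently, a constant rescaling of the frame of $K_{\Sigma_+}(2D)$ changes the $\lambda$-connection by $\lambda\,\d\log\mu=0$. So it costs nothing and generates no flow along $V$; your ``gauge transformation $e^{\lambda\log\mu\,\partial(\cdot)}$'' does not arise. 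Your planned local-model check is therefore empty: the scaling annihilates the form on each patch, so no single-patch computation can produce the factor $2\lambda\log\mu$.

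What is missing is the gluing map between the patches, which is where the non-invariance lives. On $\Sigma_+\cap\Sigma_-$ the two forms carry $+\lambda\,\d\log p_+$ and $-\lambda\,\d\log p_-$. The fibre coordinates are identified by conjugation, $p_+\mapsto-\br{p}_-$ and $q_+\mapsto\br{q}_-$. After $B$-field transformations the two sides therefore differ by $2\lambda\,\d\log\abs{p}$. This difference is absorbed by the non-holomorphic diffeomorphism $f_\lambda=e^{2\lambda\log\abs{p_+}V}$. Using $\iota_V\what{\alpha}=1$ and $\mc{L}_V\what{\alpha}=0$, one gets $f_\lambda^\ast\what{\alpha}=\what{\alpha}+2\lambda\,\d\log\abs{p_+}$, and Lemma~\ref{lemma:bernoullivector} guarantees compatibility with $\nabla_\pm$. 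Because $f_\lambda$ depends on $\log\abs{p}$, conjugating the gluing by the scaling composes it with $e^{2\lambda\log\mu\,V}$. That is exactly the regluing $\rho_\mu$ of Definition~\ref{def:rgflow}.

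If you want your symmetric picture with $\pm\lambda\log\mu$ per side, route each patch through the mixed $A/B$ frame on the overlap, where the form is $e^{\i\op{Im}\d p\,\d q}(\what{\alpha}+2\pi\i\lambda\,\d\theta)$, via the maps $e^{\pm\lambda\log\abs{p}V}$. Each of these identifications then shifts by $e^{\pm\lambda\log\mu\,V}$ under scaling, giving your total. Either way, the $V$-flow comes from the $\log\abs{p}$-dependence of the identification maps on the overlap. Your proposal never writes these maps down, so your third step states the correct total $2\lambda\log\mu$ without any mechanism that produces it. The integral formula of Lemma~\ref{lemma:uniquebernoulli} plays no role in this argument.
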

This leads to the conjecture that the RG flow on the space of planar spectral curves is equivalent to the one on planar integrable field theories, at all orders in the 't Hooft coupling. 
\subsection{One-loop RG flow}
At one-loop order, the interpretation of the RG flow of integrable models in terms of holomorphic Chern-Simons is now quite well understood \cite{Delduc:2020vxy,Derryberry:2021rne,Levine:2022hpv,Levine:2023wvt,Lacroix:2024wrd,Lacroix:2025ias} . In particular, in  \cite{Derryberry:2021rne} a conjecture was made for the one-loop RG flow in terms of the periods of the classical spectral curve.  It was shown in \cite{Lacroix:2024wrd, Lacroix:2025ias} that this conjecture is true for essentially all examples at genus $0$ and $1$\footnote{We should also mention the beautiful recent work \cite{Komatsu:2026cqz} which shows that in time-dependent integrable models, the time dependence of the coupling must be by the RG flow. It would be fascinating to give a holographic derivation of their results.}.

At leading order in $\lambda$, the RG flow we have described gives a flow on the classical spectral curve.  This flow is given by the Beltrami differential
\begin{equation} 
\frac{2 \lambda}{\alpha} \in \breve{H}^1(\Sigma, T \Sigma). \label{eqn:oneloopRG} 
\end{equation} 
We will show that this flow coincides with the one described in \cite{Derryberry:2021rne}.  (In section \ref{sec:genus0} we give a more explicit expression for the RG flow for genus $0$ models, and a more explicit proof of the fact that the holographic flow coincides with the one conjectured in \cite{Derrbery:2021rne}). 

Let us explain the flow of \cite{Derryberry:2021rne}. The continuous moduli of a classical spectral curve $(\Sigma= \Sigma_+ \cup \Sigma_-, \alpha)$ are entirely encoded in the periods of $\alpha$. There are three types: periods along closed cycles of $\Sigma$, and periods around a pole of $\alpha$, and periods between zeroes of $\alpha$.  (There are obvious linear relationships between these periods). The conjecture is that the flow of the classical spectral curve is such that
\begin{enumerate} 
\item Periods around poles of $\alpha$, or around closed cycles of $\Sigma$, do not flow.
\item Periods for paths between two chiral zeroes, or two anti-chiral zeroes, do not flow.
\item The period for a path $\gamma$ from a chiral zero to an anti-chiral zero flows by
\begin{equation} 
\dpa{\log \mu} \int_\gamma \alpha = 2 \lambda. 
\end{equation} 
\end{enumerate} 
\begin{lemma} 
The RG flow given by the Beltrami differential \eqref{eqn:oneloopRG} is equivalent to the one described by the flow of periods. 
\end{lemma}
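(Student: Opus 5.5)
We compute how each period of $\alpha$ changes when the gluing map between $\Sigma_+$ and $\Sigma_-$ is modified to first order by the vector field $V = 2\lambda/\alpha$ on the overlap $\Sigma_+\cap\Sigma_-$.

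\textbf{Setup.} At leading order in $\lambda$, the deformed curve $\Sigma_\eps$ is obtained by gluing $U_+\subset\Sigma_+$ to $U_-\subset\Sigma_-$ via $\rho_\eps = e^{\eps V}$. The one-form is $\alpha$ on each patch, unchanged. This is consistent: the Lie derivative is
\begin{equation*}
\mc{L}_V\alpha = \d(\iota_V\alpha) + \iota_V \d\alpha = \d(2\lambda) = 0,
\end{equation*}
since $\iota_V\alpha = 2\lambda$ is constant and $\d\alpha=0$ (a holomorphic one-form on a curve). So $\rho_\eps^\ast\alpha=\alpha$, and $(\Sigma_\eps,\alpha)$ is a well-defined classical spectral curve. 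This is the $\lambda\to0$ shadow of the lemma that $\rho_\mu$ preserves $\nabla_+$. The zeroes and poles of $\alpha$ never lie in the overlap, so each sits rigidly inside $\Sigma_+$ or $\Sigma_-$.

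\textbf{Key step: how a period changes.} Take a path $\gamma$ on $\Sigma_\eps$. Split it into segments lying in $\Sigma_+$ or in $\Sigma_-$, joined at points of the overlap. Each time $\gamma$ crosses from the $+$ patch to the $-$ patch at a point $p\in U_+$, it continues in $\Sigma_-$ from $\rho_\eps(p)$ rather than from $p$. Relative to the undeformed curve, the integral of $\alpha$ therefore picks up the extra contribution
\begin{equation*}
\int_p^{e^{\eps V}p}\alpha = \eps\,\iota_V\alpha = 2\lambda\eps.
\end{equation*}
This holds exactly, since $\iota_V\alpha$ is constant along the flow line. A crossing from $-$ to $+$ contributes $-2\lambda\eps$. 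Hence the change in the period is $2\lambda\eps$ times the net number of $+\to-$ crossings.

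\textbf{Case analysis.} The net number of crossings is an intersection number of $\gamma$ with the cut separating $\Sigma_+$ from $\Sigma_-$.
\begin{itemize}
\item Closed cycles of $\Sigma$ and small loops around poles: a closed loop crosses the cut as often $+\to-$ as $-\to+$, so these periods are unchanged.
\item Paths between two chiral zeroes (both in $\Sigma_+$), or between two anti-chiral zeroes: the endpoints lie on the same side, so again the net crossing number is zero and the period does not flow.
\item Paths from a chiral zero to an anti-chiral zero: there is exactly one net $+\to-$ crossing, so $\partial_{\log\mu}\int_\gamma\alpha = 2\lambda$.
\end{itemize}
In the last case the identification of $\eps$ with $\log\mu$ comes from $\rho_\mu = e^{2\lambda V\log\mu}$ in Definition \ref{def:rgflow}. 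I need to check the normalization carefully: the overall factor of $2\lambda$ must be counted once, not twice, between $\rho_\mu$ and the Beltrami differential \eqref{eqn:oneloopRG}.

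\textbf{Converse: the periods determine the deformation.} Since the periods are local coordinates on the moduli of $(\Sigma,\alpha)$ together with the labelling of zeroes as chiral or anti-chiral, the two flows agree as vector fields.

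The main obstacle is making this last step rigorous: showing that a \v{C}ech class in $\breve H^1(T\Sigma)$ together with the accompanying variation of $\alpha$ is faithfully detected by relative periods. This is the standard period-coordinate statement for meromorphic differentials relative to their zeroes. One also has to handle the sign and orientation conventions for crossing the cut consistently.
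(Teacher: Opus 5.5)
Your argument is essentially the paper's proof. Since $\iota_V\alpha$ is constant, the regluing preserves $\alpha$, and a period changes only by the $\alpha$-length $2\lambda$ of the gap opened where the path passes from one patch to the other. The paper first shrinks $\Sigma_+$ to a disc around the chiral zeroes, so that closed cycles, pole loops and same-type paths lie in a single patch; you instead count signed crossings of a separating cut for an arbitrary cover, and you also spell out the period-coordinate step that the paper takes for granted.

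One caveat, which the paper's own proof shares, concerns the sign of each crossing. Continuing from $\rho_\eps(p)$ instead of $p$ omits the segment from $p$ to $\rho_\eps(p)$. So with the gluing $U_+\to U_-$ of Definition~\ref{def:rgflow}, each $+\to-$ crossing actually contributes $-2\lambda\eps$. This is consistent with the paper's PCM result $\delta_{RG}a=-\half\lambda$ for the period $\int_a^{-a}\phi\,\d z=4a$, so the sign of the $2\lambda$ is fixed only up to the orientation convention for the gluing.
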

\begin{proof}
We let $V$ be the meromorphic vector field $1/\alpha$ (abusing notation slightly; this is the $\lambda \to 0$ limit of what we called $V$ earlier). 

We can represent the \v{C}ech cocycle associated to $V$ using any open coveron which $V$ is regular on the double overlaps.  We can therefore choose $\Sigma_+$ to be a disc which contains all chiral zeroes, and assume that $\Sigma_+$ and $\Sigma_-$ intersect on an annulus. We can take a coordinte $z$ on $\Sigma_+$ so that $\Sigma_+$ is the region $\abs{z} < 1$ and $\Sigma_+ \cap \Sigma_-$ is the region $1-\eps < \abs{z} < 1$.  In this patch the one-form $\alpha$ is $\phi(z) \d z$, and $V = \phi(z)^{-1} \partial_z$.

The deformed surface is  obtained by cutting out the disc $\abs{z} < 1$ and regluing after applying the coordinate transformation $1 + 2 \lambda V$, working modulo $\lambda^2$. The one-form on the surface is the same on the two patches; since $\mc{L}_V \alpha = 0$ this one-form makes sense on the new surface.
 
Any closed cycle on $\Sigma$, or cycle around a pole of $\alpha$, can be assumed to be in $\Sigma_-$. Clearly, the period of this will not flow under the Beltrami differential associated to $V$.  Similarly, any period between anti-chiral zeroes will be in $\Sigma_-$ and will not flow. We can choose a basis  of  the linearly independent periods so that the periods between chiral zeroes lie in $\Sigma_+$ and also do not flow.

Now consider a period between a chiral and anti-chiral zero, along a path $\gamma(t)$ which at $0$ is at a chiral zero and at $2$ is at an anti-chiral zero. We need to compare the period of this path on the original surface $\Sigma$, where the two patches are glued by the identity, and on the surface where they are glued by the identity plus $2 \lambda V$.  

This path passes through the region $1 - \eps < \abs{z} < 1$  and by deforming the path we can assume that for $\gamma(t) = t$ for $1 - \eps \le t \le 1$.   Now let us apply the RG flow, which cuts out the region $\abs{z} < 1$ and reglues after applying the coordinate transformation $1 + 2 \lambda V$.  After this gluing procedure, the path $\gamma$ looks like  $\gamma(t) = t +2 \lambda \phi(t)^{-1}$ for $1 - \eps \le t < 1$,  but to it remains $\gamma(t)$ for $t > 1$. 

 This path is not connected. To make it connected we add an infinitesimal path from $1$ to $1 + 2\lambda \phi(1)^{-1}$.  The period on the new surface is the period on this new, connected path; we see that it differs from the period on the original surface by the period of the one-form along the infinitesimal path from $1$ to $1 + 2 \lambda \phi(1)^{-1}$. Since the one-form is $\phi(z) \d z$, the variation of the period is exactly $2 \lambda$, as desired. 
\end{proof}
\subsection{Line defects in $4d$ Chern-Simons}
Classically, $4d$ Chern-Simons has line defects which are labelled by representations of the Yangian. These defects live at points in $\Sigma$ and wrap a line in the two-dimensional space-time. These defects correspond to non-local conserved charges in integrable field theories.

In the planar theory, the position of these line defects flows under the RG flow.

The flow is very simple: we have seen that the planar spectral curve flows by a Beltrami differential represented, in the \v{C}ech description, by a vector field $2 \lambda V = \lambda/\what{\alpha}$ on $\Sigma_+ \cap \Sigma_-$.

We show that a line defect in the chiral patch flows by the vector field $\lambda V$, and in the anti-chiral patch flows by the vector field $- \lambda V$.

We can consider a purely chiral planar spectral curve, where $\Sigma = \Sigma_+$. This corresponds to a purely chiral two-dimensional field theory, and the line defects are integrable  Kondo defects \cite{Gaiotto:2020dhf}.  In this case, only the defect flows, not the spectral curve itself.  As the defect flows, it traces out a (parameterized) path in $\Sigma_+$, which we can describe in terms of the $\lambda$-connection on $\Sigma_+$. 

Given any embedded path $\gamma \subset \Sigma_+$, the connection $\lambda^{-1} \nabla_+$ on $T^\ast \Sigma_+$ restricts to a complex connection one on the cotangent bundle of $\gamma$.  
\begin{lemma} 
The RG trajectory of a line defect is a path on which $\lambda^{-1} \nabla_+$ restricts to a real connection on the cotangent line to the path. 
\end{lemma}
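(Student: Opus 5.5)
The plan is to reduce the statement to a computation in the frame of $K_{\Sigma_+}$ given by $\what{\alpha}$, where the connection is explicit. I take as given the claim stated just before the lemma: a line defect in the chiral patch moves under the RG flow by the vector field $\lambda V = \lambda/\what{\alpha}$. Writing $t = \log \mu$, the RG trajectory is then a path $\gamma(t)$ in $\Sigma_+$ with
\begin{equation*}
\dot\gamma(t) = \lambda V(\gamma(t)), \qquad \text{equivalently} \qquad \what{\alpha}(\dot\gamma(t)) = \lambda .
\end{equation*}
Throughout, $\lambda$ is taken to be real, which is needed for the word ``real'' in the statement to make sense.

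First I will make precise what the restriction means. The bundle $T^\ast\Sigma_+$ restricted to $\gamma$ is a complex line bundle on $\gamma$. It has a natural real sub-line-bundle, the real cotangent line of the path, consisting of covectors $\omega$ with $\omega(\dot\gamma) \in \R$. The holomorphic connection $\lambda^{-1}\nabla_+$ induces a connection along $\gamma$ by differentiating only in the direction $\dot\gamma$. Saying that this connection is \emph{real} means that its parallel transport preserves the real sub-line. Equivalently, in some nonvanishing real frame along $\gamma$, the connection form, evaluated on $\dot\gamma$, is real.

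Next comes the key step. By the first lemma of this section (the Bernoulli equation holds iff $\nabla = \lambda\partial + \what{\alpha}$ in the frame $\what{\alpha}$), in the frame $\what{\alpha}$ we have $\lambda^{-1}\nabla_+ = \partial + \lambda^{-1}\what{\alpha}$. Restricted to $\gamma$ and evaluated on $\dot\gamma$, the connection form is $\lambda^{-1}\what{\alpha}(\dot\gamma)$. Along an RG trajectory this equals $1$, so the connection is $\frac{d}{dt} + 1$. Moreover $\what{\alpha}(\dot\gamma) = \lambda \in \R$, so $\what{\alpha}|_\gamma$ is itself a real frame of the cotangent line of the path. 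Hence the connection is real in a real frame, which proves the forward direction.

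For the converse (if the lemma is read as a characterization), take a path $\gamma$ on which the restricted connection is real. Parallel transport then preserves the real line, so $\lambda^{-1}\what{\alpha}(\dot\gamma) + \dot\theta$ is real for the appropriate real frame $e^{i\theta}\what{\alpha}$. I would argue that this forces $\what{\alpha}(\dot\gamma)$ to have constant argument along $\gamma$; for this I expect to need the tangency condition at one point, fixed by the initial position of the defect. After a real reparametrization this gives $\dot\gamma \propto \lambda V$, so $\gamma$ is an RG trajectory up to parametrization.

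The main obstacle is this converse, together with the bookkeeping of what ``restricting a holomorphic connection to the cotangent line of a real curve'' means. A frame being real on $\gamma$ couples the phase of the frame to the direction $\dot\gamma$, and I have to check that the phase-rotation term $\dot\theta$ cannot compensate for an imaginary part of $\what{\alpha}(\dot\gamma)$. If this is troublesome, I will state the lemma only in the forward direction, which is what the RG interpretation actually requires.
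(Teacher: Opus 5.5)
Your argument for the stated direction is the paper's proof: along a trajectory of $\lambda V$ one has $\lambda^{-1}\what{\alpha}(\dot\gamma)=1$, and in the frame $\lambda^{-1}\what{\alpha}$ (the paper's choice, which is real on the path without assuming $\lambda\in\R$) the connection form of $\lambda^{-1}\nabla_+$ is $\lambda^{-1}\what{\alpha}$, hence real. The lemma asserts only this direction, and you are right not to rely on the converse, which fails without an initial tangency condition: in a flat coordinate $w$ for $\lambda^{-1}\nabla_+$ one has $\what{\alpha}=-\lambda\,\d w/(w-w_0)$, every straight line in $w$ has real restricted connection, but only the rays through $w_0$ are RG trajectories (with tangency to $\lambda V$ at one point, ODE uniqueness does recover the converse).
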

\begin{proof} 
 Since the path is a trajectory of $\lambda V$, the one form $\lambda^{-1} \what{\alpha}$ is real on the path. If we trivialize the cotangent bundle of the path using $\lambda^{-1} \what{\alpha}$, the connection $\lambda^{-1} \nabla_+$  has $1$-form $\lambda^{-1} \what{\alpha}$, which is real. 
\end{proof}

\subsection{Order defects}
$4d$ Chern-Simons has another kind of defect, called an order defect, where at a point $z \in \Sigma$ we introduce chiral or anti-chiral free fermions living in some representation of the gauge algebra. Since we are working with $GL(N)$ $4d$ CS, we will take fermions living in $N_f$ copies of the fundamental plus anti-fundamental.  

We find, as part of our holographic analysis, that a chiral order defect at a point $\Sigma_+$ introduces a pole in the the $\lambda$-connection $\nabla_+$ so that the monodromy of the connection $\lambda^{-1} \nabla_+$ around the defect is $e^{-2 \pi \i N_f / N}$.  Similarly, for an anti-chiral order defect at  a point $\Sigma_-$, the monodromy of the connection $-\lambda^{-1}\nabla_-$ is $e^{-2\pi \i \br{N}_f / N}$.  

This provides a cross-check of our normalizations. Let us consider $\Sigma = \CP^1$, $\alpha = \d z$, and let us introduce a single chiral order defect at $z = 0$.  Then, $\nabla_+$ becomes
	\begin{equation} 
\partial_z - \hbar \d \log z.
\end{equation}
Effectively, $\alpha = \d z  - \hbar \d \log z$.

We normalized our action to be (incorporating factors of $\hbar$) 
$-\frac{1}{8 \pi^2 \hbar \i}  \int \alpha CS(A)$.  Upon reduction to $3$-dimensions on the circle  $\abs{z} = 1$, with $\alpha = \d z - \hbar \d \log z$ this gives Chern-Simons theory at level $1$.  The boundary of the $3d$ space-time that corresponds to $z = 0$ has Neumann boundary conditions coupled to chiral free fermions. Neumann boundary conditions coupled to free fermions are consistent exactly at level $1$, because the free fermion algebra is the Kac-Moody algebra at level $1$.   

\subsection{Planar spectral curves with nodes}
	$4d$ Chern-Simons on Riemann surfaces with nodes appear when one studies the $\lambda$-deformation of integrable models \cite{Delduc:2019whp}. In this section we will briefly describe how one needs to modify the concept of planar spectral curve in the case of nodal curves. 
    
    A classical spectral curve with a node can be described as follows. If $\Sigma$ is the nodal surface, we let $\Sigma'$ be the normalization, obtained by cutting $\Sigma$ at the nodal point. We have two special points $p,q \in \Sigma'$ which are glued to obtain $\Sigma$.  

	The classical spectral curve structure on $\Sigma$ is given by a decomposition of $\Sigma$ into chiral and antichiral regions, as before;  and a one-form on the normalization $\Sigma'$ with first order poles at $p$ and at $q$, with opposite residue.  On the nodal surface $\Sigma$, a one-form like this is called a logarithmic one-form.  

	To describe planar spectral curves with nodes, we need only say how the connection $\nabla_+$ or $\nabla_-$ behave at the nodes. Let us assume that the points $p,q$ are in the chiral patch. Then, we ask that $\nabla_+$ is a connection, with a first-order pole at $p$ and $q$,  on the bundle of one-forms with first-order poles at $p$ and at $q$. In a local trivialization near $p$, we have
	\begin{equation} 
	\nabla_+ = \lambda\partial + a \d \log z + \text{ regular }. 
	\end{equation} 
	The coefficient $a$ of $\d \log z$ is independent of the frame, and we call it the residue of $\nabla_+$. We ask that the residues of $\nabla_+$ at $p$ and $q$ are equal and opposite.   

This definition is motivated by considering the limit of a smooth planar spectral curve as it develops a node.
	\section{Building planar spectral curves and anomaly inflow}
	In this section we will show how to build planar spectral curves.  

We will first show that, at a point when the classical one-form $\alpha$ has a zero, the quantized one-form $\what{\alpha}$ acquires a pole.  We view this as a kind of anomaly inflow. At a zero of $\alpha$, there are chiral (or anti-chiral) degrees of freedom, which will have a chiral anomaly.  The effective action $\int \what{\alpha} CS(A)$ fails to be gauge invariant when $\what{\alpha}$ has a pole, and this failure cancels the chiral anomaly. 

Fix any $\lambda$ connection $\nabla_+$ on $\Sigma_+$, and we let $\what{\alpha}$ be any solution to the Bernoulli equation.  
	\begin{lemma}
	Let $D \subset \Sigma$ be a disc which does not contain any poles of $\nabla$. Then, for $\lambda$ sufficiently small, the contour integral 
	\begin{equation} 
	\frac{1}{\lambda 2 \pi \i} \oint_{\partial D} \what{\alpha} 
	\end{equation}
is the number of zeroes of the original one form $\alpha$ inside the contour. 
	 	\end{lemma}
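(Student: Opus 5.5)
The plan is to work in a local coordinate on a neighbourhood of $\br{D}$ and use the explicit solution of Lemma \ref{lemma:uniquebernoulli}. Trivialize the canonical bundle near $\br{D}$ so that $\nabla_+ = \lambda\partial_z + \phi(z)\d z$, where $\phi$ is holomorphic on a neighbourhood of $\br{D}$ and, at $\lambda = 0$, $\alpha = \phi_0(z)\d z$. Write $\phi = \partial_z\gamma$ with $\gamma$ holomorphic; this is possible because the disc is simply connected. I will assume $\alpha$ has no zeroes on $\partial D$, since otherwise the count is ill defined.

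By Lemma \ref{lemma:uniquebernoulli}, any solution of the Bernoulli equation is $\what{\alpha} = -\lambda\, \d\log F$ with
\begin{equation*}
F(z) = C + \int_{z_0}^z e^{-\gamma(w)/\lambda}\,\d w .
\end{equation*}
Indeed, setting $g = -\lambda F'/F$ linearizes the Riccati equation, so every solution has this form with some constant $C$, and $F$ is holomorphic on the disc. Then
\begin{equation*}
\frac{1}{2\pi\i\lambda}\oint_{\partial D}\what{\alpha} = -\frac{1}{2\pi\i}\oint_{\partial D} \d\log F = -\#\{\text{zeroes of } F \text{ in } D\}
\end{equation*}
by the argument principle, provided $F$ has no zeroes on $\partial D$. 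The sign convention, and whether one counts zeroes or poles of $\what{\alpha}$, must be matched to the statement. Because $\what{\alpha} = -\lambda F'/F\,\d z$ and $F$ is holomorphic, the poles of $\what{\alpha}$ are exactly the zeroes of $F$, each with residue $-\lambda$ times its multiplicity. So the claim reduces to showing that $\what{\alpha}$ has exactly as many poles in $D$, counted with multiplicity, as $\alpha$ has zeroes. The sign will presumably be fixed by the orientation or normalization conventions, and I would state it up to this.

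The main obstacle is showing that the number of zeroes of $F$ in $D$ equals the number of zeroes of $\phi_0$ in $D$ for small $\lambda$. The heuristic is that away from zeroes of $\phi_0$, the asymptotic series solution gives $g = \phi + O(\lambda)$, which is regular and nonvanishing. The singular behaviour of $g$ is therefore localized near zeroes of $\phi_0$, that is, near saddle points of $e^{-\gamma/\lambda}$.

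First reduce to the case of a single simple zero $z_1$, by shrinking the contour to small circles around each zero. This is allowed because $\what{\alpha}$ is holomorphic on the region in between. On $\partial D$ the solution $\what{\alpha}$ is analytic and close to $\alpha$. Hence the integral $\frac{1}{2\pi\i\lambda}\oint\what{\alpha}$ is an integer, namely a count of poles, that is continuous in $\lambda$ along the family. It can therefore be computed in a model.

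Near a simple zero, rescale $z - z_1 = \sqrt{\lambda}\,u$. Then $\gamma/\lambda \approx \gamma(z_1)/\lambda + \tfrac12\phi_0'(z_1)u^2$, and the equation becomes the Riccati equation whose linearization is the error function. Here $F \approx C' + \int^u e^{-c u^2/2}\d u$. The solution that matches the asymptotic branch $g\sim\phi$ on both sides of the annulus has exactly one zero in the region $|u| \lesssim \lambda^{-1/2}\cdot(\text{radius})$. This follows from the argument principle applied to the erf-type function, using that $\mathrm{erf}$ on a large circle winds once relative to the constant, with Stokes sectors accounting for the transition. Equivalently, the winding of $g$ on a small circle around $z_1$ equals that of $\phi_0$, namely $1$, giving $\oint \what{\alpha}/(2\pi\i\lambda) = 1$ per simple zero. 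Higher-order zeroes are handled by perturbing to simple zeroes, or by the analogous Airy-type model. This local asymptotic matching is the step I expect to require the most care.
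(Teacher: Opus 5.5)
Your own argument-principle computation already shows that this route cannot prove the lemma, and the sign you set aside is not a convention. Suppose $\what{\alpha}$ is a solution on all of $D$. Then $\what{\alpha} = -\lambda\,\d\log F$ with $F$ holomorphic and $F' = e^{-\gamma/\lambda}$ nowhere zero, so $\frac{1}{2\pi\i\lambda}\oint_{\partial D}\what{\alpha} = -\#\{F = 0\} \le 0$. The lemma, and the paper's first-order check $g_1 = n/z$ (residue $+\lambda n$), assert a positive count with the same orientation. So the $\what{\alpha}$ of the lemma is not the restriction of a genuine solution on $D$. It is the series solution $g = \phi_0 + \lambda\phi_0'/\phi_0 + \dots$, which exists only near $\partial D$, where $\phi_0 \neq 0$.

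For the same reason your ``main obstacle'' step is false, not merely delicate. In the local model $F = C + \sqrt{\pi\lambda/2}\,\mathrm{erf}\bigl(z/\sqrt{2\lambda}\bigr)$, $F$ tends to the two different constants $C \pm \sqrt{\pi\lambda/2}$ in the sectors where $e^{-z^2/2\lambda}$ is recessive. There $g = -\lambda F'/F$ is exponentially small rather than $\sim\phi$, and no choice of $C$ matches the asymptotic branch all the way around the circle (Stokes phenomenon). Moreover, $\mathrm{erf}(u) = c$ has of order $r^2/\lambda$ solutions in a disc of radius $r$. They lie along the rays $\arg z = \pm\pi/4, \pm 3\pi/4$ and cross $\partial D$ as $\lambda$ varies. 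So the zero count of $F$ is neither $n$ nor locally constant in $\lambda$, and $\mathrm{erf}$ does not wind once on a large circle. Your final ``equivalently'' is also inconsistent with the preceding sentence: one zero of $F$ gives winding number $-1$ for $g = -\lambda F'/F$, not $+1$.

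The missing idea is to stay on a neighbourhood of $\partial D$ and use the change of frame. If $\what{\alpha} = g\,\d z$, the connection form of $\nabla$ in the frame $\what{\alpha}$ is $\phi\,\d z + \lambda\,\d\log g$, and the Bernoulli equation says this equals $\what{\alpha}$. Hence
$$\oint_{\partial D}\what{\alpha} = \oint_{\partial D}\phi\,\d z + \lambda\oint_{\partial D}\d\log g.$$
The first term vanishes because $\nabla$ has no poles in $D$. This is the paper's observation that $\lambda^{-1}\nabla$ is flat on $D$ with trivial monodromy, which forces $\frac{1}{2\pi\i\lambda}\oint\what{\alpha}$ to be an integer. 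The second term is $2\pi\i\lambda$ times the winding number of $g = \phi_0\,(1 + O(\lambda))$ along $\partial D$, which equals that of $\phi_0$, namely $n$. The paper reads off the same integer from the first-order term $g_1 = \phi_0'/\phi_0$. No global solution on $D$, no argument principle for $F$, and no local Stokes analysis is needed.
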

	\begin{proof}
	One way to check this is to use the integral expression for $\what{\alpha}$ and use standard results on the small-$\lambda$ asymptotics of such integrals.

	Here is a simpler argument. If we trivialize the canonical bundle using $\what{\alpha}$, then, in this frame,
\begin{equation} 
\nabla = \lambda \partial + \what{\alpha}. 
\end{equation} 
Recall that $\nabla$ is regular on $D$. The connection $\lambda^{-1} \nabla$ is a holomorphic connection, and therefore flat on $D$.  This implies that the monodromy of $\frac{1}{\lambda} \nabla$  on $\partial D$ is trivial.  This tells us that
	\begin{equation} 
	\exp \left( \frac{1}{\lambda} \oint_{\partial D} \what{\alpha}   \right) = 1 
	\end{equation} 
	Therefore, $\frac{1}{2 \pi \i \lambda} \oint \what{\alpha}$ is an integer, which can not change as we vary $\lambda$ unless a pole of $\what{\alpha}$ crosses the contour. We can therefore calculate this integer by working in series in $\lambda$.

	To check the value of the integer, it suffices to compute $\what{\alpha}$ to order $\lambda$ in coordinates. If $\nabla = \lambda \partial + z^n \d z$, then $\what{\alpha} = g(z) \d z$ where $g(z)$ satisfies
	\begin{equation} 
	\lambda \partial_z g(z) + z^n g(z) = g(z)^2. 
	\end{equation} 
	Writing $g(z) = g_0(z)+ \lambda g_1(z)$ and working modulo $\lambda^2$, we see that $g_0(z) = z^n$ and  
	\begin{equation} 
	 n z^{n-1}  =  z^n g_1(z)
	\end{equation} 
	so that $g_1(z) = n/z$. 

	\end{proof}
If we have an anti-chiral zero, then the residue of $-\frac{1}{\lambda 2\pi\i} \what{\alpha}$ will be the order of the zero.   

\subsection{Holographic analysis of chiral anomalies}
Because the total residue of $\what{\alpha}$ must vanish, there is a potential obstruction to building a planar spectral curve.  If $\Sigma$ is a closed surface, and the original one-form $\alpha$ has no poles,  then there must be the same number of chiral and anti-chiral defects (counted with multiplicity).      

From the perspective of the integrable field theory, this condition is required to cancel the anomalies associated with gauging chiral degrees of freedom.  Let us sketch why, in the case when there are only chiral defects, the effective two-dimensional theory is anomalous.

With only chiral defects, $4d$ Chern-Simons on $\R^2 \times \Sigma$ is a deformation \cite{Costello:2019tri} of holomorphic BF theory on $\C \times \Sigma$. Holomorphic BF theory has fields $A \in \Omega^{0,1}(\C \times \Sigma, \mf{g})$ and $B \in \Omega^{2,0}(\C \times \Sigma, \mf{g})$ with action
\begin{equation} 
\int \op{tr}(B \wedge F(A)) . 
\end{equation} 
The deformation is obtained by adding on the term $\alpha \op{tr}(A \partial A)$ and modifying the gauge transformations so that $\delta B = \alpha \partial \chi$ (where $\chi$ is the generator of gauge transformations).  If $w$ is a coordinate on $\C$, then $\frac{1}{\alpha} B$ can be identified with the $w$ component of the 4d Chern-Simons gauge field; note that it has poles at the zeroes of $\alpha$, which is what we impose for chiral disorder defects. 

The reason for introducing this reformulation of $4d$ Chern-Simons with disorder defects is to discuss the chiral anomaly in two dimensions.  If we just consider holomorphic BF theory, and not its deformation to $4d$ CS, one finds the effective $2d$ theory is a gauged $\beta-\gamma $ system. The zero mode of $A$ contributes spin zero bosonic fields $\gamma \in H^{0,1}(\Sigma,\mf{g})$, and the zero mode of $B$ contribute spin one bosonic fields $\beta \in H^{1,0}(\Sigma, \mf{g})$.  Before introducing the $4d$ CS deformation, the action is simply $\beta \dbar \gamma$, and we BRST reduce by the adjoint action of $\mf{g}$.

This system is anomalous unless $g = 1$. We have $g$ copies of the adjoint-valued $\beta-\gamma$ system, which has Kac-Moody level $g$ times the critical level (here $g$ is the genus of $\Sigma$).   The deformation which in four dimensions is $\alpha \op{tr}(A \partial A)$ gives, in two dimensions, a modification of the current generating the $\mf{g}$ action on the $\beta-\gamma$ system. The current is originally $J_c = \beta_a \gamma_b f_c^{ab}$; the deformation adds a term $\ip{\alpha,\partial \gamma_c}$ (where $\alpha$ is viewed as a linear function on $H^{0,1}(\Sigma)$). This modification does not change the BRST anomaly.

This BRST anomaly can be cancelled if we introduce fermionic fields, which will shift the level back to the critical level. Let us take $\mf{g} = \mf{sl}_N$. Then, we can cancel the anomaly by introducing $(g-1)$ adjoint valued (complex) chiral fermions $\psi_a, \psi^a$; or alternatively, by introducing $2(g-1)N$ fundamental complex chiral fermions $\psi_i, \psi^i$.  
 
Let us check that this anomaly cancellation can be seen holographically (this will serve as a cross-check on our normalizations and signs).  We have seen that having $N_f$ chiral fermions modifies the set-up so that $\nabla_+$, and $\what{\alpha}$, will have a pole with residue $-2 \pi \i \lambda\frac{N_f}{N}$.  Since, on a surface of genus $g$, a one form has $2g-2$ zeroes, before we introduce any fermions, the sum of the residues of $\what{\alpha}$ is $2 \pi \i \lambda (2g-2)$.  If we introduce also $(2g-2)N$ fundamental chiral fermions, the total residue will be zero and the anomaly is cancelled.  

\subsection{Behaviour of $\what{\alpha}$ at poles}
In most examples of interest, the original one-form $\alpha$ has poles, which will also contribute to the poles of $\what{\alpha}$ and affect the analysis of anomalies.  
  
	Let us consider the behaviour of $ \lambda$ connection $\nabla_{+}$ near a second order pole. There exists a coordinate $z$ near such a pole, and a trivialization of the bundle of forms with second order poles, so that $\nabla_{+}$ takes the form
	\begin{equation} 
	\nabla_{\pm} =  \lambda \partial + z^{-2} \d z + c(\lambda) z^{-1} \d z + \text{ regular terms }. 
	\end{equation}
	It is important to note that $c(\lambda)$ is independent of gauge choice. Different gauges are different trivializations of the bundle of one-forms with second order poles at $z = 0$; therefore gauge transformations are holomorphic functions of $z$ with no pole at $z = 0$.  We say the \emph{level} of $\nabla_{+}$ at a second order pole is $\frac{1}{\hbar} c(\lambda)$, where we recall that $\lambda = \hbar N$. 

We use the term level, because, if we reduce to $3$ dimensions on a circle surrounding the pole, this is the level of $3d$ Chern-Simons theory.

	For a $-\lambda$ connection, we make the same definition except reversing the sign.  	\begin{lemma} 
	If $\nabla_+$ has level $k$ at a second order pole, then $\what{\alpha}$ has residue $2 \pi \i \hbar k$ at this pole.  
	\end{lemma}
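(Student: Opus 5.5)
The plan is to use the frame-change description of the Bernoulli equation from the first lemma of Section 2: $\what{\alpha}$ satisfies the Bernoulli equation if and only if, in the frame given by $\what{\alpha}$ itself, the connection one-form of $\nabla_+$ is $\what{\alpha}$. I will compare this frame with the gauge in which
\[
\nabla_+ = \lambda\partial + z^{-2}\d z + c(\lambda) z^{-1}\d z + \text{regular}.
\]
Then the residue of $\what{\alpha}$ equals the residue of the connection form, corrected by the residue of a logarithmic derivative of the transition function.

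First I fix the local frame $s$ of $K_{\Sigma}(2D)$ near the pole in which $\nabla_+$ has the displayed form. Write $\phi\,\d z = (z^{-2} + c z^{-1} + \dots)\d z$ for its connection form. Write $\what{\alpha} = g\, s$ with $g$ a function, in series in $\lambda$. Changing frame from $s$ to $g s$ shifts the connection form by $\lambda\, \d\log g$. The Bernoulli equation therefore becomes the identity of meromorphic one-forms
\[
\what{\alpha} = \phi\,\d z + \lambda\, \d \log g .
\]
Taking $\frac{1}{2\pi\i}\oint$ around a small circle about $z=0$ gives
\[
\operatorname{Res}_0 \what{\alpha} = 2\pi \i\,\big( c(\lambda) + \lambda\, \op{ord}_0(g) \big).
\]
Since $c(\lambda) = \hbar k$ by definition of the level, the claim follows once I show $\op{ord}_0(g)=0$, i.e. that $g$ is a unit at $z=0$ order by order in $\lambda$. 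The sign conventions for the $-\lambda$-connection $\nabla_-$ go through identically, using the reversed-sign definition of level.

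The main step is therefore showing that $g$ is a holomorphic unit at $z=0$ to all orders in $\lambda$. At $\lambda=0$ we have $\what{\alpha}=\alpha$, which has an exact second-order pole, so $g_0$ is a nonvanishing holomorphic function near $0$. For the higher orders I will use the recursive solution from the proof of Lemma~\ref{lemma:uniquebernoulli}. Write the Bernoulli equation as $\lambda\, \d g + g\,\phi\,\d z = g^2\, \sigma$, where $\sigma$ is the meromorphic one-form with $s\otimes s = \sigma\otimes s$; here $\sigma$ has the same leading double pole as $\phi\,\d z$. Expanding $g=\sum \lambda^n g_n$, the order-$\lambda^n$ equation has the form
\[
(2g_0\sigma - \phi\,\d z)\, g_n = \d g_{n-1} + (\text{products of lower } g_j)\,\sigma .
\]
The coefficient $2g_0\sigma - \phi\,\d z$ has an exact double pole with nonzero leading coefficient. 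The right-hand side has at most a double pole. Dividing, $g_n$ is holomorphic at $0$. Because $g_0(0)\ne 0$, the series $g$ is then a unit in $\mathcal{O}_0[[\lambda]]$, so $\op{ord}_0 g = 0$ and $\d\log g$ is regular.

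I expect this regularity check to be the main obstacle. The worry is that the Riccati recursion could generate a $\lambda/z$ term at the pole, as it does at a zero of $\alpha$ (compare the earlier computation giving $g_1 = n/z$). The key difference is that at a pole the coefficient multiplying $g_n$ has maximal pole order, so the division is harmless. Once this is in place, the residue identity above finishes the proof.
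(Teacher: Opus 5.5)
Your proposal is correct and is essentially the paper's proof. The paper frames $K_\Sigma(2D)$ by $\what{\alpha}$ itself, so the connection form is $\what{\alpha}$ and its residue is the frame-independent coefficient $c(\lambda)=\hbar k$; in doing so it takes for granted that $\what{\alpha}$ has an exact second-order pole, which is exactly what your unit check on $g$ supplies. That check is lighter than you expect: in formal series $\lambda\,\d\log g-\lambda\,\d\log g_0$ is exact, so only $g_0=\alpha/s$ being a unit (immediate from the double pole of $\alpha$) affects the residue, and your Riccati recursion is not needed.
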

	\begin{proof} 
	Recall that $\nabla_+$ is a connection, with a second order pole at a point $p$, on the bundle of one-forms with a second order pole $p$.  Further, $\what{\alpha}$ is a one-form with a second-order pole at $p$. If we frame this bundle using $\what{\alpha}$, then $\nabla_+  = \lambda \partial + \what{\alpha}$. This immediately implies that the residue of $\what{\alpha}$ is $2 \pi \i \hbar k$. 
	\end{proof}
\subsection{Building planar spectral curves}
	Now suppose we have a classical spectral curve $(\Sigma, \Sigma_+, \Sigma_-,\alpha)$. Suppose that $\Sigma_+ \cap \Sigma_-$ is an annulus. Let us upgrade $\alpha$ to a $\lambda$-connection on $\Sigma_+$ and a $-\lambda$-connection on $\Sigma_-$.  

	We say the \emph{total level} on $\Sigma_+$ is the sum of:
	\begin{enumerate} 
	\item $N$ times the number of zeroes of $\alpha$ on $\Sigma_+$ counted with multiplicity.
	\item The levels at each second order pole of $\nabla_+$.  
	\end{enumerate} 
	\begin{lemma} 
	Suppose that the total level on $\Sigma_+$ and on $\Sigma_-$ agree. Let $\what{\alpha}_{\pm}$ be the solutions to the Bernoulli equation coming from $\nabla_{\pm}$. 

	Then there exists a coordinate transformation 
	\begin{equation} 
	\phi : \Sigma_+ \cap \Sigma_- \to \Sigma_+ \cap \Sigma_- 
	\end{equation}
	so that $\phi^\ast \what{\alpha}_- = \what{\alpha}_+$. Further, we can assume that $\phi$ is the identity modulo $\lambda$. 
	\end{lemma}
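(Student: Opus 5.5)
Throughout, work in formal power series in $\lambda$ and build $\phi$ order by order. The idea is to compare two nowhere-vanishing holomorphic one-forms on the annulus $A = \Sigma_+ \cap \Sigma_-$. On an annulus, a holomorphic one-form without zeroes can be straightened, up to a holomorphic change of coordinates, to a standard form. The only invariant is its period around the core circle. So the plan has two steps: show that $\what{\alpha}_+$ and $\what{\alpha}_-$ have the same period around the core circle of $A$, and then construct $\phi$ explicitly from primitives.

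\textbf{Step 1: equality of periods.} Let $C$ be the core circle of $A$, oriented as the boundary of $\Sigma_+$. The period $\oint_C \what{\alpha}_+$ is computed by summing the residues of $\what{\alpha}_+$ inside $\Sigma_+$, using the two earlier lemmas:
\begin{itemize}
\item each zero of $\alpha$ of order $n$ contributes $2\pi \i \lambda n = 2\pi\i\hbar N n$;
\item each second order pole of level $k$ contributes $2\pi \i \hbar k$.
\end{itemize}
Hence $\oint_C \what{\alpha}_+ = 2\pi\i\hbar\,(\text{total level on }\Sigma_+)$.

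On $\Sigma_-$ the earlier remark gives the opposite signs: anti-chiral zeroes contribute $-2\pi\i\lambda n$, and the level is defined with reversed sign. Since $C$ carries the opposite orientation as the boundary of $\Sigma_-$, the same count gives $\oint_C \what{\alpha}_- = 2\pi\i\hbar\,(\text{total level on }\Sigma_-)$, measured with the $\Sigma_+$ orientation.

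The hypothesis that the total levels agree therefore gives $\oint_C \what{\alpha}_+ = \oint_C \what{\alpha}_-$. The sign bookkeeping here is the step I expect to be the main obstacle, and I would check it against the PCM example. An alternative route is to note that at $\lambda=0$ both sides equal $\oint_C\alpha$ and compare only the order-$\lambda$ corrections.

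\textbf{Step 2: construct $\phi$.} Both forms reduce to $\alpha$ modulo $\lambda$, and $\alpha$ has no zeroes on $A$. Hence, possibly after shrinking $A$ slightly, $\what{\alpha}_\pm$ are nowhere vanishing on $A$ as formal series.

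Pass to the universal cover $\til A$ and choose primitives $F_\pm$ with $\d F_\pm = \what{\alpha}_\pm$. Since the periods agree, each primitive shifts by the same constant $P$ under the deck transformation. Fix base points so that $F_+$ and $F_-$ agree modulo $\lambda$; both then reduce to the primitive $F_0$ of $\alpha$.

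Now define $\phi$ by the equation $F_-\circ\phi = F_+$. Concretely, write $\phi = \exp(\xi)$ with $\xi$ a formal vector field that is $O(\lambda)$, and solve order by order. At each order the condition is $\iota_{\xi_k}\alpha = (\text{known function})$, which is solved uniquely by $\xi_k = (\text{known})/\alpha$, because $\alpha$ has no zeroes on $A$.

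The remaining point is that $\phi$ descends from $\til A$ to $A$. The known functions in the recursion are differences of primitives. Because the periods match, both primitives shift by the same $P$ under the deck transformation, so these differences are invariant and hence single-valued on $A$. Therefore each $\xi_k$ is a genuine vector field on $A$, and $\phi$ descends.

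Differentiating $F_-\circ\phi = F_+$ gives $\phi^\ast\what{\alpha}_- = \what{\alpha}_+$. Since $\xi = O(\lambda)$, $\phi$ is the identity modulo $\lambda$.
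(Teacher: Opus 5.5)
Your proof is correct and follows essentially the same route as the paper. Both arguments first show that the periods of $\what{\alpha}_\pm$ around the core circle agree, using the same residue and level count with the same sign bookkeeping, and then correct $\phi$ order by order by vector fields of the form $(\text{function})/\alpha$. Your device of solving $F_-\circ\phi=F_+$ for primitives on the universal cover repackages the paper's inductive step, where the discrepancy $\til{\phi}_n^\ast\what{\alpha}_- - \what{\alpha}_+=\lambda^n\partial\gamma$ is exact because the classes agree and is removed by $W=-\gamma/\alpha$; your deck-invariance argument simply makes explicit why $\gamma$ is single-valued.
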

	This means that if we define a new surface $\what{\Sigma}$, which has patches glued $\Sigma_{\pm}$ glued by $\phi$, then $\what{\Sigma}$ is a planar spectral curve. 
	\begin{proof} 
	We assumed that $\Sigma_+ \cap \Sigma_-$ is an annulus. Let $C \in \Sigma_+ \cap \Sigma_-$ be the circle in the annulus. If $k$ is the total level on $\Sigma_+$, then
	\begin{equation} 
	\oint_C \what{\alpha}_+ = 2 \pi \i \lambda k.  
	\end{equation}
	Here we give $C$ the orientation it receives as the boundary of $\Sigma_+$.  Similarly, 
	\begin{equation} 
	\oint_{\br{C}} \what{\alpha}_- = - 2 \pi \lambda k. 
	\end{equation}
	Here $\br{C}$ indicates the opposite orientation, and we have the opposite sign on the right hand side because $\nabla_-$ is a $-\lambda$ connection.    
	Therefore,  $\oint_C \what{\alpha}_+  =\oint_C \what{\alpha}_-$, so that $\alpha_{\pm}$ are cohomologous.

	This implies that we can build the desired $\phi$. By induction, suppose we have build $\phi_{< n}$ modulo $\lambda^n$ so that $\phi_{< n}^\ast \what{\alpha}_- = \what{\alpha}_+$ modulo $\lambda^n$.  Let us arbitrarily extend $\phi_{< n}$ to a transformation $\til{\phi}_n$ defined modulo $\lambda^{n+1}$. 

	Since $\what{\alpha}_{\pm}$ are cohomologous, and $\til{\phi}_n$ is the identity modulo $\lambda$, $\til{\phi}_n^\ast \what{\alpha}_-$ and $\what{\alpha}_+$ remain cohomologous.  Therefore
	\begin{equation} 
	\til{\phi}_n^\ast \what{\alpha}_- = \what{\alpha}_+ + \lambda^n \partial \gamma \mod \lambda^{n+1}. 
	\end{equation}
	for some function $\gamma$.  We can correct $\til{\phi}_n$ to
	\begin{equation} 
	\phi_n = \til{\phi}_n + \lambda^n W 
	\end{equation}
	where $W$ is the vector field
	\begin{equation} 
	W = - \frac{\gamma}{\alpha}  
	\end{equation}
	on $\Sigma_+ \cap \Sigma_-$. Since $\alpha = \what{\alpha}_-$ modulo $\lambda$, we see that
	\begin{equation} 
	\lambda^n \mc{L}_W \what{\alpha}_- = -\lambda^n \partial \gamma \mod \lambda^{n+1} 
	\end{equation} 
	so that
	\begin{equation} 
	\phi_n^\ast \what{\alpha}_- = \what{\alpha}_+ 
	\end{equation}
	and we can continue the induction. 
	\end{proof} 
	Note that the $\phi$ we have built is not unique: if $V = 1/\what{\alpha}_-$, then we can compose $\phi$ with the exponential of $V f(\lambda)$ for any function of $\lambda$.  This ambiguity is related to the choice of scale, since $V$ acts as the RG flow.  

	Now, in the case $\Sigma$ is of genus $0$, the surface $\what{\Sigma}$ glued from $\phi$ must also be of genus $0$.  This means that in this case, instead of modifying the surface, we can change the $\lambda$-connections $\nabla_{\pm}$ so that the Bernoulli equation holds.  We will give an explicit algorithm for how to modify $\nabla_{\pm}$ shortly.  More conceptually, we can see how  to do this using the Birkhoff decomposition.  We can write our change of coordinates $\phi$ as $\phi = (\phi_+)^{-1} \phi_-$ where $\phi_+$ extends across $\Sigma_+$ and $\phi_-$ extends across $\Sigma_-$. Then, we have
	\begin{equation} 
	\phi_-^\ast \what{\alpha}_- = \phi_+^\ast \what{\alpha}_+.  
	\end{equation}  
	The form $\phi_{+}^\ast \what{\alpha}_+$ solves the Bernoulli equation for $\phi_+^\ast \nabla_+$, and similarly $\phi_-^\ast \what{\alpha}_-$ solves the equation for $\phi_-^\ast \nabla_-$.  It follows that $(\phi_+^\ast \nabla_+, \phi_-^\ast \nabla_-)$ define a planar spectral curve, since their solutions to the Bernoulli equation agree.

	\section{Explicit computations in genus $0$} \label{sec:genus0}
In this section we will compute some explicit planar spectral curves and two-loop RG flows for genus zero surfaces. 
	
	\subsection{Planar spectral curves for theories with chiral and anti-chiral disorder defects}
	We will start by considering the $4d$ Chern-Simons model from a one form with $n+1$ second order poles and $2n$ zeroes.  We will place one of the second order poles at $\infty$; the remaining poles are $p_i$ and the zeroes are at $q_j$. We divide $\CP^1$ into a chiral patch $D_+$ and an anti-chiral patch $D_-$, both of which are discs. We assume $D_-$ contains $\infty$ and that there are no poles or zeroes on $D_+ \cap D_-$.    The poles and zeroes in $D_{\pm}$ will be denoted $p_i^{\pm}$ and $q_i^{\pm}$. The one-form is
	\begin{equation} 
	\alpha = \d z \phi(z) =  \d z \frac{\prod (z-q_i^+)(z-q_j^-) } { \prod (z - p_k^+)^2 (z - p_l^-)^2 } 
	\end{equation}
	
	On the chiral patch we will have a $\lambda$-connection with second order poles at $p_k^+$, on the bundle of one-forms with second order poles at $p_k^+$.

	If we trivialize this bundle on the whole chiral patch $D_+$, using  $\frac{\d z}{\prod (z-p_k^+)^2}$, the $\lambda$-connection will be of the form
	\begin{equation} 
	\nabla_+ = \lambda \partial + \alpha + \lambda \alpha^+_1 + \dots  
	\end{equation}
The forms $\alpha^+_i$ appearing at each order are rational one-forms on $\CP^1$, with second-order poles at $p_i^+$, and which are regular at $q_i^+$.  Because $\nabla_+$ is only defined on the disc $D_+$, the $\alpha^+_i$ corrections can (and will) have poles at $q_i^-$ and $p_i^-$ of arbitrary order.

If, instead, we trivialize using the one-form $\d z$, this expression will change by adding terms $\sum 2 \lambda (z - p_k+)^{-1} \d z$ with first order poles at the $p_k^+$.  Since we allow the connection one-form to have second order poles at the $p_k^+$, we can just absorb these terms into the one-loop ambiguity $\alpha^+_1$. In what follows we will trivialize the canonical bundle using $\d z$ to keep formulae simple.

On the anti-chiral patch we can also use the trivialization $\d z$  to get a $-\lambda$-connection 
	\begin{equation} 
	\nabla_- =  - \lambda \partial + \alpha + \lambda \alpha^-_1 + \dots  
	\end{equation} 
    Here it is important that one of the second-order poles in the original one form $\alpha$ is at $\infty$. Recall that $\nabla_-$ is a connection on the bundle of one-forms with second order poles at the $p_i^-$ and at $\infty$, and $\d z$ is a regular section of this bundle.
    
	In order for this data to define a planar spectral curve, we need the unique solutions $\what{\alpha}_{\pm}$ to the Bernoulli equation for $\nabla_{\pm}$ to agree.

	The statement that the solutions to the Bernoulli equation for $\nabla_{\pm}$ agree implies that we can express $\nabla_-$ in terms of $\nabla_+$: 
	\begin{equation} 
	\nabla_- = - \nabla_+ + 2 \what{\alpha}_+. 
	\end{equation}
	Since $\nabla_-$ must be regular on $D_-$ except for second order poles at $p_i^-$, this equation constrains the poles that can appear in $\what{\alpha}_+$ and in $\nabla_+$ on $D_-$.   

	Our strategy to build the planar spectral curve is to choose counter-terms $\alpha_i^+$ in the definition of $\nabla_+$, so that, if we then find the unique solution to the Bernoulli equation $\what{\alpha}^+$ built from these counter-terms, the connection $\nabla_- = - \nabla_+ + 2\what{\alpha}^+$ is regular at $q_i^-$.

Let us write an explicit algorithm to implement this strategy. Since we have trivialized the canonical bundle using $\d z$, we can write one-forms as scalar functions. Let us introduce some notation for these scalar functions: 
\begin{equation}
\begin{split}
\nabla_+ &= \lambda \partial + \phi (z) \d z + \sum_{i \ge 1} \lambda^i \phi_i^+ \d z \\
\what{\alpha} &= \what{\phi}(z) \d z = \sum \lambda^i \what{\phi}_i (z) \\
\nabla_- &= -\lambda \partial + \phi (z) \d z+ \sum_{i \ge 1} \lambda^i \phi_i^- \d z
\end{split}
\end{equation}
We have the following constraints on the poles of these scalar functions:
\begin{enumerate}
\item The $\phi_i^+$ are regular at the points $q_i^+$ and can have second order poles at the points $p_i^+$.  The behaviour of the $\phi_i^+$ on the disc $D_-$ is unconstrained: in practice they will have poles at  $q_i^-$, $p_i^-$ and at $\infty$.
\item The $\phi_i^-$ are regular at $q_i^-$, $\infty$ and can have second order poles at $p_i^-$. They can have arbitrary poles at $q_i^+$ and $p_i^+$.
\item The $\what{\phi}_i$ can have second order poles at $p_i^+$ and $p_j^-$, and arbitrary poles at $q_i^{\pm}$. Further $\what{\phi}_i$ is regular at $\infty$.   
\end{enumerate}
Further, these functions satisfy some equations which come from the Bernoulli equation and the identity that $\nabla_- = - \nabla_+ + 2 \what{\alpha}$.   This second identity simply tells us that
\begin{equation}
\phi_i^-(z) = 2 \what{\phi}_i(z) - \phi_i^+(z). 
\end{equation}

The Bernoulli equation, at order $n$ in $\lambda$, tells us that
\begin{equation}
\partial_z \what{\phi}_{n-1} + \phi(z) \what{\phi}_n(z) + \sum_{i = 1}^n \phi_i^+(z) \what{\phi}_{n-i}(z) = \sum_{r = 0}^n \what{\phi}_r(z) \what{\phi}_{n-r}(z).
\end{equation}
At order $0$, the Bernoulli equation tells us that 
\begin{equation}
 \what{\phi}_0(z) = \phi(z)
\end{equation}
Using this, we can slightly rewrite the Bernoulli equation to give an expression for $\what{\phi}_n(z)$:
\begin{equation}
 \what{\phi}_n(z) = \phi(z)^{-1} \partial_z \what{\phi}_{n-1}(z) + \phi_n^+(z) + \sum_{i = 1}^{n-1}  \phi(z)^{-1}  ( \phi_i^+(z) - \what{\phi}_i(z) )\what{\phi}_{n-i}(z) \label{equation:bernoulli_inductive}
\end{equation}
Note that the right hand side does not contain $\what{\phi}_n(z)$, so that this gives an inductive way to determine $\what{\phi}_n^+(z)$ from $\phi_i^+(z)$.    We will repeatedly use the simple expression for $\what{\phi}_1(z)$:
\begin{equation}
\what{\phi}_1(z) = \partial_z \log \phi(z) + \phi_1^+(z). \label{equation_bernoulli1storder}
\end{equation}
From this we find a more complicated expression for $\what{\phi}_2(z)$:
\begin{equation}
\what{\phi}_2 (z) = \phi(z)^{-1} \partial_z \phi_1^+(z) + \phi(z)^{-1} \partial_z^2 \log \phi(z)   + \phi_2^+(z) - \phi(z)^{-1} (\partial_z \log \phi(z) ) ( \partial_z \log \phi(z) + \phi_1^+(z) ). \label{equation_bernoulli2ndorder}
\end{equation}

The strategy to build the planar spectral curve is the following.
\begin{enumerate}
\item We suppose we have built a planar spectral curve to order $n-1$, so that we have functions $\phi_k^{\pm}$ and $\what{\phi}_k$ satisfying the equations above and the  constraints on their poles, for $k < n$.
\item We use the Bernoulli equation in the form \ref{equation:bernoulli_inductive} to determine $\what{\phi}_n$, as a sum of $\phi_n^+$ and quantities already determined.   
\item Then we write
\begin{equation}
\begin{split}
\phi_n^- &= - \phi_n^+ + 2 \what{\phi}_n \\
&= \phi_n^+(z) +  2\phi(z)^{-1} \partial_z \what{\phi}_{n-1}(z) + 2\sum_{i = 1}^{n-1}  \phi(z)^{-1}  ( \phi_i^+(z) - \what{\phi}_i(z) )\what{\phi}_{n-i}(z) 
\end{split}
\end{equation}
The expression for $\phi_n^-$ may have poles at $q_i^-$, which is not allowed.  We choose $\phi_n^+$ to cancel these poles, using the fact that $\phi_n^+$ is unconstrained on $D_-$ to achieve this (while not introducing any poles in $\phi_n^+$ in the region $D_+$ except second order poles at $p_i^+$).
\end{enumerate}
We can see by induction that at every step, $\what{\phi}_n$ only has second order poles at $p_i^{\pm}$ and is regular at $\infty$. The order of poles of $\what{\phi}_n$ at $q_i^{\pm}$ can grow with $n$. 

We also see that at each order, there is some freedom to choose $\phi_n^+$.  We are allowed to add to $\phi_n^+$ an expression which is regular at the $q_i^-$, $q_j^+$, $\infty$ and has second order poles at $p_i^{\pm}$.  In other words, the poles in the ambiguity in $\phi_n^+$ at each order are exactly the poles in our original twist function $\phi(z)$. This means that this ambiguity is that of adding on a finite counter-term, and reflects a scheme choice.  
 
 Now let us implement this algorithm working modulo $\lambda^2$. We have
 \begin{equation}
 \phi(z)  = \frac{\prod (z-q_i^+)(z-q_j^-) } { \prod (z - p_k^+)^2 (z - p_l^-)^2 } 
 \end{equation}
 From equation \ref{equation_bernoulli1storder} we have
 \begin{equation}
 \what{\phi}_1(z) = \phi_1^+(z) + \sum \partial_z \log (z - q_i^+) + \sum \partial_z \log (z - q_j^-) -2 \sum \partial_z \log (z - p_i^+) - 2 \sum \partial_z \log (z - p_j^-).
 \end{equation}
 We have 
 \begin{equation}
 \phi_1^-(z) = \phi_1^+(z) +2 \sum \partial_z \log (z - q_i^+) + 2\sum \partial_z \log (z - q_j^-) -4\sum \partial_z \log (z - p_i^+) - 4 \sum \partial_z \log (z - p_j^-).
 \end{equation}
 Our constraints on $\phi_1^-(z)$ mean it must  be regular at $q_j^-$.  To achieve this, we set
 \begin{equation}
\phi_1^+(z) = - 2 \sum \partial_z \log (z - q_j^-)   + 2 \sum \partial_z\log (z - p_j^+)  + 2 \sum \partial_z \log (z - p_j^-).
 \end{equation}
 The terms with $\partial_z \log (z - q_j^-)$ are necessary to cancel the poles at $q_j^-$; the other terms are a scheme choice.  

 We end up with
	\begin{equation} 
	\begin{split} 
	\nabla_+ &= \lambda \partial + \phi(z) \d z   - 2 \lambda \sum \partial \log (z-q_j^-)  +  2\lambda  \sum  \partial \log (z - p_i^+) + 2\lambda  \sum  \partial \log (z - p_i^-)  \\ 
	\nabla_- &=   -\lambda \partial + \phi(z) \d z + 2 \lambda \sum \partial \log     (z-q_i^+)  - 2 \lambda \sum  \partial \log ( z - p_j^+ ) - 2 \lambda \sum \partial \log (z-p_j^-) \\  
	 \what{\alpha} =& \phi(z) \d z + \lambda \sum \partial \log (z - q_j^+) - \sum \lambda \partial \log (z - q_j^- )     
	\end{split} 
	\end{equation}
	Note that the expressions are symmetric under sending $\lambda \to - \lambda$, and switching the roles of chiral and anti-chiral quantitites. Note also we are continuing to write the connections $\nabla_{\pm}$ in the frame given by $\d z$.  

    One can continue to run the algorithm to determine $\phi_2^+(z)$ and $\what{\phi}_2(z)$, but the expressions quickly become very complicated and not particularly illuminating. 
	\subsection{Computing the $\beta$-function}
We have given an expression for the RG flow of a genus zero planar spectral curve.  As before, we let $V = \frac{1}{\what{\alpha}}$ and let us write $V= V_+ + V_-$, where $V_+$ is regular on $D_+$ and $V_-$ is regular on $D_-$.  Then, under the RG flow, 
	\begin{equation} 
	\begin{split} 
	\delta_{RG} \nabla_+ &= -2 \lambda\mc{L}_{V_+} \nabla_+\\ 
	 \delta_{RG} \what{\alpha} &= -2 \lambda\mc{L}_{V_+} \what{\alpha} = 2 \lambda \mc{L}_{V_-} \what{\alpha} \\
	 \delta_{RG} \nabla_- &= 2 \lambda\mc{L}_{V_-} \nabla_-\\
	\end{split} 
	\end{equation}

	Now let us apply this to the example above. As before, we 
	\begin{equation} 
	\alpha = \phi(z) \d z 
	\end{equation} 
	where as before
	\begin{equation} 
	\phi(z) =  \frac{\prod (z-q_i^+)(z-q_j^-) } { \prod (z - p_k^+)^2 (z - p_l^-)^2 }  
	\end{equation}
	We gave an algorithm to determine 
	\begin{equation} 
	\what{\alpha} = \what{\phi}(z) \d z 
	\end{equation}
    up to an ambiguity at each order in $\lambda$ of a finite counter-term (i.e. a scheme choice). We saw that 
	\begin{equation} 
	 \what{\phi} = \phi + \lambda \sum (z-q_i^+)^{-1} - \lambda\sum (z-q_j^-)^{-1} +   O(\lambda^2).
	\end{equation} 
	Then
	\begin{equation} 
	V = \frac{1}{\what{\phi}} \partial_z 
	\end{equation}
	To compute the RG flow, we need to compute the positive part $V_+$ of $V$.  In practice, it is easiest to determine $V_+$ by an explicit partial fractions decomposition, but we can also write a contour integral expression:
	\begin{equation} 
	V_+ = \oint_w \frac{\d w}{2 \pi \i (w-z)} \frac{1}{\what{\phi}(w)} \partial_z  
	\end{equation}
	where the integral is over a contour on the boundary of $D_+$, and $z \in D_+$. This means that the contour encloses the points $z$, $q_i^+$ and $p_i^+$, but not the points $q_i^-$ and $p_i^-$.  

	Note that there is some ambiguity in the choice of $V_+$, because there are regular holomorphic vector fields on $\CP^1$. Different choices are related by global coordinate transformations. 

	We will study how the RG flow varies $\what{\alpha}$, or equivalently $\what{\phi}$. This encodes the information of the RG flow of $\nabla_{\pm}$, but it is easier to compute because $\what{\alpha}$ transforms as a one-form and not a connection. 

	\begin{equation}
	\begin{split} 
	 \delta_{RG} \what{\alpha} &= -2\lambda \mc{L}_{V_+} \what{\alpha} \\
	&= -2 \lambda \d z \partial_z   \oint_w \frac{1}{2 \pi \i  (w - z)} \d w \frac{\what{\phi}(z)}{\what{\phi}(w)}  
	\end{split} 
	\end{equation}
	This is of course equivalent to
	\begin{equation}
	 \delta_{RG}\what{\phi}(z) 
	= -2 \lambda  \partial_z   \oint_w \frac{1}{2 \pi \i  (w - z)} \d w\frac{\what{\phi}(z)}{\what{\phi}(w) } 
	\end{equation}
	Note that the pole at $w = z$ does not contribute, because the residue is independent of $z$. Only the poles at $w = q_i^+$ contribute, so we have
	\begin{equation}
	 \delta_{RG}\what{\phi}(z) 
	= -2 \lambda  \partial_z \sum_j \oint_{\abs{w - q_j^+} = \eps} \frac{1}{2 \pi \i  (w - z)} \d w\frac{\what{\phi}(z)}{\what{\phi}(w) } 
	\end{equation}
This expression is valid to all orders in $\lambda$. However, we will only implement it to compute the two-loop RG flow, because $\what{\phi}(z)$ becomes complicated at second order in $\lambda$. 

\subsection{RG flow and periods}
In \cite{Derryberry:2021rne} a conjecture was presented for the one-loop RG flow in terms of the periods. The conjecture stated that 
\begin{equation}
\delta_{RG} \int_{q_i^+}^{q_j^-} \phi(z) \d z \propto \lambda
\end{equation}
where the constant of proportionality is universal and depends on the overall normalization of the Chern-Simons action.  Here we will verify that our contour integral expression for the one-loop RG flow satisfies this property.  (The conjecture also states that periods between two chiral, or two anti-chiral, zeroes does not flow; this will be easy to verify in our context).

At one loop, we have the contour integral expression
\begin{equation}
	 \delta_{RG}\phi(z) 
	= -2 \lambda  \partial_z \sum_j \oint_{\abs{w - q_j^+} = \eps} \frac{1}{2 \pi \i  (w - z)} \d w\frac{\phi(z)}{\what{\phi}(w) } 
	\end{equation}
Now let us consider the flow of the period of $\phi$, from $q_j^+$ to $q_i^-$.  The flow of $\phi$ is a sum of contributions from chiral zeroes $q_j^+$. In the flow of the period from $q_j^+$ to $q_i^-$, only the contribution of $\delta_{RG} \phi$ from $q_j^+$ will appear. 

Because $\delta_{RG} \phi$ is a total derivative, we only get contributions from the end points of the integral from $q_j^+$ to $q_i^-$, giving
\begin{equation}
\delta_{RG} \int_{q_j^+}^{q_i^-} \phi(z) \d z = - 2 \lambda \oint_{\abs{w - q_j^+} = \eps} \frac{1}{2 \pi \i  (w - q_i^-)} \d w\frac{\phi(q_i^-)}{{\phi}(w) } + 2 \lambda \lim_{z \to q_j^+} \oint_{\abs{w - q_j^+} = \eps} \frac{1}{2 \pi \i  (w - z)} \d w\frac{\phi(z)}{{\phi}(w) } 
\end{equation}
Since $\phi(q_i^-) = 0$, this is
\begin{equation}
\delta_{RG} \int_{q_j^+}^{q_i^-} \phi(z) \d z =  2 \lambda \lim_{z \to q_j^+} \oint_{\abs{w - q_j^+} = \eps} \frac{1}{2 \pi \i  (w - z)} \d w\frac{\phi(z)}{{\phi}(w) } 
\end{equation}
On the right hand side, the contour integral over $w$ encloses $q_j^+$ but \emph{does not} enclose $z$. This is why we have written the expression in a way where we first perform the integral over $w$ and then take the limit as $z \to q_j^+$.  If, instead, we chose a contour which enclosed both $z$ and $q_j^+$, we would get zero, because in the limit as $z$ approaches $q_j^+$, $\phi(z)$ becomes zero.

Therefore, the total variation of the period is given by the difference between these two contours, which is 
\begin{equation}
\begin{split}
\delta_{RG} \int_{q_j^+}^{q_i^-} \phi(z) \d z &= - 2 \lambda \lim_{z \to q_j^+} \oint_{\abs{w - z} = \eps} \frac{1}{2 \pi \i  (w - z)} \d w\frac{\phi(z)}{{\phi}(w) }\\
&= 2 \lambda
\end{split}
\end{equation}
which is exactly what we expected. 

It is easy to show that the periods from $q_j^+$ to $q_i^+$ and from $q_j^-$ to $q_i^-$ do not flow. 

\subsection{The PCM}

	Now let us specialize to the case of the principal chiral model. Then, $q_{\pm} = \pm a$, and the poles in the one-form are at $0$ and $\infty$, so that 
	\begin{equation} 
	\phi=  \frac{a^2-z^2}{z^2}  
	\end{equation}
	To check normalizations, we will compare with the conventions of \cite{Lacroix:2025ias}, Appendix D  where $\phi = h \frac{1-z^2}{z^2}$ and $h$ is the coupling of the PCM. After sending $z \mapsto h z$, the one-form $z^{-2}(h^2 - z^2) \d z$ becomes $h z^{-2}(1-z^2) \d z$.  

In \cite{Lacroix:2025ias} the action comes with an overall factor of $\frac{1}{16\pi^2 \i}$. Here, we have an overall factor of $-\frac{1}{8 \pi^2 \i \hbar}$.  Therefore, the $a$ we use and the $h$ from \cite{Lacroix:2025ias} are related by $h = -2 \hbar^{-1} a$.  The  RG flow  involves the dual Coxeter number $c_G$, which for $\mf{gl}_N$ is $N$. In \cite{Hoare:2019mcc} the exact two-loop RG flow is 
\begin{equation} 
 \delta_{RG} h = c_G + \half h^{-1} c_G^2.  
\end{equation}
Because $h = - 2 \hbar^{-1} a$, we expect
\begin{equation} 
\delta_{RG} a = -\half \hbar N + \tfrac{1}{8} \hbar^2 N^2 = -\half \lambda + \tfrac{1}{8} \lambda^2. 
\end{equation}
This will be exactly what we find.

It is interesting to note that in this example the two-loop RG flow is entirely planar. This is because the coefficient of the two-loop flow is $c_G^2$ which, for $U(N)$, is $N^2$, and sub-planar terms would have smaller powers of $N$. 

In the notation of the previous section, the chiral patch $D_+$ contains the double pole $p^+=0$ and the
chiral zero $q^+=a$, while the anti-chiral patch $D_-$ contains the anti-chiral zero $q^-=-a$ and the double
pole at $\infty$. Running the algorithm to first order in $\lambda$ we find
\begin{equation}
\begin{split}
\nabla_+&=\lambda\partial+\phi(z)\,\d z+2\lambda\Big(\frac{1}{z}-\frac{1}{z+a}\Big)\d z\,,\\
\nabla_-&=-\lambda\partial+\phi(z)\,\d z+2\lambda\Big(\frac{1}{z-a}-\frac{1}{z}\Big)\d z\,,
\end{split}
\end{equation}
which are regular at $q^+=a$ and at $q^-=-a$ respectively. The term $-2\lambda\,\partial_z\log(z+a)$ in
$\nabla_+$ is needed to cancel the pole of $\nabla_-$ at $q^-=-a$. We have also added the term
$2\lambda\,\partial_z\log z$. This is the scheme choice of the previous section: it makes the residues of
$\what{\phi}$ at the double poles $0$ and $\infty$ equal to those of $\phi$, so that no level is generated,
and the flow of $a$ we find below does not depend on it. The solution of the Bernoulli equation is then
$\what{\alpha}=\what{\phi}(z)\,\d z$ with
\begin{equation}
\what{\phi} = \frac{a^2-z^2}{z^2} +  \lambda   (z-a)^{-1} - \lambda  (z+a)^{-1} +\mathcal{O}(\lambda^2)
=  \frac{a^2-z^2}{z^2} - \lambda \frac{2a}{a^2-z^2} +\mathcal{O}(\lambda^2)\,,
\end{equation}
so that
\begin{equation}
\frac{1}{\what{\phi}(z)} = \frac{z^2}{a^2-z^2} + 2\lambda\frac{z^4 a}{(a^2-z^2)^3} +\mathcal{O}(\lambda^2)\,.
\end{equation}
To compute the RG flow we need $V_+$. Since $V=V_++V_-$ with $V_+$ regular on $D_+$ and $V_-$ regular on
$D_-$, it is enough to compute the partial fractions decomposition of $1/\what{\phi}$. Then $V_+$ is the sum of
the terms with poles at the anti-chiral zero $-a$, which lies in $D_-$, up to a constant. (In the contour
integral of the previous section, the residue at the chiral zero $a\in D_+$ removes exactly the terms with
poles at $a$.) Using a computer algebra system we find the partial fractions decomposition
\begin{equation}
\begin{split}
\frac{z^2}{a^2-z^2} &= \frac{a}{2 (z + a)} - \frac{a}{2 (z - a)} - 1\,, \\
2 \frac{z^4 a }{(a^2-z^2)^3} &= \frac{3}{8 (z + a)} -\frac{ 5 a}{8 (z + a)^2} + \frac{a^2}{4 (z + a)^3}
- \frac{3}{8 (z - a)} - \frac{5 a}{8 (z - a)^2} - \frac{a^2}{4 (z - a)^3}\,,
\end{split}
\end{equation}
from which we conclude that $V_+= g(z) \partial_z$ with
from which we conclude that $V_+= g(z) \partial_z$ with
\begin{equation}
g(z) = \frac{a}{2 (z+a)} + \gamma_0 + \lambda\left[\frac{3}{8 (z + a)} - \frac{5 a}{8 (z + a)^2}
+ \frac{a^2}{4 (z + a)^3} + \gamma_1\right].
\end{equation}
We are free to add a constant $\gamma_0+\lambda\gamma_1$ to $g$, as this corresponds to adding the global
vector field $\partial_z$ to $V_+$. We will determine convenient choices for $\gamma_0$ and $\gamma_1$ shortly.

The RG flow $\delta_{RG}\what{\alpha}=-2\lambda\mc{L}_{V_+}\what{\alpha}$ of the previous section is
\begin{equation}
\delta_{RG}\what{\phi}=-2\lambda\,\partial_z\big(g\what{\phi}\big)\,.
\end{equation}
Near the origin
\begin{equation}
\what{\phi}=\frac{a^2}{z^2}+O(1)\,,
\end{equation}
so that, expanding in series in $z^{-1}$, we have
\begin{equation}
\label{ec:ghatphipcm}
-2\lambda\,\partial_z\big(g\what{\phi}\big)=\frac{4\lambda a^2\,g(0)}{z^3}+\frac{2\lambda a^2\,g'(0)}{z^2}+O(1)\,,
\end{equation}
while, since $a$ is the only parameter,
\begin{equation}
\label{ec:deltaphipcm}
\delta_{RG}\what{\phi}=\delta_{RG}a\,\partial_a\what{\phi}=\frac{2a\,\delta_{RG}a}{z^2}+O(1)\,.
\end{equation}
Comparing \eqref{ec:ghatphipcm} and \eqref{ec:deltaphipcm} term by term, the triple pole gives $g(0)=0$,
which leads to the choice
\begin{equation}
\gamma_0=-\half\,,\qquad \gamma_1=0\,,\qquad\text{so that}\qquad
g(z)=-\frac{z}{2(z+a)}+\lambda\,\frac{z(3z+a)}{8(z+a)^3}\,.
\end{equation}
(This choice of $\gamma_0$, $\gamma_1$ is just to make the formulae simpler, different choices are related by
a coordinate change on $\mathbb{CP}^1$.) The double pole gives
\begin{equation}
\delta_{RG}a=\lambda a\,g'(0)=-\half\lambda+\tfrac18\lambda^2a^{-1}\,,
\end{equation}
exactly as expected. The terms regular at the origin, which have poles at $\pm a$, then agree automatically.

As a vector field on the space of coupling constants, the RG flow is $-\half \lambda \partial_a + \tfrac{1}{8}\lambda^2 a^{-1} \partial_a$. This is scheme independent: any change of coordinates $a \mapsto f(a)$ which would absorb the two-loop term would involve $\log a$. In perturbative QFT, $a^{-1}$ is the coupling and one only performs algebraic changes of coordinates.

\subsection{The PCM with Wess-Zumino term}
	Next we add a Wess-Zumino term. We keep $q_\pm=\pm a$ and $p_-=\infty$, and move the chiral pole to $p_+=b$, so that
	\begin{equation}
	\phi = \frac{a^2-z^2}{(z-b)^2}\,.
	\end{equation}
	The residue of $\phi$ at $z=b$ is $-2b$: this is the coefficient of the WZ term, and $b$ is proportional to the level. With the normalizations of the previous section, $h=-2\hbar^{-1}a$ is again the PCM coupling and the level is $k=-2\hbar^{-1}b$. The two-loop RG flow of the PCM with WZ term \cite{Hoare:2019mcc}, written for $h$ at fixed $k$, is
	\begin{equation}
	\delta_{RG} h = c_G\Big(1-\frac{k^2}{h^2}\Big)\Big[1+\frac{c_G}{2h}\Big(1-\frac{3k^2}{h^2}\Big)\Big]\,,\qquad \delta_{RG}k=0\,.
	\end{equation}
	Because $h=-2\hbar^{-1}a$ and $k=-2\hbar^{-1}b$, we expect
	\begin{equation}
    \begin{split}
	\delta_{RG} a &= -\tfrac{1}{2}\lambda\Big(1-\frac{b^2}{a^2}\Big) + \tfrac{1}{8}\lambda^2\, a^{-1}\Big(1-\frac{b^2}{a^2}\Big)\Big(1-\frac{3b^2}{a^2}\Big)\,,
    \\ 
    \delta_{RG} b &= 0\,,
    \end{split}
	\end{equation}
	with $\lambda=\hbar N$. This will be exactly what we find.

	As in the PCM, the chiral patch $D_+$ contains the double pole $p^+=b$ and the chiral zero $q^+=a$, while
	$D_-$ contains the anti-chiral zero $q^-=-a$ and the double pole at $\infty$. Running the algorithm to first
	order in $\lambda$ we find
	\begin{equation}
	\begin{split}
	\nabla_+&=\lambda\partial+\phi(z)\,\d z+2\lambda\Big(\frac{1}{z-b}-\frac{1}{z+a}\Big)\d z\,,\\
	\nabla_-&=-\lambda\partial+\phi(z)\,\d z+2\lambda\Big(\frac{1}{z-a}-\frac{1}{z-b}\Big)\d z\,,
	\end{split}
	\end{equation}
	which are regular at $q^+=a$ and at $q^-=-a$ respectively. As in the PCM, the term $-2\lambda\,\partial_z\log(z+a)$
	is needed to cancel the pole of $\nabla_-$ at $q^-$, and the term $2\lambda\,\partial_z\log(z-b)$ is the scheme
	choice that keeps the residue of $\what{\phi}$ at the double pole $b$ equal to that of $\phi$, so that the level is
	not shifted. The solution of the Bernoulli equation is then $\what{\alpha}=\what{\phi}\,\d z$, and we have
	\begin{equation}
	\what{\phi} = \frac{a^2-z^2}{(z-b)^2} + \lambda (z-a)^{-1} - \lambda (z+a)^{-1} = \frac{a^2-z^2}{(z-b)^2} - \lambda\frac{2a}{a^2-z^2}
	\end{equation}
	and
	\begin{equation}
	\frac{1}{\what{\phi}(z)} = \frac{(z-b)^2}{a^2-z^2} + 2\lambda\frac{a\,(z-b)^4}{(a^2-z^2)^3}
	\end{equation}
	(working modulo $\lambda^2$). As in the PCM, $V_+$ is the sum of the terms of the partial fractions decomposition
	of $1/\what{\phi}$ with poles at the anti-chiral zero $-a$, up to a constant. The partial fractions decompositions are
	\begin{equation}
	\begin{split}
	\frac{(z-b)^2}{a^2-z^2} &= \frac{(a+b)^2}{2a(z+a)} - \frac{(a-b)^2}{2a(z-a)} - 1\,,\\
	2\frac{a\,(z-b)^4}{(a^2-z^2)^3} &= \frac{(a+b)^4}{4a^2(z+a)^3} - \frac{(a+b)^3(5a-3b)}{8a^3(z+a)^2} + \frac{3(a^2-b^2)^2}{8a^4(z+a)}\\
	&\quad - \frac{(a-b)^4}{4a^2(z-a)^3} - \frac{(a-b)^3(5a+3b)}{8a^3(z-a)^2} - \frac{3(a^2-b^2)^2}{8a^4(z-a)}\,,
	\end{split}
	\end{equation}
	from which we conclude that $V_+=g(z)\partial_z$ with
	\begin{equation}
	g(z) = \frac{(a+b)^2}{2a(z+a)} + \gamma_0 + \lambda\left[\frac{(a+b)^4}{4a^2(z+a)^3} - \frac{(a+b)^3(5a-3b)}{8a^3(z+a)^2} + \frac{3(a^2-b^2)^2}{8a^4(z+a)} + \gamma_1\right].
	\end{equation}
	We are again free to add a constant $\gamma_0+\lambda\gamma_1$ to $g$, as this will correspond to adding a global vector field $\partial_z$ to $V_+$.  We will determine convenient choices for $\gamma_0$ and $\gamma_1$ shortly.  If we used different choices for $\gamma_0, \gamma_1$, we would modify the flow of $\what{\phi}$ by adding a term proportional to $\partial_z \what{\phi}$, which of course does nothing. 
    
    The RG flow $\delta_{RG}\what{\alpha}=-2\lambda\mc{L}_{V_+}\what{\alpha}$ will be, as before,
    \begin{equation}
    \delta_{RG} \what{\phi} = - 2\lambda \partial_z (g \what{\phi} )\,.
    \end{equation}
 Expanding in series in $(z-b)^{-1}$, with $O(1)$ denoting terms regular at $z=b$, we have
	\begin{equation}
	\label{ec:ghatphipcmwz}
	-2\lambda\,\partial_z\big(g\what{\phi}\big)=\frac{4\lambda(a^2-b^2)\,g(b)}{(z-b)^3}+\frac{2\lambda\big[(a^2-b^2)\,g'(b)-2b\,g(b)\big]}{(z-b)^2}+O(1)\,,
	\end{equation}
	while
	\begin{equation}
	\label{ec:deltaphipcmwz}
	\delta_{RG} a\,\partial_a\what{\phi}+\delta_{RG} b\,\partial_b\what{\phi}=\frac{2(a^2-b^2)\,\delta_{RG} b}{(z-b)^3}+\frac{2a\,\delta_{RG} a-4b\,\delta_{RG} b}{(z-b)^2}-\frac{2\,\delta_{RG} b}{z-b}+O(1)\,.
	\end{equation}
	We insert both expansions into the RG flow equation
	\begin{equation}
	\delta_{RG} \what{\phi} = (\delta_{RG} a )\,\partial_a\what{\phi}+(\delta_{RG} b)\,\partial_b\what{\phi}=-2\lambda\,\partial_z\big(g\what{\phi}\big)\,.
	\end{equation}
	Comparing \eqref{ec:ghatphipcmwz} and \eqref{ec:deltaphipcmwz} term by term, the simple pole gives $\delta_{RG} b=0$. Matching the triple pole term gives $g(b)=0$, which leads to the choice for $\gamma_0$, $\gamma_1$
	\begin{equation}
	\gamma_0 = -\frac{a+b}{2a}\,,\qquad \gamma_1 = \frac{3b\,(a^2-b^2)}{8a^4}\,.
	\end{equation}
    (This choice of $\gamma_0$, $\gamma_1$ is just to make the formulae simpler, different choices are related by a coordinate change on $\CP^1$). 
    
     The double pole term gives
	\begin{equation}
	\delta_{RG} a=\frac{\lambda\,(a^2-b^2)}{a}\,g'(b)= -\frac{1}{2}\lambda\Big(1-\frac{b^2}{a^2}\Big) + \lambda^2\,\frac{(a^2-b^2)(a^2-3b^2)}{8a^5}\,,
	\end{equation}
	exactly as expected. For $b=0$ it reduces to the PCM, and both terms vanish at the WZW points $b=\pm a$.

\subsection{The PCM with Wess-Zumino term coupled to a WZW model}

Finally, we consider the case of a WZW model coupled to the PCM with WZ term. We keep a double pole at infinity, put the other
double pole at the origin, and add a simple pole at $z_2$:
\begin{equation}
\phi=-\frac{(z-\zeta_1^+)(z-\zeta_1^-)(z-\zeta_2^-)}{z^2\,(z-z_2)}\,.
\end{equation}
It has one chiral zero $\zeta_1^+$ and two anti chiral zeroes $\zeta_1^-,\zeta_2^-$. It is the limit of the
one form of the integrable $G\times G$ model \cite{Delduc:2018hty,Delduc:2020vxy} in which a chiral zero
collides with one of the double poles, which becomes simple. The residues of $\phi$ at $0$ and $z_2$ are $-k_1$
and $-k_2$, with
\begin{equation}
k_1=\frac{\zeta_1^+\zeta_1^-\zeta_2^--z_2\big(\zeta_1^+\zeta_1^-+\zeta_1^+\zeta_2^-+\zeta_1^-\zeta_2^-\big)}{z_2^2}\,,\qquad
k_2=\frac{(z_2-\zeta_1^+)(z_2-\zeta_1^-)(z_2-\zeta_2^-)}{z_2^2}\,.
\end{equation}
Compactifying $4d$ Chern-Simons theory along $\mathbb{CP}^1$ with this one form gives the Lagrangian
\cite{Costello:2019tri,Levine:2021fof}
\begin{equation}
\mathcal L=\big[S\,\mathcal L_{\rm PCM}(g)+K_1\,\mathcal L_{\rm WZ}(g)\big]
-K_2\big[\mathcal L_{\rm PCM}(\tilde g)+\mathcal L_{\rm WZ}(\tilde g)\big]
+T\,\mathrm{Tr}\big[J_+(g)\,K_-(\tilde g)\big]\,,
\end{equation}
with $J=g^{-1}\mathrm{d} g$ and $K=\mathrm{d}\tilde g\,\tilde g^{-1}$. This is a PCM with WZ term for the field $g$, coupled to a WZW for the field $\tilde g$ through
a current term. Its couplings are $S=\hbar^{-1}s$, $T=\hbar^{-1}t$ and $K_i=\hbar^{-1}k_i$, where\footnote{The relation between these couplings and the zeroes and poles of $\phi$ is obtained from
eq.~(3.2) of \cite{Delduc:2020vxy} in this limit, with $s=-4\rho_{11}$, $t=-2\rho_{12}$ and $k_i=2\Bbbk_i$.}
\begin{equation}
s=k_1+\frac{2\zeta_1^-\zeta_2^-}{z_2}\,,\qquad
t=-\frac{\zeta_1^+(z_2-\zeta_1^-)(z_2-\zeta_2^-)}{z_2^2}\,.
\end{equation}

In \cite{Levine:2021fof} the two loop RG flow of this model was computed for the couplings
$(S,T,K_1,K_2)$ in terms of the dual Coxeter number $c_G=N$. Because
$S=\hbar^{-1}s$, $T=\hbar^{-1}t$ and $K_i=\hbar^{-1}k_i$, in terms of $\lambda = \hbar N$, these are given by $\delta_{RG}k_1=\delta_{RG}k_2=0$ and
\begin{equation}
\label{ec:rgflowcoupleds}
\begin{split}
\delta_{RG}s={}&\frac{\lambda\,(s-k_1)}{(k_2s+t^2)^2}\Big[k_2^2(k_1+s)+4k_2t^2-2t^3\Big]+\frac{\lambda^2(s-k_1)}{2(k_2s+t^2)^5}\Big[2k_2t^5(38t^2-11k_1t+2s^2+41st)\\
&+2k_2^3t^3(9k_1^2-8k_1s-42k_1t-5s^2+18st)\quad-2k_2^2t^4(s^2+48st+28t^2-7k_1s-46k_1t)\\
& \quad+k_2^5(k_1+s)(s^2-3k_1^2)+2k_2^4t^2(3k_1+2s)(3s-5k_1)-4t^7(5s+6t)\Big]\,,
\end{split}
\end{equation}
\begin{equation}
\label{ec:rgflowcoupledt}
\begin{split}
\delta_{RG}t={}&\frac{\lambda\,t(t-k_2)}{(k_2s+t^2)^2}\Big[k_2(k_1-s)+2t(s+t)\Big]+\frac{\lambda^2\,t(t-k_2)}{2(k_2s+t^2)^5}\Big[4t^5\big(t(4t-k_1)+5s^2+10st\big)\\
&\quad-k_2^4(s-k_1)(s^2-3k_1^2)+2k_2^2t^2\big(s^2(28t-3k_1)+2st(13t-16k_1)+6k_1t(k_1-4t)+3s^3\big)\\
&\quad-2k_2t^3\big(2s^3+31s^2t+45st^2+10t^3-k_1t(13s+19t)\big)-2k_2^3t(s-k_1)\big(5st-3k_1(s+4t)\big)\Big]\,.
\end{split}
\end{equation}
We will reproduce these expressions from our holographic algorithm.

In the notation of the general section, the chiral patch $D_+$ contains the double pole at the origin, the
simple pole $z_2$ and the chiral zero $\zeta_1^+$, while $D_-$ contains the anti-chiral zeroes
$\zeta_1^-,\zeta_2^-$ and the double pole at $\infty$. Running the algorithm to first order in $\lambda$ we find
\begin{equation}
\begin{split}
\nabla_+&=\lambda\partial+\phi(z)\,\mathrm{d}z+2\lambda\Big(\frac{1}{z}+\frac{1}{z-z_2}-\frac{1}{z-\zeta_1^-}
-\frac{1}{z-\zeta_2^-}\Big)\mathrm{d}z\,,\\
\nabla_-&=-\lambda\partial+\phi(z)\,\mathrm{d}z+2\lambda\Big(\frac{1}{z-\zeta_1^+}-\frac{1}{z}\Big)\mathrm{d}z\,,
\end{split}
\end{equation}
which are regular at $\zeta_1^+$ and at $\zeta_1^-,\zeta_2^-$ respectively. As in the previous examples, the terms
$-2\lambda\,\partial_z\log(z-\zeta_j^-)$ are needed to cancel the poles of $\nabla_-$ at the anti-chiral zeroes,
and $2\lambda\,\partial_z\log z$ is the scheme choice that keeps the level at the origin. Since the general
section assumed only double poles, we obtain these expressions as the limit $\zeta_2^+\to z_2$ of those of the
$G\times G$ model (i.e. before colliding $\zeta_2^+$ with $ z_2$). The order $\lambda$ terms of $\nabla_+$ do not depend on the chiral zeroes, so they are
unchanged, including the scheme term $2\lambda\,\partial_z\log(z-z_2)$ of the double pole at $z_2$.
In $\nabla_-$ the terms $2\lambda\,\partial_z\log(z-\zeta_2^+)$ and $-2\lambda\,\partial_z\log(z-z_2)$ cancel,
while in $\what{\phi}$ the term $\lambda/(z-\zeta_2^+)$ becomes $\lambda/(z-z_2)$, which keeps the residue of
$\what{\phi}$ at $\infty$ equal to that of $\phi$. The solution of the
Bernoulli equation is then $\what{\alpha}=\what{\phi}\,\mathrm{d}z$ with
\begin{equation}
\what{\phi}=\phi+\lambda\left[\frac{1}{z-\zeta_1^+}+\frac{1}{z-z_2}-\frac{1}{z-\zeta_1^-}-\frac{1}{z-\zeta_2^-}\right]+\mathcal{O}(\lambda^2)\,,
\end{equation}
and
\begin{equation}
\frac{1}{\what{\phi}(z)}=-\frac{z^2(z-z_2)}{(z-\zeta_1^+)(z-\zeta_1^-)(z-\zeta_2^-)}
-\lambda\,\frac{N(z)\,z^4\,(z-z_2)}{(z-\zeta_1^+)^3(z-\zeta_1^-)^3(z-\zeta_2^-)^3}+\mathcal{O}(\lambda^2)
\end{equation}
with
\begin{equation}
N(z)=(z_2+\zeta_1^+-\zeta_1^--\zeta_2^-)\,z^2+2(\zeta_1^-\zeta_2^--z_2\zeta_1^+)\,z
+z_2\zeta_1^+(\zeta_1^-+\zeta_2^-)-\zeta_1^-\zeta_2^-(z_2+\zeta_1^+)\,.
\end{equation}

As before, $V_+$ is the sum of the terms of the partial fractions decomposition of $1/\what{\phi}$ with
poles at the anti-chiral zeroes $\zeta_1^-,\zeta_2^-$, which lie in $D_-$, up to a constant. The partial
fractions decompositions are
\begin{equation}
\begin{split}
&-\frac{z^2(z-z_2)}{(z-\zeta_1^+)(z-\zeta_1^-)(z-\zeta_2^-)}=-1
+\frac{(\zeta_1^+)^2(z_2-\zeta_1^+)}{(\zeta_1^+-\zeta_1^-)(\zeta_1^+-\zeta_2^-)\,(z-\zeta_1^+)}\\
&\qquad+\frac{(\zeta_1^-)^2(z_2-\zeta_1^-)}{(\zeta_1^--\zeta_1^+)(\zeta_1^--\zeta_2^-)\,(z-\zeta_1^-)}
+\frac{(\zeta_2^-)^2(z_2-\zeta_2^-)}{(\zeta_2^--\zeta_1^+)(\zeta_2^--\zeta_1^-)\,(z-\zeta_2^-)}\,,\\[4pt]
&-\frac{N(z)\,z^4\,(z-z_2)}{(z-\zeta_1^+)^3(z-\zeta_1^-)^3(z-\zeta_2^-)^3}
=\frac{(\zeta_1^-)^4(z_2-\zeta_1^-)^2}{(\zeta_1^--\zeta_1^+)^2(\zeta_1^--\zeta_2^-)^2\,(z-\zeta_1^-)^3}\\
&\qquad-\frac{(\zeta_1^-)^3(z_2-\zeta_1^-)\,X_2}{(\zeta_1^--\zeta_1^+)^3(\zeta_1^--\zeta_2^-)^3\,(z-\zeta_1^-)^2}
+\frac{2\zeta_1^+(\zeta_1^-)^2(z_2-\zeta_1^-)\,X_1}{(\zeta_1^--\zeta_1^+)^4(\zeta_1^--\zeta_2^-)^3\,(z-\zeta_1^-)}\\
&\qquad+\big(\zeta_1^-\leftrightarrow\zeta_2^-\big)
-\frac{(\zeta_1^+)^4(z_2-\zeta_1^+)^2}{Q^2\,(z-\zeta_1^+)^3}
+\frac{(\zeta_1^+)^3(z_2-\zeta_1^+)\,Y_2}{Q^3\,(z-\zeta_1^+)^2}
-\frac{2(\zeta_1^+)^2\,Y_1}{Q^4\,(z-\zeta_1^+)}\,,
\end{split}
\end{equation}
where $(\zeta_1^-\leftrightarrow\zeta_2^-)$ denotes the previous three terms with $\zeta_1^-$ and $\zeta_2^-$
exchanged, and where we have introduced
\begin{equation}
\begin{split}
Q&=(\zeta_1^+-\zeta_1^-)(\zeta_1^+-\zeta_2^-)\,,\qquad e_1=\zeta_1^-+\zeta_2^-\,,\qquad e_2=\zeta_1^-\zeta_2^-\,,\\
X_2&=z_2\big(3\zeta_1^-\zeta_1^++\zeta_1^-\zeta_2^--4\zeta_1^+\zeta_2^-\big)
+\zeta_1^-\big((\zeta_1^-)^2-4\zeta_1^-\zeta_1^+-2\zeta_1^-\zeta_2^-+5\zeta_1^+\zeta_2^-\big)\,,\\
X_1&=z_2\big(2(\zeta_1^-)^2+\zeta_1^-\zeta_1^+-3\zeta_1^+\zeta_2^-\big)
-\zeta_1^-\big(3\zeta_1^-\zeta_1^++2\zeta_1^-\zeta_2^--5\zeta_1^+\zeta_2^-\big)\,,\\
Y_2&=z_2\big(2(\zeta_1^+)^2+e_1\zeta_1^+-4e_2\big)+\zeta_1^+\big((\zeta_1^+)^2-4e_1\zeta_1^++7e_2\big)\,,\\
Y_1&=z_2^2\big((\zeta_1^+)^4+2e_1(\zeta_1^+)^3-8e_2(\zeta_1^+)^2+3e_2^2\big)\\
&\quad-2z_2\,\zeta_1^+\big(2e_1(\zeta_1^+)^3+(e_1^2-6e_2)(\zeta_1^+)^2-4e_1e_2\,\zeta_1^++6e_2^2\big)\\
&\quad+(\zeta_1^+)^2\big((3e_1^2-2e_2)(\zeta_1^+)^2-10e_1e_2\,\zeta_1^++10e_2^2\big)\,. 
\end{split}
\end{equation}
We can use the partial fraction decomposition to write $V_+=g(z)\partial_z$ with
\begin{equation}
\begin{split}
&g(z)=\frac{(\zeta_1^-)^2(z_2-\zeta_1^-)}{(\zeta_1^--\zeta_1^+)(\zeta_1^--\zeta_2^-)\,(z-\zeta_1^-)}
+\frac{(\zeta_2^-)^2(z_2-\zeta_2^-)}{(\zeta_2^--\zeta_1^+)(\zeta_2^--\zeta_1^-)\,(z-\zeta_2^-)}+\gamma_0\\
&\quad+\lambda\bigg[\frac{(\zeta_1^-)^4(z_2-\zeta_1^-)^2}{(\zeta_1^--\zeta_1^+)^2(\zeta_1^--\zeta_2^-)^2\,(z-\zeta_1^-)^3}
-\frac{(\zeta_1^-)^3(z_2-\zeta_1^-)\,X_2}{(\zeta_1^--\zeta_1^+)^3(\zeta_1^--\zeta_2^-)^3\,(z-\zeta_1^-)^2}\\
&\qquad+\frac{2\zeta_1^+(\zeta_1^-)^2(z_2-\zeta_1^-)\,X_1}{(\zeta_1^--\zeta_1^+)^4(\zeta_1^--\zeta_2^-)^3\,(z-\zeta_1^-)}
+\big(\zeta_1^-\leftrightarrow\zeta_2^-\big)+\gamma_1\bigg]\,,
\end{split}
\end{equation}
where we again added a constant $\gamma_0+\lambda\gamma_1$ to $g$ which we will determine
shortly. The RG flow $\delta_{RG}\what{\alpha}=-2\lambda\mc{L}_{V_+}\what{\alpha}$ will be, as before,
\begin{equation}
\delta_{RG}\what{\phi}=-2\lambda\,\partial_z\big(g\what{\phi}\big)\,.
\end{equation}
Near its finite poles
\begin{equation}
\what{\phi}=\frac{B}{z^2}-\frac{k_1}{z}+O(1)\,,\qquad
\what{\phi}=\frac{\lambda-k_2}{z-z_2}+O(1)\,,\qquad B=-\frac{\zeta_1^+\zeta_1^-\zeta_2^-}{z_2}\,,
\end{equation}
so that, expanding in series in $z^{-1}$ near the origin and in $(z-z_2)^{-1}$ near $z_2$, with $O(1)$ denoting
terms regular at the corresponding pole, we have
\begin{equation}
\label{ec:ghatphicoupled}
\begin{split}
-2\lambda\,\partial_z\big(g\what{\phi}\big)&=\frac{4\lambda B\,g(0)}{z^3}
+\frac{2\lambda\big[B\,g'(0)-k_1\,g(0)\big]}{z^2}+O(1)\,,\\
-2\lambda\,\partial_z\big(g\what{\phi}\big)&=\frac{2\lambda(\lambda-k_2)\,g(z_2)}{(z-z_2)^2}+O(1)\,,
\end{split}
\end{equation}
while, varying the parameters $B$, $k_1$, $k_2$ and $z_2$ of $\what{\phi}$,
\begin{equation}
\label{ec:deltaphicoupled}
\begin{split}
\delta_{RG}\what{\phi}&=\frac{\delta_{RG}B}{z^2}-\frac{\delta_{RG}k_1}{z}+O(1)\,,\\
\delta_{RG}\what{\phi}&=\frac{(\lambda-k_2)\,\delta_{RG}z_2}{(z-z_2)^2}-\frac{\delta_{RG}k_2}{z-z_2}+O(1)\,.
\end{split}
\end{equation}
We insert both expansions into the RG flow equation
\begin{equation}
\begin{split}
\delta_{RG}\what{\phi}&=\delta_{RG}B\,\partial_B\what{\phi}+\delta_{RG}k_1\,\partial_{k_1}\what{\phi}
+\delta_{RG}k_2\,\partial_{k_2}\what{\phi}+\delta_{RG}z_2\,\partial_{z_2}\what{\phi}\\
&=-2\lambda\,\partial_z\big(g\what{\phi}\big)\,.
\end{split}
\end{equation}

Comparing \eqref{ec:ghatphicoupled} and \eqref{ec:deltaphicoupled} term by term, the simple poles at the
origin and at $z_2$ give $\delta_{RG}k_1=\delta_{RG}k_2=0$. The triple pole at the origin gives $g(0)=0$,
which leads to the choice 
\begin{equation}
\gamma_0=-1-\frac{\zeta_1^+(z_2-\zeta_1^+)}{Q}\,,\qquad
\gamma_1=\zeta_1^+\left[\frac{(z_2-\zeta_1^+)^2}{Q^2}+\frac{(z_2-\zeta_1^+)\,Y_2}{Q^3}+\frac{2Y_1}{Q^4}\right].
\end{equation}
This choice of $\gamma_0$, $\gamma_1$ is just to make the formulae simpler, different choices are related by
a coordinate change on $\mathbb{CP}^1$. Finally, the double poles at the origin and at $z_2$ give, respectively,
\begin{equation}
\begin{split}
&\delta_{RG}B=2\lambda B\,g'(0)=\frac{\lambda\,k_2(s-k_1)(2t+s+k_1)}{2(k_2s+t^2)}+\frac{\lambda^2(s-k_1)(2t+s+k_1)}{4(k_2s+t^2)^4}\Big[k_2^4(s^2-3k_1^2)\\
&\qquad \qquad \qquad\qquad\qquad\quad+2k_2^3t^2(5s-9k_1)-2k_2^2t^3(8s-6k_1+9t)+4k_2t^4(2s+5t)-4t^6\Big]\,,\\[4pt]
&\delta_{RG}z_2=2\lambda\,g(z_2)=-\frac{\lambda\,(t-k_2)(2t+s+k_1)}{k_2s+t^2}+\frac{\lambda^2(t-k_2)(2t+s+k_1)}{2(k_2s+t^2)^4}\Big[3k_1k_2^3(k_1-s)\\
&\hspace{4em}+k_2^2t\,(3k_1s+15k_1t-3s^2-17st)+k_2t^2(2s^2+25st+10t^2-9k_1t)-2t^4(5s+4t)\Big]\,.
\end{split}
\end{equation}
Since $B=\tfrac14(2t+s+k_1)(s-k_1)$ and $z_2=-(t-k_2)(2t+s+k_1)/2t$, we can invert for $t$ and $s$ to find
$\delta_{RG}s$ and $\delta_{RG}t$ matching exactly \eqref{ec:rgflowcoupleds} and \eqref{ec:rgflowcoupledt}. For $t=0$ the WZW model decouples and
$\delta_{RG}s=\lambda\big(1-\frac{k_1^2}{s^2}\big)+\frac{\lambda^2}{2s}\big(1-\frac{k_1^2}{s^2}\big)\big(1-\frac{3k_1^2}{s^2}\big)$,
which is the two-loop beta function of the PCM with WZ term found in the previous section, while both $\delta_{RG}s$ and $\delta_{RG}t$ vanish at the fixed point $s=k_1$, $t=k_2$, where the model becomes two decoupled WZW models.

\subsection{RG flow for order defects}
We have seen  that a chiral order defect at a point $\Sigma_+$ introduces a pole in the the $\lambda$-connection $\nabla_+$ so that the monodromy of the connection $\lambda^{-1} \nabla_+$ around the defect is $e^{-2 \pi \i N_f / N}$.  Similarly, for an anti-chiral order defect at  a point $\Sigma_-$, the monodromy of the connection $-\lambda^{-1}\nabla_-$ is $e^{-2\pi \br{N}_f / N}$.  

Let us consider the case when $\Sigma = \CP^1$ with one-form $\d z$ and chiral order defects at $z_i^+$ and anti-chiral order defects at $z_i^-$.  Let us take $N_f = 1$ at each defect; we  this the most general case, as models with $N_f > 1$ arise when several chiral (or antichiral) defects collide. 

This corresponds to an integrable field theory which, for each $z_i^+$ has $N$ chiral fermions $\psi^{(i)}$ transforming in the fundamental representation of $GL(N)$, and $N$ fermions $\eta^{(i)}$ in the  anti-fundamental representation of $GL(N)$.  Similarly, for each $z_i^-$ there are anti-chiral fermions the fundamental and the anti-fundamental. The Lagrangian is  
\begin{equation} 
\sum \int \psi^{(i)} \dbar \eta^{(i)} + \int \br{\psi}^{(j)} \partial \br{\eta}^{(j)} + \sum \frac{1}{z_i^+ - z_j^-} \int \psi^{(i)} \eta^{(i)} \br{\psi}^{(j)} \br{\eta}^{(j)}  
\end{equation} 
Our goal is to calculate the planar two-loop RG flow of this model at two loops.  The space of coupling constants is the space of the $z_i^{\pm}$. The RG flow is a vector field on this space.   We will find that the RG flow is
\begin{equation} 
\begin{split} 
 \lambda \sum \partial_{z_i^-} - \lambda \sum \partial_{z_j^+} + 2 \lambda^2 N^{-1} \sum_{i,j} \frac{1}{z_i^- - z_j^+}(\partial_{z_i^-} - \partial_{z_j^+} )    + O(\lambda^3) 
\end{split} 
\end{equation} 
Recalling that $\lambda = \hbar N$, we can rewrite this in a way which makes the topology of the planar diagrams more transparent.  Let us have $N_f^{(i)}$ flavours at the chiral defects and $\br{N}_f^{(i)}$ flavours at the antichiral defects, which can be achieved by having some defects collide. Then, the RG flow is
\begin{equation} 
\begin{split} 
\sum \hbar N \partial_{z_i^-} - \sum \hbar N \partial_{z_j^+} + \sum_{i,j} 2 \hbar^2 N  \frac{1}{z_i^- - z_j^+}( N_f^{(j)} \partial_{z_i^-} - \br{N}_f^{(i)} \partial_{z_j^+} )    + O(\lambda^3) 
\end{split} 
\end{equation}

We will start by building a planar spectral  curve for this model.  This model behaves in a different way from the model with order defects, as effectively the one-form $\alpha$ defining the theory already has some $\lambda$ dependence: 
\begin{equation} 
\alpha =  \d z  - \sum N^{-1} \lambda \frac{\d z}{z - z_i^+} + \sum N^{-1} \lambda \frac{\d z}{z - z_i^-} 
\end{equation}
Working modulo $\lambda^2$, we have a planar spectral curve defined by
\begin{equation} 
\begin{split} 
\nabla_+ &= \lambda \partial + \alpha\\
\what{\alpha}&= \alpha  \\ 
\nabla_- &= - \lambda \partial + \alpha. 
\end{split} 
\end{equation}
 Here the connections $\nabla_{\pm}$ are written in the frame $\d z$.  Setting $\what{\alpha} = \alpha$ satsfies the Bernoulli equation, modulo $\lambda^2$ (the point is that $\lambda \partial_z \alpha$ is of order $\lambda^2$ so that $\nabla_+ \alpha = \alpha^2$).  

Let us extend this solution to the next order in $\lambda$, by setting
\begin{equation} 
 \nabla_+ = \lambda \partial + \alpha + \lambda^2 \alpha_2^+
\end{equation}
We will determine $\alpha_2^+$ by solving the Bernoulli equation to find $\what{\alpha}$, and then requiring that $-\nabla_+ + 2 \what{\alpha}$ has only first-order poles at $z_i^-$. 

The solution to the Bernoulli equation is
\begin{equation}
\begin{split} 
 \what{\alpha} &= \alpha + \lambda \partial_z \alpha + \lambda^2 \alpha_2^+ \\
&=  \d z  - \sum \lambda N^{-1} \frac{\d z}{z - z_i^+} + \sum \lambda N^{-1} \frac{\d z}{z - z_i^-} +  \sum \lambda^2 N^{-1} \frac{\d z}{(z - z_i^+)^2} - \sum \lambda^2  N^{-1} \frac{\d z}{(z - z_i^-)^2} + \lambda^2 \alpha_2^+ + O(\lambda^3)
\end{split} 
\end{equation}
If we set
\begin{equation} 
\nabla_- = -\nabla_+ + 2 \what{\alpha} 
\end{equation}
then 
\begin{equation} 
\nabla_- = -\lambda \partial + \d z  - \sum   N^{-1} \lambda \frac{\d z}{z - z_i^+} + \sum \lambda  N^{-1} \frac{\d z}{z - z_i^-} + 2 \sum \lambda^2  N^{-1} \frac{\d z}{(z - z_i^+)^2} - 2 \sum \lambda^2  N^{-1}   \frac{\d z}{(z - z_i^-)^2} + \lambda^2 \alpha_2^+ 
\end{equation}
Since $\nabla_-$ is only allowed to have first order poles at $z_i^-$, we set
\begin{equation} 
\alpha_2^+ = 2 \sum    N^{-1} \frac{\d z}{(z - z_i^-)^2}.  
\end{equation}
In which case,
\begin{equation} 
\what{\alpha} =  \d z  - \sum \lambda   N^{-1}  \frac{\d z}{z - z_i^+} + \sum \lambda     N^{-1}  \frac{\d z}{z - z_i^-} +  \sum \lambda^2     N^{-1}  \frac{\d z}{(z - z_i^+)^2} + \sum \lambda^2     N^{-1}  \frac{\d z}{(z - z_i^-)^2} + O(\lambda^3) 
\end{equation}
The vector field $V$ is, again to quadratic order in $\lambda$, 
\begin{equation}
\begin{split} 
 V =& \partial_z +    \sum_i \lambda     N^{-1}  \frac{1}{z - z_i^+}\partial_z - \sum_i \lambda     N^{-1} \frac{1}{z - z_i^-}\partial_z  \\
&+ \sum_{i,j} \lambda^2     N^{-2} \frac{1}{(z-z_i^+)(z-z_j^+)} + \sum_{i,j} \lambda^2    N^{-2}  \frac{1}{(z-z_i^-)(z-z_j^-)}\\
&   - 2 \sum_{i,j} \lambda^2    N^{-2} \frac{1}{(z - z_j^+)(z - z_i^-) }\partial_z \\
 & - \sum_i \lambda^2     N^{-1} \frac{1}{(z - z_i^+)^2}\partial_z - \sum_i     N^{-1} \lambda^2 \frac{1}{(z - z_i^-)^2}\partial_z\\ 
=& \partial_z +     \sum_i \lambda     N^{-1} \frac{1}{z - z_i^+}\partial_z - \sum_i \lambda    N^{-1}  \frac{1}{z - z_i^-}\partial_z \\
  & - \sum_i \lambda^2     N^{-1} \frac{1}{(z - z_i^+)^2}\partial_z - \sum_i \lambda^2     N^{-1} \frac{1}{(z - z_i^-)^2}\partial_z \\ 
&+ \sum_{i,j} \lambda^2   N^{-2}    \frac{1}{(z-z_i^+)(z-z_j^+)} + \sum_{i,j}     N^{-2} \lambda^2 \frac{1}{(z-z_i^-)(z-z_j^-)}\\
&+  2 \sum_{i,j}    N^{-2}  \lambda^2\frac{1}{z_j^+ - z_i^-} \left(  \frac{1}{z - z_j^+} - \frac{1}{z - z_i^-} \right)\partial_z 
\end{split} 
\end{equation} 

We will decompose $V$ as $V = V_+ + V_-$ where $V_+$ extends over the chiral patch and $V_-$ over the anti-chiral patch.
 
As before, the flow is obtained by appling the Lie derivative of $V_+$ to $(\nabla_+, \nabla_-,\what{\alpha})$. Note that any one of these determines the others. It is easiest to apply the Lie derivative to $\what{\alpha}$, since it transforms as a form and $\nabla_{\pm}$ are connections. 

We have  
\begin{equation}
\begin{split} 
V_+  =&   - \sum  N^{-1} \lambda \frac{1}{z - z_i^-}\partial_z  + \sum \lambda^2    N^{-2}   \frac{1}{(z-z_i^-)(z-z_j^-)}\partial_z\\
 & - \sum \lambda^2  N^{-1}  \frac{1}{(z - z_i^-)^2}\partial_z -  2   \sum \lambda^2  N^{-2}   \frac{1}{(z_j^+ - z_i^-)(z - z_i^-)} \partial_z  
\end{split} 
\end{equation}
Here, we have chosen to place the constant vector field $\partial_z$ in $V_-$, which is allowed as $\partial_z$ extends over all of $\CP^1$
.

Retaining only the terms up to order $\lambda^2$, we have
\begin{equation} 
\begin{split} 
 \iota_{V_+} \what{\alpha} = &  - \sum \lambda   N^{-1}  (z-z_i^-)^{-1}  - \sum \lambda^2  N^{-1} (z - z_i^-)^{-2}  + \sum \lambda^2    N^{-2}   (z-z_i^-)^{-1}(z-z_j^+)^{-1} \\ 
&- 2\lambda^2    N^{-2}   \sum (z_j^+ - z_i^-)^{-1}  (z-z_i^-)^{-1}     \\
=&    - \sum \lambda  N^{-1}  (z-z_i^-)^{-1} - \sum \lambda^2    N^{-1} (z - z_i^-)^{-2}\\
&- \lambda^2 \sum    N^{-2}   (z_j^+ - z_i^-)^{-1} \left( (z-z_j^+)^{-1} - (z-z_i^-)^{-1} \right)   - 2\lambda^2   N^{-2}   \sum (z_j^+ - z_i^-)^{-1}  (z-z_i^-)^{-1}    \\
  =&    - \sum   N^{-1}  \lambda (z-z_i^-)^{-1} - \sum \lambda^2  N^{-1}  (z - z_i^-)^{-2} - \lambda^2    N^{-2}   \sum (z_j^+ - z_i^-)^{-1} \left(  (z-z_j^+)^{-1} + (z-z_i^-)^{-1} \right) 
\end{split} 
\end{equation}
The Lie derivative $\mc{L}_{V_+} \what{\alpha}$ is obtained by applying $\partial = \d z \partial_z$ to the expression above.  This means that
\begin{equation} 
\delta_{RG} \what{\alpha} = - 2 \lambda \partial (\iota_{V_+}  \what{\alpha} ). 
\end{equation}

 Note that each term in $\iota_{V_+} \what{\alpha}$ depends only on the difference of $z$ and some $z_i^{\pm}$. Therefore we can replace $\partial_z$ by $-\partial_{z_i^{\pm}}$ in each term. This brings us to the equation for the RG flow: 
\begin{equation}
\begin{split} 
& \delta_{RG}    \left(  \d z  - \sum \lambda N^{-1}   \frac{\d z}{z - z_i^+} + \sum \lambda  N^{-1}    \frac{\d z}{z - z_i^-} +  \sum \lambda^2  N^{-1}    \frac{\d z}{(z - z_i^+)^2} + \sum \lambda^2   N^{-1} \frac{\d z}{(z - z_i^-)^2} \right)  
=\\
&       -2\sum \lambda^2  N^{-1}  \d z\partial_{z_i^-}  (z-z_i^-)^{-1} - 2 \sum  N^{-1}  \lambda^3 \d z \partial_{z_i^-} (z - z_i^-)^{-2} \\ 
 &- 2\lambda^3  N^{-2} \sum (z_j^+ - z_i^-)^{-1}\left( \d z \partial_{z_j^+}(z-z_j^+)^{-1} - \d z \partial_{z_i^-} (z-z_i)^{-1} \right)  
\end{split} 
\end{equation}
This gives rise to the identities 
 \begin{equation} 
\begin{split}
 \delta_{RG} z_i^- &= -2\lambda - 2 \lambda^2N^{-1} \sum_j  \frac{1}{z_i^- - z_j^+}  \\ 
\delta_{RG} z_i^+ &=   -2 \lambda^2 N^{-1} \sum_j \frac{1}{z_i^+ - z_j^-}
\end{split} 
\end{equation}
Equivalently, after an overall shift of $z_i^{\pm}$ by $-\lambda$, the flow is 
 \begin{equation} 
\begin{split}
 \delta_{RG} z_i^- &= -\lambda - 2 \lambda^2 N^{-1} \sum_j \frac{1}{z_i^- - z_j^+}  + O(\lambda^3)\\ 
\delta_{RG} z_i^+ &= \lambda  - 2 \lambda^2 \sum_j N^{-1} \frac{1}{z_i^+ - z_j^-} + O(\lambda^3)
\end{split} 
\end{equation}
If we allowed ourselves different numbers of flavours $N_f^{(i)}$ at chiral defects and $\br{N}_f^{(i)}$ at anti-chiral defects, and write $\lambda = \hbar N$,  we find 
\begin{equation} 
\begin{split}
 \delta_{RG} z_i^- &= -\hbar N - 2 \hbar^2 \sum_j N N_f^{(j)}  \frac{1}{z_i^- - z_j^+}  + O(\lambda^3)\\ 
\delta_{RG} z_i^+ &= \hbar N  - 2 \hbar^2 \sum_j N \br{N}_f^{(j)} \frac{1}{z_i^+ - z_j^-} + O(\lambda^3)
\end{split} 
\end{equation}
This expression makes the topology of the planar diagrams contributing to the RG flow clear.

Note that the two-loop RG flow is scheme independent: it can not be removed by an algebraic change of coordinates on the space of coupling constants (which is the manifold parameterized by the $z_i$).  Any such change of coordinates would necessarily involve $\log (z_i^- - z_j^+)$.

\section{$4d$ Chern-Simons from generalized complex geometry}
Now let us turn to the holographic derivation of our results. The first step is to realize $4d$ CS with disorder defects as the theory living on a brane in some topological string theory.   It turns out that the space-time geometry we need for this topological string theory is a generalized Calabi-Yau manifold \cite{Hitchin:2003cxu, Gualtieri:2003dx}.

\begin{definition}
Given a $6$-manifold $M$, an (odd) generalized CY structure on $M$ is the following data:
\begin{enumerate}
\item A closed odd form $\phi \in \Omega^{odd}(M,\C)$ with complex coefficients. Generally $\phi$ will be of mixed degree.
\item The form $\phi$ is required to be pure. This means the following. At each point $p \in M$, we can consider pairs $(V,\eta) \in (T_p M \oplus T^\ast_p M) \otimes_{\R} \C$ of a complex-coefficient vector and one-form.  The annihilator of $\phi$ is the collection of those pairs $(V,\eta)$ such that, at $p$,
\begin{equation}
V \vee \phi + \eta \wedge \phi = 0.
\end{equation}
Purity of $\phi$ means that the annihilator of $\phi$ is of rank the dimension of $M$.
\item Further $\phi$ must satisfy a non-degeneracy condition, stating that at each point in $M$ a certain inner product $\ip{\phi, \br{\phi}}$ is non-zero. This will hold for all examples we consider.
\end{enumerate}
\end{definition}

Here are some examples, together with the corresponding topological strings theory.
\begin{enumerate}
\item If $M$ is given an ordinary Calabi-Yau structure with holomorphic volume form $\Omega$, then $\Omega$ also defines a generalized CY structure.  The topological string theory we study in this geometry is the $B$-model.
\item If $M$ is an ordinary CY, equipped with a closed $(1,0)$ form $\alpha$, then $\Omega + \alpha$ is a generalized CY.  We can interpret $\alpha$ as a Poisson bivector; the topological string theory is the non-commutative $B$-model. 
\item If $M$ is a product of a symplectic manifold $X$ with symplectic form $\omega$, and a Riemann surface $\Sigma$ with a nowhere vanishing one-form $\alpha$, then setting $\phi = e^{\i \omega} \alpha$ defines a generalized CY structure on $M$. The topological string here is a product of the $A$-model on $X$ and the $B$-model on $\Sigma$.  
\end{enumerate}

Two generalized CY structures are equivalent if they are related by a transformation which is a combination of a coordinate transformation, together with a transformation $\phi \mapsto e^{\d A} \phi$ for some real one-form $A$.   Multiplying $\phi$ by $e^B$ for a real two-form $B$ amounts to changing the $B$-field, and adding on an exact $B$-field is a gauge trivial operation.

A $B$-field transformation can move us between the non-commutative $B$-model and the mixed $A/B$ model.  Suppose that $X$ is a holomorphic symplectic manifold of real dimension $4$, with holomorphic symplectic form $\Omega$.  Suppose $\Sigma$ is a Riemann surface with a nowhere-vanishing one form. Then $X \times \Sigma$ is a generalized Calab-Yau in two different ways.  If we view $X \times \Sigma$ as a non-commutative $B$-model, we can give it the generalized CY structure given by the form
$$
e^\Omega \alpha = \Omega \wedge \alpha + \alpha.
$$ 
If we view $X$ as a real symplectic manifold with symplectic form $\op{Im} \Omega$, then we have the generalized CY structure given by $e^{\i \op{Im} \Omega}\alpha$.  These differ by the $B$-field transformation $e^{\op{Re} \Omega}$. We see that if $\Omega$ is exact, then the generalized CY structures giving the non-commutative $B$-model and the mixed $A/B$ model are equivalent.

It is known \cite{Kapustin:2005vs} that, in this context, the non-commutative $B$-model and the mixed $A-B$-model on $X\times \Sigma$ are equivalent.  Let us review why the space of states of the topological strings are the same, working in a local patch where $X = \C^2$ and $\Sigma = \C$.
 
The space of states of the $B$-model topological field theory are $\PV^{\ast,\ast}(\C^3)$ with differential $\dbar$.   Making the $B$-model non-commutative turns on the Poisson cohomology differential $\{\pi,-\}$. Let us describe this in a local patch $\C^3 = \R^4 \times \C$ with coordinates $p,q,z$ and Poisson tensor $\pi = \partial_{p} \wedge \partial_{q}$. Then, $\PV^{\ast,\ast}(\C^2)$ with the differential $\dbar + \{\pi,-\}$ is isomorphic to forms on $\C^2$ with differential $\dbar + \partial$.  Thus, the space of states of the non-commutative $B$-model is $\Omega^\ast(\R^4) \what{\otimes} \PV^{\ast,\ast}(\C)$, with differential $\d_{\R^4} + \dbar_{\C}$.  

The space of states of the $B$-model topological string theory differ from those of the TFT because we need to gauge worldsheet diffeomorphisms.  This implements the constraint that fields are in the kernel of the divergence operator $\partial$ on polyvector fields.  The divergence free constraint, which comes from gauging worldsheet diffeomorphisms in the $B$-model topological string, is only non-trivial in the $\PV^{\ast,\ast}(\C)$ direction. In the $\R^4 = \C$ direction, it is homotopically trivial.  Thus, this constraint does not effect the equivalence between non-commutative $B$-model and mixed $A/B$ model.  

More generally, if we have a Calabi-Yau $3$-fold  $M$ equipped with a holomorphic Poisson tensor $\pi$, in the region where $\pi$ is of maximal rank (meaning the map $\pi:  T^\ast M \to T M$ is of rank $2$) then the non-commutative $B$-model on $X$ is equivalent to a mixed $A/B$ model.
 
One can check that the theories on branes in the mixed $A/B$ model, and in the non-commutative $B$-model, are also equivalent. We will see this in more detail later when we show how to engineer $4d$ CS from the theory on a brane. 

The relation between the mixed $A/B$ model topological string and $4d$ Chern-Simons was first discussed in \cite{Yamazaki:2019prm}.   
\begin{lemma}
In the mixed $A/B$ model on $\R^4 \times \C$ where $\C$ has the one-form $\d z$, the theory living on a stack of branes wrapping $\R^2 \times \C$ is $4d$ Chern-Simons with gauge group $\mf{gl}_N$. 
\end{lemma}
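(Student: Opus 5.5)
The plan is to reduce the mixed $A/B$ model to the non-commutative $B$-model, read off the open-string field theory on the brane there, and then take a topological limit along the $A$-model directions. Concretely, work on $\R^4 \times \C$ with coordinates $(x,y,p,q)$ on $\R^4$ and $z$ on $\C$. Take symplectic form $\omega = \d x\wedge \d p + \d y \wedge \d q$, so that the brane $\R^2\times\C = \{p=q=0\}$ is Lagrangian in the $A$-model factor and holomorphic in the $B$-model factor. By the $B$-field discussion above, $e^{\i\omega}\d z$ is equivalent to $e^{\Omega}\d z$ with $\Omega = \d w \wedge \d u$ exact, where $w = x+\i y$ and $u = p + \i q$. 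One should first check that the brane, with its trivial Chan--Paton bundle, is carried to a brane in the non-commutative $B$-model. Since $\op{Re}\Omega$ restricts to zero on $\{p=q=0\}$, no curvature is induced on the Chan--Paton bundle.

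In the mixed $A/B$ picture, the open-string states on $N$ branes wrapping $L\times \C$, with $L = \R^2_{x,y}$, are computed as follows. The $A$-model directions contribute de Rham forms $\Omega^\ast(L)$, and the $B$-model direction along the brane contributes $\Omega^{0,\ast}(\C)$. The $B$-model normal directions contribute nothing, because the brane fills the $\C$ factor. This gives fields
\begin{equation}
\mc{A} \in \Omega^\ast(\R^2)\what{\otimes}\Omega^{0,\ast}(\C)\otimes \mf{gl}_N[1],
\end{equation}
with differential $\d_{\R^2} + \dbar_{\C}$. Equivalently, these are forms on $\R^2\times \C$ modulo $\d z$, which is exactly the field content of $4d$ CS. The cyclic $A_\infty$ structure is the wedge product together with matrix multiplication. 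The pairing uses the trace, integration, and the holomorphic one-form $\d z$ coming from $\phi$. The resulting holomorphic Chern--Simons-type action is therefore
\begin{equation}
\int \d z \wedge CS(\mc{A}),
\end{equation}
which is the $4d$ CS action with $\alpha = \d z$, up to normalization.

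The main obstacle is to justify that there are no higher $A_\infty$ corrections. In the $A$-model, disc instantons could deform the product. Here, however, $\omega$ is exact and the brane is an exact Lagrangian, so there are no nonconstant holomorphic discs and the products are classical. To make the argument rigorous, I would instead work in the non-commutative $B$-model. There the brane algebra is $\Omega^{0,\ast}$ of the brane with the Moyal-type product determined by $\pi$. Because $\pi = \partial_w\wedge\partial_u$ has one leg normal to the brane, the restricted product is commutative. The identification of $\PV$ with $\dbar+\{\pi,-\}$ with forms and $\dbar+\partial$, established above, then converts the $\partial_w$ directions into the $\d w$ components. This turns $\Omega^{0,\ast}(\C^2_{w,z})$ with $\dbar + \partial_w$-type differential into $\Omega^\ast(\R^2)\otimes\Omega^{0,\ast}(\C)$.

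Finally, I would match the gauge symmetry: ghosts in degree $0$ give $\mf{gl}_N$ gauge transformations. I would also confirm the normalization $-\frac{1}{8\pi^2\hbar\i}$ by identifying the string coupling with $\hbar$.
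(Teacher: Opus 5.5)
Your second paragraph is the paper's proof. The paper tensors the $A$-brane algebra $\Omega^\ast(\R^2)$ with the $B$-brane algebra $\Omega^{0,\ast}(\C)$, shifts by one, and writes the action $\int \alpha\, \op{tr}(\half \mc{A}\d\mc{A} + \tfrac{1}{3}\mc{A}^3)$. Your observation that exactness of $\omega$ and of the zero section rules out disc corrections is correct, and the paper leaves it implicit. With it, the non-commutative detour is not needed.

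That detour, which you offer ``to make the argument rigorous'', would fail as written because it misidentifies the open-string states.

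\begin{itemize}
\item On a $B$-brane at $u=0$ the open-string states are not just $\Omega^{0,\ast}$ of the brane. They are $\Omega^{0,\ast}(\C^2_{w,z})[\eps]\otimes\mf{gl}_N$, where $\eps$ is the odd, ghost-number-one generator corresponding to the normal polyvector $\partial_u$.
\item The Poisson tensor $\pi=\partial_w\wedge\partial_u$ does not deform the product, Moyal-type or otherwise. Its normal leg contracts with $\eps$, and it adds $\eps\,\partial_w$ to the differential.
\item The de Rham factor in the $w$-direction appears only after identifying $\eps$ with $\d w$.
\end{itemize}

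Without $\eps$, your ``$\dbar+\partial_w$'' on $\Omega^{0,\ast}(\C^2_{w,z})$ is not a degree-one differential. Also, the identification of $(\PV,\dbar+\{\pi,-\})$ with forms is a closed-string statement and does not apply to open strings. The correct mechanism is the one the paper uses in the following lemma, where near a zero of $\alpha$ it sets $z\eps=\d q$.

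There is also a minor slip in your coordinates. With $u=p+\i q$ one gets $\op{Im}(\d w\wedge \d u)=\d x\wedge\d q+\d y\wedge\d p\neq\omega$. Take $u=q+\i p$ instead.
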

Note that the topological string naturally leads to analytically continued gauge theories with complex gauge group, and to analytically-continued integrable models.  
\begin{proof}
We need to calculate the open-string field theory of this brane and match it with $4d$ Chern-Simons theory.  This has already been discussed in the literature \cite{Yamazaki:2019prm}, but let us give a different which will generalize to the case with defects.  The open-string fields for a Lagrangian $A$-brane wrapping $\R^2 \subset \R^4$ are $\Omega^\ast(\R^2)$. Those for a $B$-wrapping wrapping $\C$ in the $B$-model on this curve are $\Omega^{0,\ast}(\C)$. The fields for the mixed $A/B$ model are obtained by tensoring these together, yielding the dg algebra $\Omega^\ast(\R^2) \what{\otimes}\Omega^{0,\ast}(\C)$ (where $\what{\otimes}$ indicates the completed tensor product).    The open-string field theory is obtained in the standard way by performing a cohomological shift by $1$, and using the Chern-Simons action.  If we have a stack of $N$ branes, the open-string field theory has fields
\begin{equation}
\mc{A} \in \Omega^\ast(\R^2) \what{\otimes}\Omega^{0,\ast}(\C) \otimes \mf{gl}_N[1]
\end{equation}
with action
\begin{equation}
\int \alpha \op{tr} ( \half \mc{A} \d \mc{A} + \tfrac{1}{3} \mc{A}^3 ). 
\end{equation}
This is precisely the 4d CS action in the BV formalism. 
\end{proof}

Next, we need to build a generalized CY manifold from a classical spectral curve, so that the theory living on a brane is $4d$ Chern-Simons theory with defects. 
  
\begin{lemma}
Let $(\Sigma = \Sigma_+ \cup \Sigma_-,\alpha)$ be a classical spectral curve. Let $\Sigma_{\text{finite}}$ be the complement of the second-order poles of $\alpha$.  

Then, there is a rank $2$ vector bundle $V$ over $\Sigma_{\text{finite}}$, and a generalized Calabi-Yau structure  on the $6$-fold $V \times \R^2$, so that the theory living on a stack of branes wrapping $\Sigma_{\text{finite}} \times \R^2$ is $4d$ CS with chiral and anti-chiral disorder defects, and gauge group $U(N)$.  
\end{lemma}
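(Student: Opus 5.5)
The plan is to build the geometry locally and glue, using the two local models already described: the mixed $A/B$ model, and the non-commutative $B$-model that is equivalent to it where the Poisson tensor has maximal rank. Away from the zeroes of $\alpha$, the one-form $\alpha$ is nowhere vanishing on $\Sigma_{\text{finite}}$. There I take $V$ to be the trivial real rank $2$ bundle, with fibre coordinates $(u,v)$. I would use the generalized CY form
\begin{equation*}
\phi = e^{\i (\d u \wedge \d x + \d v \wedge \d y)} \alpha,
\end{equation*}
where $x,y$ are coordinates on $\R^2$. This is example 3 of a generalized CY: a symplectic $\R^4$ times $(\Sigma,\alpha)$. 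It is closed because $\alpha$ is holomorphic on a curve, and pure and non-degenerate because $\alpha \neq 0$. The brane wraps the zero section times $\R^2$, which is Lagrangian in the $\R^4$ factor and a $B$-brane along $\Sigma$. By the preceding lemma the theory on $N$ such branes is $4d$ CS with gauge algebra $\mf{gl}_N$ and action $\int \alpha\, CS(\mc A)$.

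Near a zero of $\alpha$ in $\Sigma_+$, say $\alpha = z^n \d z$, I would switch to the non-commutative $B$-model description. Complexify $\R^2 \times V$ to $\C^2$ with holomorphic coordinates $w = x+\i y$ and $\xi$, where $\xi$ is a complex fibre coordinate of $V$. Take the holomorphic volume form $\Omega = \d w \wedge \d \xi \wedge \d z$ and the Poisson structure induced by $\alpha$. The key choice is to twist the fibre: set $\xi$ to be a section of $K_\Sigma^{-1}$ twisted by the divisor of chiral zeroes. Concretely, near the zero let $\xi' = z^n \xi$, so that $\alpha \wedge \d\xi$ becomes a nondegenerate form in the coordinate $\xi'$. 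The generalized CY form is $e^{\d w \wedge \d \xi'}\alpha$ plus the B-model part, which is still closed and pure. For anti-chiral zeroes I would do the same with $\br w$ in place of $w$. This is the origin of the orientation data $\Sigma_\pm$. On the overlap $\Sigma_+ \cap \Sigma_-$, where $\alpha$ has no zeroes, the two descriptions are related by a $B$-field transformation $e^{\op{Re}\Omega}$ together with a fibre coordinate change. Exactness of the relevant $2$-form on the overlap makes these equivalent generalized CY structures, so they glue. The resulting bundle $V$ is the real rank $2$ bundle underlying $K_\Sigma^{-1}(-Z_+)$ on $\Sigma_+$ and $K^{-1}_\Sigma(-Z_-)$ on $\Sigma_-$, glued over the annulus.

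Next, I would recompute the open-string field theory on the brane at the zero section near a zero, as in the proof of the previous lemma. The fields are $\Omega^{0,\ast}$ of the brane with values in the normal-bundle-twisted structure. Because the brane sits inside a twisted fibre, the component of $\mc A$ pairing with $\d\xi$ (namely $A_w$, identified with $B/\alpha$ as in the holomorphic BF description) is allowed a pole of order $n$ at the zero. This is exactly the chiral disorder defect boundary condition. Anti-chiral zeroes give the same with $A_{\br w}$. The rank of the gauge group becomes $U(N)$ once the reality condition is imposed on the fibre.

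The main obstacle I expect is the local computation at a zero. One has to check two things there. First, the twisted generalized CY form is genuinely pure and non-degenerate across the zero; this is where the Poisson tensor drops rank, so the $A/B$ equivalence fails. Second, the brane field theory there reproduces precisely the pole conditions on $A_w$ rather than some other modification. I would first verify both in the model $\alpha = z\,\d z$ by explicit calculation, comparing with the holomorphic BF plus $\alpha\,\op{tr}(A\partial A)$ presentation, and then extend to higher-order zeroes.
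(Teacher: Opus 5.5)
Your overall route is the paper's. You use the mixed $A/B$ model where $\alpha\neq 0$ and a non-commutative $B$-model chart near each zero, with the fibre coordinate rescaled by $z^n$; your $\xi'=z^n\xi$ is the paper's $u_1=z^n p$, the canonical fibre coordinate of $T^\ast\Sigma$. You glue by $B$-field transformations, with $w\mapsto\br w$ at anti-chiral zeroes, and a local open-string computation produces an order-$n$ pole in $A_w$. The paper uses the $B$-model chart on all of $\Sigma_\pm$ and the $A/B$ chart only on the overlap, but that difference is immaterial.

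However, the local form you write down is wrong, and this is why purity at the zero looks like an obstacle. With $\alpha=z^n\d z$, the form $e^{\d w\wedge\d\xi'}\alpha=z^n\d z+z^n\,\d w\wedge\d\xi'\wedge\d z$ vanishes identically at $z=0$, so it is not pure there. The correct structure is $e^{\d w\wedge\d\xi}\alpha$ rewritten in the twisted coordinate. Since $\d w\wedge\d\xi\wedge z^n\d z=\d w\wedge\d\xi'\wedge\d z$, it equals $\Omega+\alpha$, where $\Omega=\d w\wedge\d\xi'\wedge\d z$ is the natural volume form on $T^\ast\Sigma_+\times\C$. This is a non-commutative $B$-model structure. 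It is pure and non-degenerate everywhere, since $\ip{\phi,\br\phi}$ is proportional to $\Omega\wedge\br\Omega$, regardless of whether $\alpha$ vanishes; only the Poisson tensor $z^n\partial_{\xi'}\wedge\partial_w$ degenerates. So your first check is automatic.

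Relatedly, this transition function makes the bundle $\mathcal{O}(Z_\pm)\cong K_\Sigma$: in the frame $\alpha$, sections may have poles of order $n$ at a zero of order $n$. That is the cotangent bundle, as in the paper, not $K_\Sigma^{-1}(-Z_\pm)$. You should also state the gluing over $\Sigma_+\cap\Sigma_-$: trivialize by $\alpha$, then complex-conjugate both the fibre and the $\R^2$ factor.

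For your second check, replace the phrase ``the brane sits inside a twisted fibre'' with the paper's mechanism.
The open-string fields on the brane $\xi'=0$ are $\Omega^{0,\ast}(\C^2)[\epsilon]\otimes\mf{gl}_N$, with $\epsilon$ the odd normal direction.
The Poisson tensor adds $z^n\epsilon\,\partial_w$ to the differential.
On $z\neq 0$, the match with the $A/B$ open-string fields is $z^n\epsilon=\d w$.
Regularity of the $\epsilon$-component at $z=0$ is then exactly an order-$n$ pole of $A_w$. This is consistent with your proposed cross-check against holomorphic BF theory deformed by $\alpha\,\op{tr}(A\partial A)$.
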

\begin{proof}
Over $\Sigma_+$ the generalized CY is the ordinary CY3 
$$T^\ast \Sigma_+ \times \C$$
 with form $\Omega + \alpha$, where $\Omega$ is the natural holomorphic $3$-form on the Calabi-Yau $T^\ast \Sigma \times \C$.  On $\Sigma_-$ it is 
$$T^\ast \Sigma_- \times \br{\C}.$$
On $\Sigma_+ \cap \Sigma_-$ it is the mixed $A/B$ model geometry
\begin{equation*} 
\Sigma_+ \cap \Sigma_- \times \R^4 
\end{equation*}
with 
$$\phi = e^{\i \omega} \alpha$$ 
On $\Sigma_+ \cap \Sigma_-$ we can trivialize the cotangent bundle of $\Sigma$ using $\alpha$, and use this trivialization to glue $(\Sigma_+ \cap \Sigma_-) \times \R^4$ to the other patches.  More explicitly, the one-form $\alpha$ gives an embedding of Calabi-Yau manifolds
$$
(\Sigma_+ \cap \Sigma_-) \times \C^2 \subset T^\ast \Sigma_+ \times \C
$$ 
where the CY structure on $(\Sigma_+ \times \Sigma_-) \times \C^2$ is given by the $3$-form $\alpha \d p \d q$, where $p,q$ are holomorphic coordinates on $\C^2$. This embedding is the identiy on $\Sigma_+ \times \Sigma_-$ and on the copy of $\C$ with coordinate $q$. It sends the function $p$ to the linear function on the cotangent fibres of $\Sigma_+$ given by $1/\alpha$.  

A $B$-field transformation makes the generalized CY structure on $(\Sigma_+ \cap \Sigma_-) \times \R^4$ equivalent to the one given by the product of the symplectic structure on $\R^4$ and the Calabi-Yau structure on $\Sigma_+ \cap \Sigma_-$. 

Similarly, we have an embedding 
$$
(\Sigma_+ \cap \Sigma_-) \times \br{\C}^2 \subset T^\ast \Sigma_- \times \br{\C}
$$
Here, we view $(\Sigma_+ \times \Sigma_-) \times \br{\C}^2$ as a Calabi-Yau manifold with volume form $-\alpha \d \br{p} \d \br{q}$, using the opposite complex structure on $\C^2$. This embedding is the same as above, except it sends the function $\br{p}$ -- holomorphic in the complex structure we are using -- to the function $1/\alpha$ linear on the cotangent fibres of $\Sigma_-$. 

Now, $(\Sigma_+ \cap \Sigma_-) \times \br{\C}^2$ and   $(\Sigma_+ \cap \Sigma_-) \times \C^2$ are isomorphic generalized complex manifolds, by an isomorphism which is the identity on the underlying manifolds but involves a $B$-field transformation.  This allows us to build our global generalized CY manifold.

The resulting manifold is of the form $V \times \R^2$ where $V \to \Sigma$ is a rank $2$ real vector bundle over $\Sigma$.  On $\Sigma_{\pm}$, $V$ is the cotangent bundle of $\Sigma$, but it is glued together on the overlap by trivializing using $\alpha$ and then applying complex conjugation.

Let us now study what happens at the zeroes of $\alpha$.  Near each zero, we describe the system as a non-commutative $B$-model.  Let us first look at the open-string field theory on a brane in the $B$-model without a Poisson tensor. Consider $\C^3$ with coordinates $u_1,u_2,z$ and a brane placed at $u_1 = 0$. The open-string states on the brane is
\begin{equation}
\Omega^{0,\ast}(\C^2)[\eps] \otimes \mf{gl}_N
\end{equation}
were $\eps$ is a fermionic parameter of ghost number $1$. This complex is given the differential $\dbar$. When we shift the ghost number by $1$ to produce the space of fields of the string field theory, $\eps$ becomes of ghost number $0$ and is the scalar field related to motion of the brane in the normal direction.  

To start with let us assume we have a simple zero. Let us turn on a Poisson tensor $z \partial_{u_1} \partial_{u_2}$ corresponding to a one-form $z \d z$.  This deforms the action by introducing the differential $z \eps \partial_{u_2}$ .

On the locus $z \neq 0$, the $B$-model space-time $\C^3$ is identified with the mixed $A/B$ model geometry $\R^4 \times \C$ by the transformation $q = u_2$, $p = z^{-1} u_1$.  Similarly, we can identify the $B$-model open string field theory with the mixed $A/B$ model by identifying 
\begin{equation}
z \eps = \d q. 
\end{equation}
This is natural, because $\eps$ transforms in the opposite way to $u_1$. 

What this means is that on the defect, at $z = 0$, the field $z^{-1} \d q$ is regular, i.e.\ the $\d q$ component of the 4d CS gauge field is allowed to have a pole.  This is precisely the disorder defect at a first-order pole of the one-form studied in \cite{Costello:2019tri}.    

If we have a zero of order $k$ in the one-form, the same argument shows that the $\d q$ component of the gauge field has a pole of order $k$.
 
Since this analysis goes through whether we use chiral or anti-chiral defects, the result follows.
\end{proof}

\section{Second-order poles in the one-form} \label{sec:neuman}
In \cite{Costello:2019tri} it was important to consider one-forms which have second order poles, as well as zeroes. This is of course necessary if our Riemann surface is $\CP^1$. The most familiar integrable field theories arise in this way. For instance, the principal chiral model arises from a one-form on $\CP^1$ which has two second order poles and two zeroes.

The boundary conditions used in \cite{Costello:2019tri} at second-order poles are of Dirichlet type: they require that the 4d CS gauge field $A$ vanishes at the location of a second-order pole.  If we start with 4d CS with gauge group $G$, this will lead to models which have a $G$-symmetry associated to every second order pole. This is because Dirichlet boundary conditions force gauge transformations to be the identity at the location of the second-order pole, so that constant gauge transformations at the pole become symmetries.

For example, the principal chiral model is a $\sigma$-model with target $G$, and so has two $G$ symmetries from the left and right actions. These symmetries come from the $G$-action at the two second-order poles of the one-form on $\CP^1$.

Clearly this is not consistent with a holographic analysis, since the gravity dual of a large $N$ gauge theory can not see anything charged under $U(N)$.  When one studies the holographic dual of three-dimensional vector models, this problem is resolved by coupling them to Chern-Simons gauge theory, which does not introduce any new degrees of freedom but restricts us to gauge-invariant states.  

We can do something similar for two-dimensional theories. Instead of coupling to Chern-Simons theory, we can couple to two-dimensional topological BF theory. This does not introduce any new degrees of freedom, but does impose gauge invariance.  

Coupling the $2d$ integrable model to topological $BF$ theory means that we introduce a $GL(N)$ gauge field $A$ which couples to the current $J$ of the $2d$ model, and also a Lagrange multiplier field enforcing $F(A) = 0$.  The new terms in the Lagrangian are
\begin{equation}
\int B F(A) + F(A) J\,.
\end{equation} 
There is such a $GL(N)$ symmetry for each second order pole in the one-form $\alpha$. The current $J$ for the symmetry is expressed in terms of the Lax matrix of the $2d$ model, which is the $4d$ Chern-Simons gauge field. The boundary conditions mean that, at each second order pole, the Lax matrix vanishes; but the current is the first derivative of the Lax matrix at the pole. 

Let us see what we need to do to $4d$ Chern-Simons to introduce this coupling.  Let us work in a  patch where the one-form has a second-order pole, and takes the form 
\begin{equation}
2 \pi \i \alpha = z^{-2} \d z + a z^{-1} \d z + O(1).
\end{equation}
Initially, we have  Dirichlet boundary conditions require that $A_{4d} =  0$ at $z = 0$.  Since the theory with Dirichlet boundary conditions has a global symmetry, we can couple to a background two-dimensional $GL(N)$ gauge field $A_{2d}$. This is achieved by modifying the boundary conditions so that $A_{4d} = A_{2d}$ at $z = 0$.

It is standard in many contexts that gauging Dirichlet boundary conditions result in Neumann boundary conditions. We will argue that this is the case here: we claim that gauging the $2d$ symmetry replaces Dirichlet boundary conditions with a modified Neumann boundary condition, which imposes the constraint that 
\begin{equation}
\half a A + \partial_z A = 0 
\end{equation} 
at $z = 0$. 

In this Neumann type boundary condition, we also require that the gauge transformations $\chi$ satisfy the same equation, $\half a \chi + \partial_z \chi = 0$ at $z = 0$.  With this restriction on the gauge transformations the Neumann-type boundary conditions are gauge invariant. To see this, note that gauge variation of the Chern-Simons Lagrangian is
\begin{equation} 
\delta CS(A) = \d\op{tr} (\chi F(A)) + \d \op{tr}( \d_A \chi \wedge A). 
\end{equation} 
After integrating by parts we find that the gauge variation of the action is
\begin{equation} 
\int \d \alpha \wedge \left(  \op{tr} (\chi F(A)) + \d \op{tr}( \d_A \chi \wedge A) \right) 
\end{equation}
Since $\d \alpha = a \delta_{z = 0} - \partial_z \delta_{z = 0}$, a further integration by parts tells us the gauge variation is
\begin{equation} 
\int_{z = 0} (a + \partial_z)  \left(  \op{tr} (\chi F(A)) + \d \op{tr}( \d_A \chi \wedge A) \right) 
\end{equation}
which vanishes if both $\chi$ and $A$ are annihilated by $\half a + \partial_z$.
 
Now let us see why these modified Neumann boundary conditions lead to two-dimensional BF theory. This is easiest to see when we consider pure $4d$ Chern-Simons with no defects.  There, the one form globally is $\d z$, with a second order pole at $\infty$ with no residue.  We need to show that the two-dimensional theory we find by compactifying on $\CP^1$ is BF theory.

Standard arguments  show that the field content in two dimensions can be read from the zero modes of the theory on $\CP^1$; there are no higher KK modes in this context. The zero modes, in turn, are the Dolbeault cohomology of $\CP^1$ with coefficient in the gauge bundle, with appropriate boundary conditions.

Let $\Omega^{0,\ast}(\CP^1, \partial_{\infty})$ be the Dolbeault complex consisting of Dolbeault forms whose first deriavative in $z$ vanishes when restricted to $\infty$.  It is not hard to show that
\begin{equation} 
\begin{split} 
H^0  \Omega^{0,\ast}(\CP^1, \partial_{\infty}) & = \C \\
 H^1  \Omega^{0,\ast}(\CP^1, \partial_{\infty}) & = \C
\end{split} 
\end{equation}
Indeed, there is only one holomorphic function whose first derivative vanishes at $\infty$, which is the constant function.  To calculate $H^1$ we can use the short exact seqence of cochain complexes
\begin{equation}
0 \to \Omega^{0,\ast}(\CP^1, \partial_{\infty}) \to \Omega^{0,\ast}(\CP^1) \to \C \to 0
\end{equation} 
where the map to $\C$ is given by differentiating at $\infty$.   The long exact sequence in cohomology immediately tells us that the first cohomology is $\C$. It is represented by $\delta_{z = 0}$; this $(0,1)$ form is $\dbar z^{-1}$, but since the first derivative of $z^{-1}$ does  not vanish at $\infty$, $\delta_{z = 0}$ is not exact in the complex we are considering. 

This analysis tells us that the effective two-dimensional theory has a field $A \in \Omega^1(\R^2) \otimes \mf{gl}_N$, which is constant on $\CP^1$ and comes from the class in $H^0    \Omega^{0,\ast}(\CP^1, \partial_{\infty})$; and a field $B \in \Omega^0 (\R^2)\otimes \mf{gl}_N$, whose uplift to $\R^2 \times \CP^1$ is the $(0,1)$ form $B \delta_{z = 0}$. The $4d$ Chern-Simons action on these fields reduces to the BF action. 
 
This analysis immediately implies that if we introduce order defects to $4d$ Chern-Simons on $\R^2 \times \CP^1$, the effective $2d$ theory is obtained from the integrable models studied in \cite{Costello:2019tri} by coupling topological BF theory to the gauge symmetry at the second order pole of $\d z$. In the case of disorder defects, we can have more than one second order pole in the one form. Every time we change the boundary condition at a second order pole from Dirichlet to Neumann, we introduce $2d$ BF gauge fields which gauge the symmetry living at the Dirichlet boundary condition.  

\section{Backreactions in generalized complex geometry}
Our goal is to study the integrable field theory holographically. To do this, we will analyze the backreaction.  A  backreaction is a modification of the generalized CY structure on $V \times \R^2$ to introduce a source on the brane. This means that the mixed-degree form $\phi$ defining the geometry is no longer closed, but satisfies
\begin{equation} 
\d \phi = 2\pi\i \lambda \delta_{\Sigma \times \R^2} 
\end{equation} 
where $\lambda$ is the 't Hooft coupling. (The factor of $2 \pi \i$ is a convenient normalization of the coupling constants). On the complement of the brane, this defines a new generalized complex structure which is the backreacted geometry.

We will first analyze the backreacted geometry in the context when $\lambda$ is a formal parameter. Then we will see how to do this when $\lambda$ is finite, under some very strong hypothesis on $\what{\alpha}$.  

  We start with the case with no defects, where the brane wraps $\R^2 \times \C$ inside $\R^4 \times \C$.  Viewing $\R^4$ as the cotangent bundle of $\R^2$, it is natural to give coordinates $q$ on the $\R^2$ which wraps the brane, and $p$ on the $\R^2$ normal to the brane.  We take $p,q$ to be holomorphic coordinates in a chosen complex structure on $\R^4 = \C^2$, in such a way that when we have chiral defects observables are holomorphic functions of $q$, and with anti-chiral defects they are holomorphic functions of $\br{q}$.

The complement of the brane is $\R_{> 0} \times S^1 \times \R^2 \times \C$, where we choose polar coordinates $(\abs{p},\theta= \op{Arg} p)$ on the copy of $\R^2$ normal to the brane.   The back-reacted geometry is given by the generalized CY with mixed-degree form
\begin{equation} 
\phi = \exp ( \op{Im} \d p \d q ) ( \d z + 2 \pi \i \lambda \d \theta ). 
\end{equation}
Equivalently,  the backreacted geometry is  the flat bundle over $\R_{> 0} \times S^1 \times \R^2$ with fibre $ \C$, where the monodromy around $S^1$ is the automorphism of $\C$ which sends $z \mapsto z + 2 \pi \i \lambda \d \theta$ (where $\lambda$ is the 't Hooft coupling).

\subsection{The framing anomaly of $4d$ CS}\label{sec:framing}
$4d$ Chern-Simons has a framing anomaly \cite{Costello:2017dso} which means that, as Wilson lines bend, they also need to move in the spectral parameter plane. 

We can see the framing anomaly holographically. Before we backreact,  a Wilson line placed at $z_0 \in \C$, in the sum of the exterior powers of the fundamental representation, is given by a Lagrangian brane in the $A$-model on $\R^2 \times z_0$.  Let us give $\R^4$ coordinates $p_1,p_2,q^1,q^2$ with symplectic form $\d p_i \d q^i$, and place the $4d$ Chern-Simons brane at $p_i = 0$.  A Wilson line wrapping a path $K \subset \R^2$ is represented by a Lagrangian brane which wraps the conormal to $K$.

We will see that the backreaction of the $4d$ Chern-Simons brane forces the brane to move in the $z$ plane if the corresponding Wilson line bends. Suppose the Wilson line is on the path $q^i(t)$.  Then the Lagrangian brane lives on the conormal, which has coordinates $s,t$ such that
\begin{equation}
\begin{split} 
 q^i(s,t) &= q^i (t) \\
p_i (s,t) &= s \eps_{ij} \frac{ \partial q^j(t) } { \partial t } \\  
\end{split}  
\end{equation}
Let $\phi(t) = \op{arctan} ( \dot{q}^2(t)  / \dot{q}^1(t) )$ be the angle of the Wilson line at time $t$ to the horizontal axis.  In the momentum plane $p_1,p_2$, the Lagrangian submanifold at time $t$ consists of a line in the direction $\theta(t) = \phi(t) + \pi/2$.   

In the backreacted geometry, the new holomorphic coordinate is $z + 2\pi \i \lambda \theta$ (where as before $\theta$ is the angular coordinate on the momentum plane).  A brane must live at a fixed value of this new coordinate.  Since the angle $\theta$ changes as we move along the Wilson line, so must the position $z(t)$ so that $z + 2 \pi \i \lambda \theta$ remains constant.  Clearly we must have
\begin{equation}
z(t) = -2 \pi i \lambda \phi(t) + z_0 
\end{equation}
Since $\lambda = \hbar N$ and $N$ is the dual Coxeter number of $\mf{gl}_N$, this is exactly the framing anomaly discussed in \cite{Costello:2019xhj}, up to a normalization of $2 \pi \i$.   

\subsection{Backreacting in the presence of defects } 
Consider a classical spectral curve $(\Sigma,\Sigma_+, \Sigma_-,\alpha)$. We want to backreact the generalized complex geometry in the $\Sigma_+$ patch. This is given by a non-commutative $B$-model type geometry, with underlying complex manifold $T^\ast \Sigma_+ \times \C$.  If $\omega$ denotes the canonical holomorphic symplectic structure on $T^\ast \Sigma_+$, the holomorphic volume form is $\Omega = \omega \d q$. The generalized CY structure is given by $\Omega + \alpha$. 

To backreact, we add to this a one-form on $T^\ast \Sigma_+$ which has a first order pole on $\Sigma_+$ with residue $\lambda$.  This data is equivalent to giving a $\lambda$-connection on the canonical bundle of $\Sigma_+$. The one-form is the connection one-form on the frame bundle. 

On the anti-chiral patch, we need to give a one-form on $T^\ast \Sigma_-$ with a pole on $\Sigma_-$ whose residue is $-\lambda$. The difference in sign of the residue is because gluing the chiral and anti-chiral patches involves complex conjugation on the fibres of the cotangent bundle, which reverse orientation. 

Let us choose a local coordinate $z$ on $\Sigma_+$, and trivialize the cotangent bundle using the one-form $\d z$. Let $p$ denote the corresponding holomorphic function on the cotangent fibres. In the holomorphic coordinates $p,q,z$, before backreaction the generalized CY structure is given by
\begin{equation} 
\d p \d q \d z + \alpha 
\end{equation}
where $\alpha = \partial \gamma(z)$ for some locally-defined holomorphic function $\gamma$. 

After backreaction, we change the generalized CY structure to one of the form
\begin{equation} 
\d p \d q \d z + \lambda \d \log p + \mu(z,\lambda) 
\end{equation}
where $\mu(z,\lambda)$ is a one-form which at $\lambda = 0$ is the original one-form $\alpha$.  In these coordinates, where the canonical bundle is trivialized by $\d z$, $\mu(z,\lambda)$ is the connection one-form.

\subsection{Gluing the backreacted geometries}
We have seen that, if $(\Sigma, \Sigma_+, \Sigma_-,\alpha)$ is a classical spectral curve, then to build the backreacted geometry we need the following data:
\begin{enumerate} 
\item A $\lambda$-connection $\nabla_+$  on $\Sigma_+$. 
\item A $-\lambda$-connection $\nabla_-$  on $\Sigma_-$.  
\end{enumerate}
We let $\what{X}_{\pm}$ be the backreacted geometries in each patch.

Here we will see how if we have a solution to the Bernoulli equation, we can glue the patches together. 

Before we give the construction, let us make an observation which explains why the Bernoulli equation is useful.   We let $\what{X}_+ = (T^\ast \Sigma_+\setminus \Sigma_+ ) \times \C$ be the backreacted geometry on the $\Sigma_+$ patch. This only depends on the $\lambda$-connection $\nabla_+$ on $\Sigma_+$.  
\begin{lemma} 
For any nowhere-vanishing one-form $\what{\alpha}$ on a patch of $\Sigma_+$,let $p$ be the coordinate on the cotangent fibres associated to $\what{\alpha}$.

Then $\what{\alpha}$ satisfies the Bernoulli equation if and only if the generalized CY form on $\what{X}_+$ is
\begin{equation} 
e^{\d p \d q} ( \lambda \d \log p + \what{\alpha} ) = \d p \d q \d \what{\alpha} + \lambda \d \log p + \what{\alpha}.
\end{equation} 
\end{lemma}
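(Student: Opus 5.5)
The plan is to reduce the statement to the earlier lemma characterizing the Bernoulli equation: $\what{\alpha}$ satisfies $\nabla_+\what{\alpha}=\what{\alpha}^2$ if and only if, in the frame of $K_{\Sigma_+}$ given by $\what{\alpha}$, the $\lambda$-connection takes the form $\nabla_+=\lambda\partial+\what{\alpha}$. Everything then comes down to understanding how the backreacted generalized CY form on $\what{X}_+$ changes when we change the trivialization of the canonical bundle, and hence the fibre coordinate $p$.

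First I will recall the form in a frame $\d z$ with fibre coordinate $p_z$:
\[
\d p_z \,\d q\, \d z + \lambda \,\d\log p_z + \mu_z,
\]
where $\mu_z$ is the connection one-form, so $\nabla_+=\lambda\partial+\mu_z$ in this frame. Now take another nowhere-vanishing one-form $\omega=f(z)\,\d z$. The linear coordinate on the cotangent fibres associated to $\omega$ is $p=p_z/f$, since $\omega$ and $\d z$ scale oppositely. I will compute three pieces in these new coordinates.

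\begin{itemize}
\item $\lambda\,\d\log p_z=\lambda\,\d\log p+\lambda\,\d\log f$.
\item The holomorphic volume form is $\d p_z\,\d q\,\d z=\d(fp)\,\d q\,\d z=\d p\,\d q\,(f\,\d z)=\d p\,\d q\,\omega$. The term $p\,\d f\,\d q\,\d z$ drops out because $\d f\wedge\d z=0$.
\item The one-form part is $\lambda\,\d\log p+(\mu_z+\lambda\partial\log f)$.
\end{itemize}

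The bracket $\mu_z+\lambda f^{-1}\partial f$ is exactly the connection one-form of $\nabla_+$ in the frame $\omega$, using the rule $\lambda\partial+\omega^{-1}\nabla\omega$ from the earlier proof. So in the coordinates adapted to $\omega$, the form is always
\[
\d p\,\d q\,\omega+\lambda\,\d\log p+A_\omega,
\]
where $A_\omega$ is the connection one-form in that frame.

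Next I will rewrite the claimed expression. Since $\d\what{\alpha}=0$ on a curve, $e^{\d p\,\d q}(\lambda\,\d\log p+\what{\alpha})$ has three-form part $\d p\,\d q\,\what{\alpha}$; the $\lambda\,\d p\,\d q\,\d\log p$ term vanishes. I will read the $\d\what{\alpha}$ in the displayed right-hand side as this three-form, correcting the notation if needed. Setting $\omega=\what{\alpha}$, the form equals the claimed one exactly when $A_{\what{\alpha}}=\what{\alpha}$, which by the earlier lemma is equivalent to the Bernoulli equation. Both directions follow at once.

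The main obstacle is the bookkeeping of the coordinate change: the direction of the scaling $p=p_z/f$, and checking that the cross term in the volume form genuinely vanishes rather than shifting the one-form part. I would verify this by working entirely in the holomorphic coordinates $(p,q,z)$ on the non-commutative $B$-model patch, with no $B$-field transformation involved. Equivalence up to exact $B$-fields is not needed, because the identity holds on the nose.
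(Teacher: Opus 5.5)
Your proposal is correct and follows the paper's own argument: in the fibre coordinate adapted to $\what{\alpha}$ the three-form becomes $\d p\,\d q\,\what{\alpha}$, the one-form part is $\lambda\,\d\log p$ plus the connection one-form in the frame $\what{\alpha}$, and the earlier lemma identifies the Bernoulli equation with that connection one-form being $\what{\alpha}$. Your explicit change of frame from $\d z$ to $f\,\d z$ spells out the frame-independence that the paper states without checking. You are also right that the $\d\what{\alpha}$ in the displayed formula should read $\what{\alpha}$, matching the three-form $\d p\,\d q\,\what{\alpha}$ in the paper's proof.
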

Having a generalized CY of this form is useful, because it is very close to that given by a product of a symplectic manifold and a Calabi-Yau curve. 

\begin{proof}
Since $p$ is the coordinate obtained from the trivialization given by $\what{\alpha}$, the holomorphic symplectic form on $T^\ast \Sigma_+$ is $\what{\alpha} \d p$, so the holomorphic $3$-form on $T^\ast \Sigma \times \C$ is $\d p \d q \what{\alpha}$.

The one-form component is given by $\lambda \d \log p$ plus the connection one-form in the chosen frame. The Bernoulli equation is the statement that the connection one-form in the frame given by $\what{\alpha}$ is $\what{\alpha}$.  
\end{proof}

For now we work in series in $\lambda$. We let 
\begin{equation} 
U_+ \subset \what{X}_+ 
\end{equation}
be the open subset living over $\Sigma_+ \cap \Sigma_-$, and similarly we have $U_- \subset \what{X}_-$. 
 
 On $\Sigma_+ \cap \Sigma_-$, we can trivialize the cotangent bundle using $\what{\alpha}$, and we let $p_{\pm}$ be the coordinates on $U_{\pm}$ coming from this trivialization.  In this trivialization we have
$$U_+ = (\C \setminus 0) \times \C \times (\Sigma_+ \cap \Sigma_-)$$ 
with the generalized CY structure  given by the form
\begin{equation} 
\d p_+ \d q_+ \what{\alpha} + \lambda \d \log p_+ +  \what{\alpha}. 
\end{equation}
Here $q_+$ is a coordinate on the second copy of $\C$. 
 
Similarly, 
$$U_- = (\C \setminus 0) \times \C \times (\Sigma_+ \cap \Sigma_-)$$ 
with coordinates $p_-, q_-$ on the two copies of $\C$ and generalized CY structure given by 
\begin{equation} 
\d p_- \d q_- \what{\alpha} - \lambda \d \log p_- + \what{\alpha}. 
\end{equation} 
We need to show that these are equivalent after a $B$-field transformation and a coordinate transformation.

  We let
\begin{equation} 
V = \frac{1}{\what{\alpha}} 
\end{equation} 
be the vector field on $\Sigma_+ \cap \Sigma_-$ inverse to $\what{\alpha}$. 

Define a formal diffeomorphism
\begin{equation} 
f_\lambda = e^{2 \lambda (\log \abs{p_+}) V } : U_+ \to U_+  
\end{equation}

Define an isomorphism
\begin{equation} 
\rho : U_+ \iso U_- 
\end{equation}
which sends $q_+ \to \br{q}_-$, $p_+ \to -\br{p}_-$ and is the identity on $\Sigma_+ \cap \Sigma_-$. 
\begin{lemma} 
Under the isomorphism 
$$
\rho \circ f_\lambda : U_+ \to U_-
$$
the generalized CY structures on $U_+$ and $U_-$ agree up to a $B$-field transformation. 
\end{lemma}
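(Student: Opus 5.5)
The plan is to compute both generalized CY forms in a single set of coordinates on $U_+$ and to check that they differ by $e^{B}$ for a closed real two-form $B$. The first step is to use the lemma above to rewrite both forms as products of an exponentiated two-form with a one-form:
\begin{equation*}
\phi_+ = e^{\d p_+ \d q_+}\big(\lambda \d \log p_+ + \what{\alpha}\big), \qquad \phi_- = e^{\d p_- \d q_-}\big(-\lambda \d \log p_- + \what{\alpha}\big).
\end{equation*}
Here I use that $\d p \d q \wedge \d \log p = 0$, and that $(\d p\, \d q)^2 = 0$, so no higher-degree terms appear. Writing the forms this way turns the comparison into two separate checks: one on the two-form in the exponent, and one on the one-form.

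Next I pull back by $\rho$. Since $\rho$ sets $p_- = -\br{p}_+$ and $q_- = \br{q}_+$, and is the identity on $\Sigma_+ \cap \Sigma_-$, we get
\begin{equation*}
\rho^\ast \phi_- = e^{-\d \br{p}_+ \d \br{q}_+}\big(-\lambda \d \log \br{p}_+ + \what{\alpha}\big).
\end{equation*}
The sign in $p_+ \mapsto -\br{p}_+$ does not affect $\d \log$. Comparing with $\phi_+$, the one-form parts differ by $\lambda \d\log p_+ + \lambda \d \log \br{p}_+ = 2\lambda\, \d \log\abs{p_+}$. This discrepancy is what $f_\lambda$ is designed to absorb.

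The key step is computing $f_\lambda^\ast$. The map $f_\lambda$ moves only the point of the curve, by the time-$t$ flow of $V$ with $t = 2\lambda \log\abs{p_+}$, and it leaves $p_+$ and $q_+$ fixed. Locally write $\what{\alpha} = \d h$ with $V h = 1$, which holds because $\iota_V \what{\alpha} = 1$. Then $h \circ e^{tV} = h + t$, even when $t$ depends on $p_+$. Hence
\begin{equation*}
f_\lambda^\ast \what{\alpha} = \what{\alpha} + 2\lambda\, \d\log\abs{p_+}.
\end{equation*}
This identity holds exactly as a formal series in $\lambda$, which is legitimate because $\log\abs{p_+}$ is smooth away from the brane. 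Applying it, and using that $f_\lambda$ fixes the $p_+, q_+$ coordinates, gives
\begin{equation*}
f_\lambda^\ast\rho^\ast\phi_- = e^{-\d\br{p}_+\d\br{q}_+}\big(\what{\alpha} + \lambda \d\log p_+ + \lambda\d\log\br{p}_+ - \lambda \d\log \br{p}_+\big) = e^{-\d\br{p}_+\d\br{q}_+}\big(\what{\alpha} + \lambda\d\log p_+\big).
\end{equation*}

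Finally I set $B = -\d p_+ \d q_+ - \d \br{p}_+ \d \br{q}_+ = -2\op{Re}(\d p_+ \d q_+)$. This form is real and exact, for example $B = -\d\op{Re}(p_+ \d q_+)\cdot 2$, so it is a gauge-trivial $B$-field. Since all the two-forms involved commute, $e^{B}e^{\d p_+\d q_+} = e^{-\d\br{p}_+\d\br{q}_+}$, and therefore $e^{B}\phi_+ = f_\lambda^\ast\rho^\ast\phi_-$, which proves the lemma.

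I expect the main obstacle to be the middle step: justifying the exact pullback formula for a flow whose time parameter depends on the fibre coordinate $p_+$, and handling it as a formal diffeomorphism in $\lambda$. A secondary point is keeping track of the orientation sign from $p_+ \mapsto -\br{p}_+$ and the opposite signs of $\pm\lambda$ in the two patches; I would verify these by a direct check of the degree-three component.
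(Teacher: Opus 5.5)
Your proposal is correct and follows the paper's proof. You pull back by $\rho$, observe that the one-form parts differ by $2\lambda\,\d\log\abs{p_+}$, absorb this discrepancy using $f_\lambda^\ast\what{\alpha} = \what{\alpha} + 2\lambda\,\d\log\abs{p_+}$, and account for the two-form parts with a real exact $B$-field. The differences are cosmetic: the paper first brings both forms to the mixed $A/B$ shape $e^{\i\op{Im}\d p\,\d q}(\cdots)$, whereas you apply the single $B$-field $-2\op{Re}(\d p_+\d q_+)$ at the end, and your local primitive $h$ with $Vh=1$ justifies the pullback identity that the paper only asserts.
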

\begin{proof}
First, a $B$-field transformation on each generalized CY takes that on $U_+$ to the one
\begin{equation} 
\phi_+ = e^{\i \op{Im} \d p_+ \d q_+} \left(  \what{\alpha} + \lambda \d \log p_+ \right) 
\end{equation}
and on $U_-$ to
\begin{equation} 
\phi_- = e^{-\i \op{Im} \d p_- \d q_-} \left(  \what{\alpha} - \lambda \d \log p_- \right) 
\end{equation}
The isomorphism $\rho : U_+ \to U_-$ allows us to view both generalized CY structures as forms on $U_+$.  
We drop the subscripts $\pm$, and write $p = p_+$, $q = q_+$, $\br{p} = p_-$, $\br{q} = q_-$. We use polar coordinates $p = \abs{p} e^{2 \pi \i \theta}$.   Then, the generalized CY structures on $U_+$  and $U_-$ are
\begin{equation} 
\begin{split} 
\phi_+ &=  e^{\i \op{Im} \d p \d q} \left(  \what{\alpha} + 2\pi \i  \lambda \d \theta+ \lambda \d \log \abs{p}    \right) \\
 \rho^\ast \phi_- &= e^{\i \op{Im} \d p \d q} \left(  \what{\alpha} + 2 \pi \i \lambda \d \theta - \lambda \d \log \abs{p} \right) 
\end{split} 
\end{equation}
They differ by the sign of the $\lambda \d \log \abs{p}$ term.  The isomorphism $f_\lambda$ has the feature that
\begin{equation} 
f_\lambda^\ast \what{\alpha}  = \what{\alpha} + 2 \lambda \log \abs{p}. 
\end{equation} 
so that
\begin{equation} 
f_\lambda^\ast\rho^\ast \phi_- = \phi_+.  
\end{equation}
\end{proof}

\subsection{The RG flow}
In the construction of the backreacted geometry, the submanifolds $\what{X}_{\pm}$ have a scaling symmetry, where we scale both $p$ and $q$. However, the transformation $f_\lambda$ which we use to glue the two patches together does not commute with this symmetry.  Instead, if we perform an infinitesimal scaling where $q$ is sent to $(1-\log \mu)q$ and $p$ to $(1+\log \mu) p$, then the transformation $f_\lambda$ is sent to 
\begin{equation} 
2 \lambda V f_\lambda.  
\end{equation} 
This tells us that the geometry where we have performed an infinitesimal rescaling is equivalent to the original construction of the geometry but for a new spectral curve. The new spectral curve is one where the gluing of $\Sigma_+$ to $\Sigma_-$ is modified by precomposing with the vector field $\lambda V$ on $\Sigma_+$. 

This is exactly the RG flow for planar spectral curves as given in definition \ref{def:rgflow}.

This modified spectral curve is the one obtained from the Beltrami differential $2 \lambda V \in H^1(\Sigma, T \Sigma)$.

In this construction, it is very important that the Lie derivative of $V$ preserves the connections $\nabla_{\pm}$. We have seen \ref{lemma:bernoullivector} that this is equivalent to the Bernoulli equation.  This is what guarantees that the new curve has the data of a planar spectral curve.

\subsection{RG flow for line defects}
Line defects in $4d$ CS come from branes of dimension $2$ in the dual geometry. These branes are best described in the mixed $A/B$ model frame, where the wrap Lagrangian branes in the four $A$-model direction and a point in the $B$-model direction. 

In the backreacted geometry, the mixed $A/B$ model description appears over the region $\Sigma_+ \cap \Sigma_-$, where the geometry is
\begin{equation} 
U_t = \Sigma_+ \cap \Sigma_- \times (\R^2 \setminus 0) \times \R^2.  
\end{equation}
The generalized CY structure is
\begin{equation} 
\phi_t = e^{\i \op{Im} \d p \d q} (  \what{\alpha} + 2 \pi \i \lambda \d \theta).
\end{equation} 
In this topological patch, when we set $\lambda = 0$,  the geometry is simply a product of the one-dimesional Calabi-Yau $\Sigma_+ \cap \Sigma_-$ (with one form $\alpha$) and the symplectic manifold $(\R^2 \setminus 0) \times \R^2$, with symplectic form restricted from that on $\R^4$.   When we back react, we deform the one-form $\alpha$ to $\what{\alpha}$, and we also modify the geometry so that it is no longer a product but a fibration $\Sigma \to (\R^2 \setminus 0) \times \R^2$. The fibration has monodromy, as we go around the $\theta$ plane, $2 \pi \i \lambda V$.

To understand this fibration structure, it suffices to work in a local patch where $\what{\alpha} = \d z$ and $V = \partial_z$, in which case we are in the situation discussed in section \ref{sec:framing}.  

We are interested in placing a Lagrangian brane at $\op{Im} p = 0$ and $\op{Im} q = 0$, and a fixed point in $\Sigma_+ \cap \Sigma_-$.  This makes sense, because when we restrict to a fixed angle $\theta$ the fibration of $\Sigma_+ \cap \sigma_-$ over $(\R^2 \setminus 0)$ is of course trivial.    

Our goal is to understand what this brane becomes in the chiral or anti-chiral patch.
 
Recall that the chiral patch of the geometry $\what{X}_+$ is $(T^\ast \Sigma_+ \setminus \Sigma_+) \times \C$. If we restrict to the region $U_+$ of the chiral patch living over $\Sigma_+ \times \Sigma_-$, and we trivialize the canonical bundle of $\Sigma_+ \cap \Sigma_-$ using $\what{\alpha}$, we can identify
\begin{equation} 
U_+ = (\Sigma_+ \cap \Sigma_-) \times (\C \setminus 0) \times \C 
\end{equation}
with form
\begin{equation} 
e^{\d p_+ \d q_+} ( \what{\alpha} + \lambda \d \log p_+). 
\end{equation}
After a $B$-field transformation this becomes
\begin{equation} 
\phi_+ =  e^{\i \op{Im} \d p_+ \d q_+} ( \what{\alpha} + \lambda \d \log p_+). 
\end{equation}
There is an obvious isomorphism $\rho : U_+ \to U_t$ which is the identity on $\Sigma_+ \cap \Sigma_-$ and sends $p_+ \to p$, $q_+ \to q$.  More or less as we did when we identified $U_+$ and $U_-$, we let
\begin{equation} 
g_\lambda = e^{\lambda \log \abs{p_+} V} : U_+ \to U_+ 
\end{equation} 
(note that when identifying $U_+$ with $U_-$ we used $2V$ instead of $V$).  Then,
\begin{equation} 
g_\lambda^\ast \rho^\ast \phi_t = \phi_+  
\end{equation} 
so that $\rho \circ g_\lambda$ is an isomorphism of generalized CY manifolds. 

If we take a Lagrangian brane in $U_t$, cut out by the equations $f(z) = C$, $\op{Im} p = 0$, $\op{Im} q = 0$, then it becomes the brane in $U_+$ cut out by the equations
\begin{equation} 
\begin{split} 
\op{Im} p_+ &= 0\\
\op{Im} q_+ &= 0\\
e^{\lambda \log p_+ V} f(z) = C. 
\end{split} 
\end{equation}
This means the following. The brane is a product of the real axis in the $q$-plane, and a line in $\Sigma_+ \cap \Sigma_-\times (\C \setminus 0)$. This second line lives on the real axis of the $p_+$ plane. Suppose at $p_+ =1$ this line is at some point $z \in \Sigma_+ \cap \Sigma_-$.  Then at some other value of $p_+$, the point in $\Sigma_+ \cap \Sigma_-$ has moved by $\lambda \log p_+$ along the trajectory of the vector field $V$.

Let us consider the moduli space of Lagrangian branes which wrap the line $\op{Re} q = 0$, and which live in the chiral patch (where we work in perturbation theory).  If we label a brane by its position at $p_+ = 1$, we see that this moduli space is $\Sigma_+$, where we should bear in mind the brane will behave badly at the location of disorder defects.   

The RG flow will be a vector field on the moduli space consisting of the planar spectral curve $\Sigma$, together with a point on $\Sigma$ corresponding to the brane.  Clearly, when we apply the RG flow, the position in $\Sigma$ of the brane flows by the vector field $\lambda V$.

 \subsection{Order defects}
There are two kinds of defects in $4d$ Chern-Simons: disorder defects, where the one-form has a zero, and order defects, where we introduce extra degrees of freedom. In this section we will briefly discuss the holographic dual of order defects.

Consider $4d$ Chern-Simons on $\Sigma \times \R^2$, where $\Sigma$ is a classical spectral curve. We can couple to chiral or anti-chiral fermions wrapping $z \times \R^2$, for some point $z \in \Sigma$.  The fermions we will consider will live in the fundamental plus anti-fundamental representation of $GL(N)$. The action coupling the two systems (for the case of chiral fermions) is
\begin{equation} 
\int_{\R^2} \psi_i \dbar \psi^i + \psi_i A_{\br{q}\  j}^i \psi^j 
\end{equation} 
Coupling chiral or anti-chiral fermions in this way will modify integrable field theory by introducing fermionic degrees of freedom coupled. 
If $\Sigma = \C$ with one form $\d z$, with chiral and anti-chiral fermions placed at $z_0$ and $z_1$,  then the resulting integrable field theory is the Thirring type model with Lagrangian
\begin{equation} 
\psi_i \dbar \psi^i + \br{\psi}_i \partial \br{\psi}^i + \frac{1}{(z_0 - z_1) } \psi_i \psi^j \br{\psi}_j \br{\psi}^i. \label{eqn_thirring} 
\end{equation}

In the string theory set-up, chiral and anti-chiral fermions are obtained byintroducing extra branes.  If we have a classical spectral curve $\Sigma = (\Sigma_+, \Sigma_-, \alpha)$, we can introduce chiral fermions at a point in $\Sigma_+$. We have seen that $4d$ CS comes from a brane in a generalized Calabi-Yau manifold which has a patch which is $T^\ast \Sigma_+ \times \C$. In this patch, the string theory is a non-commutative $B$-model. 

The brane which realizes chiral fermions is a $B$-model brane, which wraps the divisor $T^\ast_z \Sigma_+ \times \C$ living over a point $z_0 \in \Sigma_+$.   To get the correct parity of the strings stretched between this brane and the $4d$ Chern-Simons brane, this sheaf needs to have odd parity -- that is, it needs to be
$$
\Pi \Oo(T^\ast_z \Sigma_+ \times \C) \otimes \C^{N_f}.
$$ 
where we have $N_f$ flavours of fermions.

When we backreact, the one-form acquires a pole at the location of the brane.  In a coordinate patch $p,q,z$ where the $4d$ CS brane is at $p = 0$, and the brane introducing chiral fermions is at $z = 0$, the generalized CY giving the backreacted geometry is
\begin{equation} 
\d p \d q \d z + \lambda \d \log p - \frac{N_f}{N} \lambda \d \log z. 
\end{equation} 
This tells us the following: when we have a chiral fermion defect at a point $z \in \Sigma_+$, the $\lambda$-connection $\nabla_+$ acquires a first order pole at the defect, whose residue is $-\lambda \frac{N_f}{N}$. The presence of the pole tells us that the monodromy of the connection $\lambda^{-1} \nabla_+$ around the defect is $e^{-2 \pi \i N_f / N}$.

\section{Non-perturbative construction of the  backreacted geometry} 
Now let us discuss what happens when we do not work in series in $\lambda$, but instead have a planar spectral curve at finite $\lambda$. 

In this section we will make a very strong assumption on the nature of the solution $\what{\alpha}$ to the Bernoulli equation. In practice it is not clear that this assumption holds.  

We will assume that the region $\Sigma_+ \cap \Sigma_-$ is an annulus of some small radius $\eps$.  We will also assume that there exists a coordinate $z$ on this annulus, for $1-\eps < \abs{z} < 1+ \eps$, so that 
\begin{equation} 
\what{\alpha} = \kappa \lambda \d \log z 
\end{equation} 
for some positive real number  $\kappa$.   

The RG flow is given as usual by modifying the gluing of the two patches using the vector field 
$$
V = 1/\what{\alpha} = \kappa^{-1} z \partial_z.
$$
This vector field points along the radial direction of the cylinder.  As we move towards the UV, regluing using $V$ will stretch out the cylinder.  Deep in the UV the cylinder becomes infinite, and the chiral and anti-chiral degrees of freedom are decoupled.  In this limit the theory is essentially free, which we expect from asymptotic freedom of these models.

As we move towards the IR, in contrast, the cylinder well get smaller and at some point we can go no further.      As Davide Gaiotto explained to us, we expect that in a theory with a mass gap, the holographic dual geometry only exists in a region near the boundary (corresponding to the UV).    This phenomenon is visible in this example.

Unfortunately, we do not understand the non-perturbative models well enough to give a quantitive understanding of the mass gap or of the spectrum of massive particles. We hope to reconsider this question in the future.

Let us now describe how to build the non-perturbative geometry.   The non-perturbative geometry will be built by gluing three patches, associated to $\Sigma_+$, $\Sigma_-$, and the intersection $\Sigma_+ \cap \Sigma_-$.

The patch associated to $\Sigma_+$ is an open subset of the manifold $\what{X}_+$ we considered before. Recall that 
$$
\what{X}_+ = (T^\ast \Sigma_+ \setminus \Sigma_+) \times \C
$$
with the generalized CY structure given by the natural Calabi-Yau volume form, plus $\what{\alpha}$. 

We define an open subset of $\what{X}_+$ as follows. On $\Sigma_+ \cap \Sigma_-$, the cotangent bundle is trivialized by $\what{\alpha}$, leading to a holomorphic function $p_+$ on the cotangent fibres of $T^\ast \Sigma_+$. The function $p_+$ is defined on the patch of $\what{X}_+$ living over $\Sigma_+ \cap \Sigma_-$.

 We also have a coordinate $z_+$ on $\Sigma_+ \cap \Sigma_-$, because we assumed that this is a cylinder $1 - \eps < \abs{z_+} < 1 + \eps$.
 
 The open subset
 $$V_+ \subset \what{X}_+$$ 
 we are interested in is the  complement of the region where both $\abs{z_+} \ge 1-\eps$ and $\abs{p_+} \le 1$.  That is, the only points of $\what{X}_+$ we are removing are those that over the cylinder $\Sigma_+ \cap \Sigma_-$; on this cylinder, we are removing the locus where the momentum coordinate $\abs{p_+} \le 1$.

 We have a similar open subset 
 $$V_- \subset \what{X}_-.$$ 
  
The third patch we will build is a cylindrical geometry $Y$, defined as follows. Consider the generalized CY manifold $\C^\times \times \R^4$, where $\R^4 = \C^2$ has coordinates $p,q$ and $\C^\times$ has coordinate $z$. This has  generalized CY form
\begin{equation} 
e^{\i \op{Im} \d p \d q} (\lambda 2 \pi \i \d \theta + \kappa \lambda \d log z )
\end{equation}
Consider the open subset 
\begin{equation} 
Y \subset \C^\times \R^4 
\end{equation}
cut out by the inequalities 
\begin{equation}
\begin{split}
\abs{p} &> 1 \\ 
(1-\eps)  \abs{p}^{-1/ \kappa} < &\abs{z} < (1+\eps) \abs{p}^{1/\kappa}  
\end{split} 
\end{equation}
What this does is:
\begin{enumerate}
\item Remove the region where the momentum $\abs{p}$ is $\le 1$.
\item  For a given value of $p$, we restrict $z$ to be in an annulus inside $\C^\times$. The radius of this annulus increase with $p$ and shrinks to zero as $\abs{p} \to 1$. 
\end{enumerate}

We let 
$$U_+ \subset V_+ \subset \what{X}_+$$ 
be the open subset living over $\Sigma_+ \cap \Sigma_-$, where we also require that $\abs{p_+} > 1$.  Recall that $\Sigma_+ \cap \Sigma_-$ is an annulus, and here the cotangent bundle is trivialized by $\what{\alpha}$. We have coordinates $z_+, p_+, q_+$ on $U_+$, where $\abs{p_+} > 1$ and $1-\eps < \abs{z_+} < 1 + \eps$. The generalized CY form is
\begin{equation} 
e^{\d p_+ \d q_+} (\lambda \d \log p_+ +  \lambda \kappa \d \log z_+) . 
\end{equation} 
Similarly, we have a region 
$$U_- \subset V_- \subset \what{X}_-,$$ 
with coordinates $p_-,q_-,z_-$. 

We need to identify the three patches $(Y, V_+, V_-)$.   We will identify $U_+$ and $U_-$ with open subset of $Y$ and use this to glue $V_+$ and $V_-$ to $Y$.  

We first perform a $B$-field transformation on $U_+$ and $U_-$, bringing the generalized CY forms to
\begin{equation} 
e^{\i \op{Im} \d p_\pm \d q_\pm } ( \pm \lambda \d \log p_{\pm} + \lambda \kappa \d \log z_{\pm} )  = e^{\i \op{Im} \d p_\pm \d q_\pm } (  \lambda 2 \pi \i \d  \theta \pm \lambda \d \log \abs{p_{\pm} } + \lambda \kappa \d \log z_{\pm} )  
\end{equation} 
In contrast, the generalized CY structure on $Y$ is $e^{\i \op{Im} \d p \d q} ( \lambda 2 \pi \i \d \theta + \kappa \lambda \d \log z )$.   We will identify $U_{\pm}$ with patches in $Y$, but this must involve a coordinate transformation which absorbs the factors of $\pm \d \log \abs{p_{\pm}}$ in the generalized CY structure on $U_{\pm}$. 

We define a map
\begin{equation} 
U_+ \to Y
\end{equation}
by
\begin{equation} 
\begin{split} 
p_+ &= p\\ 
q_+ &= q\\
z_+ &= \abs{p}^{-1/\kappa} z 
\end{split} 
\end{equation} 
Since 
$$\d \log p_+ + \lambda \kappa \d \log z_+ = 2 \pi \i \d \theta + \lambda \kappa \d \log z$$
this transformation identifies the generalized CY structure on $U_+$ with the restriction of the one on $Y$. 

On $U_+$, we have $1-\eps < \abs{z_+} < 1 + \eps$.  Therefore, we identify $U_+$ with the region in $Y$ where 
$$(1 - \eps)\abs{p}^{1/\kappa} < \abs{z} < (1+\eps) \abs{p}^{1/\kappa}.$$
This is a region near the boundary of $Y$ where $\abs{z}$ takes its maximal allowed value. 

Similarly, we set
\begin{equation} 
U_- \to Y
\end{equation}
sending
\begin{equation} 
\begin{split} 
p_- &= -\br{p}\\ 
q_- &= \br{q} \\
z_- &= \abs{p}^{1/\kappa} z 
\end{split} 
\end{equation} 
This identifies $U_-$ with an open in $Y$ near the boundary where $\abs{z}$ takes its minimal allowed value.

Then, we use the maps
\begin{equation}
\begin{split}
V_- \hookleftarrow &U_- \hookrightarrow Y \\
V_+ \hookleftarrow &U_+ \hookrightarrow Y
\end{split}
\end{equation}
to glue $V_-$ and $V_+$ to $Y$.  

We see that, as desired, we have built a geometry where for large $\abs{p}$, the tube connecting chiral and anti-chiral patches becomes  very long and decouple into chiral and anti-chiral patches. As  $\abs{p}$ gets smaller, this tube shrinks until we reach a point where the geometry can no longer be continued.  

\section{Future directions}
There are many questions regarding the proposed holographic duality between topological strings on generalized CY manifolds and integrable field theories that we have not addressed in this work. 

The most important is to understand the non-perturbative behaviour of the duality.  The spectrum of the principal chiral model in the planar limit is known \cite{Kazakov:2023imu}, but we do not know how to see this spectrum from our holographic dual.  It would  also be very interesting to connect our appoach with other aspects of the study of the large $N$ PCM in  \cite{Kazakov:2023imu}; in particular, one can hope that the extra continuous direction that found in that paper can be identified with the coordinate $\abs{p}$ in our holographic geometry.  

At the perturbative level, there are many open questions, of which the most basic is to show how to compute quantities other than the $\beta$-function.  To be a fully-fledged holographic duality, we would need to be able to compute things like correlation functions of local operators.  This is difficult, because many operators which are local in the integrable field theory are non-local in $4d$ Chern-Simons; they involve defects which wrap the spectral curve.   Presumably these correspond to branes in the dual geometry.

\section{Acknowledgements}
We would like to thank Davide Gaiotto for collaborating at an early stage of this project and for helpful conversations. KC would also like to thank Sylvain Lacroix and Anders Wallberg for discussions on the RG flow of integrable models.  JL would like to thank Atul Sharma and Tim Adamo for helpful discussions. 

This research was
supported in part by a grant from the Krembil Foundation. KC is supported by
the NSERC Discovery Grant program and by the Perimeter Institute for Theoretical
Physics. Research at Perimeter Institute is supported in part by the Government of
Canada through the Department of Innovation, Science and Economic Development
and by the Province of Ontario through the Ministry of Colleges and Universities. JL is supported by the Royal Society. 

\bibliographystyle{JHEP}
\bibliography{generalizedcg2}
\end{document}